 \newcommand{\bs}{\bigskip}
 \newcommand{\ms}{\medskip}
 \newcommand{\n}{\noindent}
 \newcommand{\s}{\smallskip}
 \newcommand{\hs}[1]{\hspace*{ #1 mm}}
 \newcommand{\vs}[1]{\vspace*{ #1 mm}}
 \newcommand{\setempty}{\varnothing}
 \newcommand{\nat}{\mathbb{N}}
 \newcommand{\integer}{\mathbb{Z}}
 \newcommand{\co}{\mathrm{co}\mbox{-}}
 \newcommand{\etalc}{\textrm{et al.}}
 \newcommand{\AAA}{{\cal A}}
 \newcommand{\BB}{{\cal B}}
 \newcommand{\CC}{{\cal C}}
 \newcommand{\FF}{{\cal F}}
 \newcommand{\DD}{{\cal D}}
 \newcommand{\KK}{{\cal K}}
 \newcommand{\LL}{{\cal L}}
 \newcommand{\NN}{{\cal N}}
 \newcommand{\MM}{{\cal M}}
 \newcommand{\PP}{{\cal P}}
 \newcommand{\dl}{\mathrm{L}}
 \newcommand{\nl}{\mathrm{NL}}
 \newcommand{\ul}{\mathrm{UL}}
 \newcommand{\p}{\mathrm{P}}
 \newcommand{\np}{\mathrm{NP}}
 \newcommand{\poly}{\mathrm{poly}}
\theoremstyle{plain}
 \newtheorem{theorem}{Theorem}[section]
 \newtheorem{lemma}[theorem]{Lemma}
 \newtheorem{proposition}[theorem]{{\bf Proposition}}
 \newtheorem{corollary}[theorem]{Corollary}
 \newenvironment{proofof}[1]{\vspace*{5mm} \par \noindent
         {\bf Proof of #1.\hs{2}}}{\hfill$\Box$ \vspace*{3mm}}
 \newtheorem{yclaim}[theorem]{Claim}
 \newenvironment{proof}{\par \noindent
            {\bf Proof. \hs{2}}}{\hfill$\Box$ \vspace*{3mm}}
\newcommand{\ignore}[1]{}
\newcommand{\track}[2]{[\:\begin{subarray}{c} #1 \\%
      #2 \end{subarray} ]}
 \newcommand{\oned}{1\mathrm{D}}
 \newcommand{\onebq}{1\mathrm{BQ}}
 \newcommand{\onen}{1\mathrm{N}}
 \newcommand{\twod}{2\mathrm{D}}
 \newcommand{\twon}{2\mathrm{N}}
 \newcommand{\para}{\mathrm{para}\mbox{-}}
 \newcommand{\onecequal}{1\mathrm{C_{=}}}
 \newcommand{\oneu}{1\mathrm{U}}
 \newcommand{\onefewu}{1\mathrm{FewU}}
 \newcommand{\onefew}{1\mathrm{Few}}
 \newcommand{\onereachu}{1\mathrm{ReachU}}
 \newcommand{\onereachfew}{1\mathrm{ReachFew}}
 \newcommand{\onereachfewu}{1\mathrm{ReachFewU}}
 \newcommand{\twou}{2\mathrm{U}}
 \newcommand{\twofew}{2\mathrm{Few}}
 \newcommand{\tworeachu}{2\mathrm{ReachU}}
 \newcommand{\twofewu}{2\mathrm{FewU}}
 \newcommand{\tworeachfew}{2\mathrm{ReachFew}}
 \newcommand{\tworeachfewu}{2\mathrm{ReachFewU}}
 \newcommand{\twodpd}{2\mathrm{DPD}}
 \newcommand{\up}{\mathrm{UP}}
 \newcommand{\fewp}{\mathrm{FewP}}
 \newcommand{\fewl}{\mathrm{FewL}}
 \newcommand{\slog}{\mathrm{log}}
 \newcommand{\sexp}{\mathrm{exp}}
 \newcommand{\supexp}{\mathrm{supexp}}
 \newcommand{\PHSP}{\mathrm{PHSP}}
 \newcommand{\dbraleft}{\:[\!\!\![\;}
 \newcommand{\dbraright}{\;]\!\!\!]\,}
\begin{document}

\pagestyle{plain}
\setcounter{page}{1}

\begin{center}
{\Large {\bf Unambiguity and Fewness for Nonuniform Families of Polynomial-Size Nondeterministic Finite  Automata}}\footnote{This work    corrects and also significantly alters the preliminary report that appeared in the Proceedings of the 16th International Conference on Reachability Problems (RP 2022), Kaiserslautern, Germany, October 17--21, 2022, Lecture Notes in Computer Science, vol. 13608, pp. 77--92, Springer Cham, 2022.}
\bs\\
{\sc Tomoyuki Yamakami}\footnote{Present Affiliation: Faculty of Engineering, University of Fukui, 3-9-1 Bunkyo, Fukui 910-8507, Japan}
\bs\\
\end{center}


\begin{abstract}
Nonuniform families of polynomial-size finite automata, which are series of indexed finite automata having polynomially many inner states, are used in the past literature to solve nonuniform families of promise decision problems. Among such nonuniform families of finite automata, we focus our attention, in particular, on the variants of
nondeterministic finite automata, which have at most ``one'' (unambiguous), ``polynomially many'' (few) accepting computation paths, or unambiguous/few computation paths leading to each fixed configuration.  When such machines are limited to make only one-way head moves, we can prove  with no unproven hardness assumptions that some of these variants are different in computational power from each other. As for two-way machines restricted to instances of polynomially-bounded length, families of two-way polynomial-size nondeterministic finite automata are equivalent in power to families of polynomial-size unambiguous finite automata.

\s
\n{\bf Key words}
nonuniform state complexity, finite automata, accepting computation path, unambiguous, fewness
\end{abstract}

\sloppy
\section{Historical Background and Quick Overview}\label{sec:introduction}

As an introduction, we briefly go over its historical background and then take a quick overview of the major contributions of this work.

\subsection{Unambiguity and Fewness in Complexity Theory}\label{sec:L-vs-NL}

The number of accepting computation paths of an underlying nondeterministic machine has been a centerpiece of intensive research over the decades because the acceptance criteria of how the machine ``accepts'' each instance are of great importance for nondeterministic computation and this is indeed a key to the full understandings of nondeterministic computation.

Among various acceptance criteria, the notion of \emph{unambiguity} for an underlying nondeterministic machine, which means at most one accepting computation path, has drawn a great interest.
The study of unambiguous context-free languages, for example, is one of the important subjects in formal language theory because there exist practical parsing algorithms for those languages.

In computational complexity theory, the unambiguity issues have been discussed since Valiant \cite{Val76} introduced the unambiguous polynomial-time complexity class, known as $\up$, in connection to the existence of one-way functions, which play an essential role in modern cryptography. When we allow more than one accepting computation paths but limited to ``few'' (i.e., polynomially many) paths, we then obtain the complexity class called $\fewp$, introduced in \cite{All86,AR88}. It follows that $\p\subseteq \up \subseteq \fewp \subseteq \np$; however, it still remains open whether or not these inclusions are proper.
A series of papers in the past literature further proposed various refinements of unambiguous languages and that has enriched the study of computational complexity of languages.
Unarguably, the study on the behaviors of nondeterministic machines producing a various number of accepting computation paths is a key in computational complexity theory to the anticipated separation between $\p$ and $\np$.

For the model of space-bounded machines, on the contrary, the logarithmic-space (or log-space, for short) analogues of $\up$ and $\fewp$, denoted $\ul$ and $\fewl$,  were discussed in the  1990s \cite{BJLR91} using nondeterministic logarithmic-space computations in comparison with the polynomial-time computations.
The memory space restriction sometimes presents a quite different landscape from the runtime restriction.
For instance, whereas $\np$ and $\co\np$ are not known to coincide, their log-space counterparts, $\nl$ and $\co\nl$, are in fact equal \cite{Imm88,Sze88}.

With the help of Karp-Lipton style advice \cite{KL82}, Reinhardt and Allender \cite{RA00}, for instance,  managed to prove the equivalence between $\nl/\poly$ and $\ul/\poly$ although $\nl$ and $\ul$ themselves are still unknown to coincide, where advice is an  external information source, provided to an underlying Turing machine in parallel to a standard input in order to enhance its computational power. This equivalence is not yet known between $\np$ and $\up$.

Bourke, Tewari, and Vinodchandran \cite{BTV09} and lately Pavan, Tewari, and Vinodchandran \cite{PTV12} introduced another intriguing  refinement of the aforementioned complexity classes associated with unambiguity and fewness notions, and they exhibited among such refined complexity classes the existence of a rich structure situated in between $\dl$ and $\nl$. Their refined classes include: $\mathrm{ReachUL}$, $\mathrm{ReachFewL}$, $\mathrm{ReachLFew}$, and $\mathrm{FewUL}$.
It is also imperative to expand and explore the nature of unambiguity and fewness of accepting computation paths in other computational models.
As such a model, we consider nonuniform finite automata families.

\subsection{Nonuniform Families of Polynomial-Size Finite Automata}\label{sec:finite-automata}

Let us turn our attention to ``finite(-state) automata'', which are  one of the simplest models of computation. Those machines have been intensively studied since its early introduction but, only since the late 1970s, ``nonuniform families'' of those machines have drawn  our attention.
In analogy to Boolean circuit families,  Berman and Lingas \cite{BL77} and Sakoda and Sipser \cite{SS78} studied nonuniform families of finite automata, indexed by natural numbers, of polynomial state complexity. Such a finite automata family is succinctly said to have \emph{polynomial size}.
A series of papers \cite{Gef12,Kap09,Kap12,Kap14,KP15, Yam19a,Yam19d,Yam21,Yam22,Yam23b} have since then made crucial contributions to establishing a coherent theory over \emph{nonuniform polynomial state complexity} of families of ``languages'', more generally, \emph{promise decision problems}.
Nonuniform machine families are generally used as a vehicle to solve those families of promise problems in quite efficient ways.
In such a nonuniform setting, there are two parameters to take into consideration: machine's index $n$ and input length $|x|$.
The notion of nonuniformity,  ranging from advice-enhanced Turing machines to families of Boolean circuits, is as important as that of uniformity in computational complexity theory.

A family of finite automata can be viewed as an analogue of a family of Boolean circuits but it is quite different in the following key point: while each circuit in a circuit family takes only inputs of a fixed length, a finite automaton in an automata family
can take inputs of arbitrary length. This makes it possible for us to discuss subfamilies of an automata family by freely restricting the size of inputs, which is called a ``ceiling'' \cite{Yam22}.
By choosing different ceilings, we can discuss the computational complexity of a wide variety of nonuniform families of finite automata.

The collection of families of promise decision problems that are solvable by nonuniform families of polynomial-size two-way deterministic finite automata is particularly denoted $\twod$, and its nondeterministic variant is denoted $\twon$ \cite{SS78}.
These automata families in the above-mentioned literature have been studied in direct connection to logarithmic-space advised complexity classes, such as $\dl/\poly$ and $\nl/\poly$ \cite{BL77,Kap14}. More precisely, we denote by $\twod/\poly$ and $\twon/\poly$ the restrictions of $\twod$ and $\twon$ on only instances having polynomial ceilings. It was proven that the collapse of $\twon/\poly$ to $\twod/\poly$ occurs exactly when $\nl/\poly$ equals $\dl/\poly$.
A similar connection was observed in \cite{Yam19a} between families of restricted nondeterministic finite automata and the so-called \emph{linear space hypothesis}, proposed in \cite{Yam17a}.

When underlying finite automata are limited to make only one-way head moves, complexity classes of promise problem families present us a quite different landscape. For the one-way analogues of $\twod$ and $\twon$, denoted respectively by $\oned$ and $\onen$, it was proven that $\oned\neq\onen\neq\co\onen$ \cite{SS78}.

In fact, the nonuniform nature of automata families provides enormous flexibility to solving families of promise problems. Manifestation of this fact has been demonstrated in the field of automata theory for various machine types, including deterministic, nondeterministic, probabilistic, alternating, quantum automata and also pushdown automata in the literature.

It has been expected to further expand the scope of the study on nonuniform polynomial state complexity theory to other types of polynomial-size finite automata families.

\subsection{New Challenges and Main Contributions}\label{sec:main-contribution}

The \emph{theory of nonuniform polynomial state complexity}  has not been well developed in depth and in scope. It is therefore imperative to replenish this theory by cultivating and examining structural properties of underlying finite automata and their nonuniform families.
In this work, we intend to explore such structural properties of
We modify the existing models of nonuniform families of nondeterministic finite automata by requiring various restrictions on their computation paths. 
For this purpose, we wish to adapt various notions of \cite{PTV12} associated with unambiguity and fewness notions to fit into our setting of nonuniform families of finite automata. Our intension here is to replenish the theory by making new challenges in the topics of unambiguity and fewness for polynomial-size families of finite automata,
because unambiguity and fewness are in fact important notions in automata theory.
Therefore, we wish to explore these notions founded on nonuniform  families of nondeterministic finite automata. In particular, we attempt to follow the aforementioned work of Pavan \etalc~\cite{PTV12} in our setting of nonuniform state complexity theory. For this purpose, we wish to identify state complexity classes as done in the logarithmic-space setting of \cite{PTV12}.
We intend to study the computational complexity of families of polynomial-size nondeterministic finite automata that satisfy various conditions on accepting computation concerning unambiguity and fewness.

In a spirit similar to \cite{PTV12}, we will introduce six nonuniform polynomial state complexity classes between $\oned$ and $\onen$  in Section \ref{sec:one-way-head} and between $\twod$ and $\twon$ in Section \ref{sec:two-way-head}.
We will then demonstrate various relationships among those nonuniform  complexity classes and their complement classes.
In the case of one-way models, in particular, we will prove separations among some of these complexity classes \emph{with no unproven hardness assumptions}.
In the case of two-way models, on the contrary, we will demonstrate the collapses of complexity classes by exploiting a close relation between nonuniform families of polynomial-size finite automata having polynomial ceilings and logarithmic-space advised computation.
It was remarked in \cite{Yam23b} that there is a close connection between nonuniform families of polynomial-size finite automata and one-tape linear-time Turing machines, in particular, between promise problem families in $\onecequal$ and languages in $\mathrm{1\mbox{-}C}_{=}\mathrm{LIN/lin}$. We will adapt such a close connection and exploit it to prove the  desired separations of this work.
Our result is summarized and illustrated in Figure \ref{fig:class-hierarchy}. The detailed explanation of the complexity classes in the figure will be given in Sections \ref{sec:one-way-head} and \ref{sec:two-way-head}.


\begin{figure}[t]
\centering
\includegraphics*[height=5.4cm]{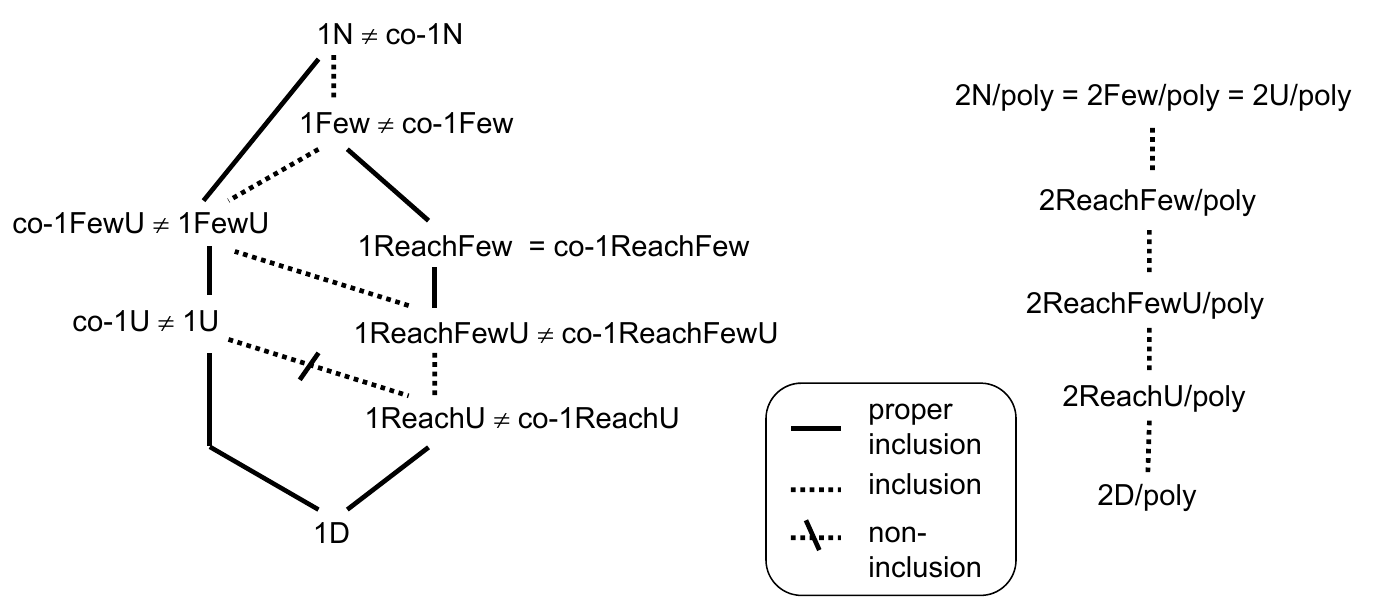}
\caption{Inclusion and collapse relations among families of promise problems discussed in this work except for $\onen\neq \co\onen$ \cite{SS78} and $\oned\neq \oneu\neq\co\oneu$ \cite{Yam23b}.}\label{fig:class-hierarchy}
\end{figure}


\section{Basic Notions and Notation}\label{sec:preparation}

We will explain the foundation of this work: finite automata models of polynomial state complexity.

\subsection{Numbers, Promise Problems, and Kolmogorov Complexity}\label{sec:number-promise}

Let $\nat$ (resp., $\nat^{+}$) denote the set of all nonnegative (resp., positive) integers. Notice that $\nat=\nat^{+}\cup\{0\}$. Given two integers $m$ and $n$ with $m\leq n$, $[m,n]_{\integer}$ denotes the \emph{integer interval} $\{m,m+1,m+2,\ldots,n\}$. When $n\in\nat^{+}$, $[1,n]_{\integer}$ is abbreviated as $[n]$ for simplicity.
Given a set $A$, the notation $\|A\|$ denotes the \emph{cardinality} of $A$ and $\PP(A)$ denotes the \emph{power set} of $A$.

A \emph{polynomial} in this paper is assumed to have nonnegative integer coefficients. An \emph{exponential} refers to a function of the form $2^{p(n)}$ for an appropriate polynomial $p$. Any \emph{logarithm} is assumed to take the base 2 and any \emph{logarithmic function} $f$ has the form $a\log x+b$ for certain constants $a,b\geq0$.
For convenience, we say that a function $f$ on $\nat$ (i.e., from $\nat$ to $\nat$) \emph{super-exponential} if, for any polynomial $p$, $f(n)>2^{p(n)}$ holds for all but finitely many numbers $n\in\nat$.
In contrast, a function $f$ is \emph{sub-exponential} if there exists a constant $\alpha\in[0,1)$ satisfying $f(n)\leq 2^{n^{\alpha}}$ for all but finitely many numbers $n\in\nat$.


An \emph{alphabet} is a finite nonempty set of ``symbols'' or ``letters''. Given such an alphabet $\Sigma$, a \emph{string} over $\Sigma$ refers to a finite sequence of symbols in $\Sigma$. The length of this sequence is called the \emph{length} of the string. In particular, the \emph{empty string} is a unique string of length $0$ and is denoted by $\varepsilon$.

We briefly explain (nonuniform) families of promise (decision) problems as described in \cite{Yam19a,Yam19d,Yam21,Yam22,Yam23b}.  Given an alphabet $\Sigma$, a \emph{promise (decision) problem over $\Sigma$} is a pair $(A,B)$ of sets satisfying $A\cup B \subseteq \Sigma^*$ and $A\cap B=\setempty$. Instances in $A$ are called \emph{positive} (or YES-instances) and instances in $B$ are  \emph{negative} (or NO-instances).
To clarify the use of positive/negative instances, we often use  the superscripts of $(+)$ and $(-)$ to express a promise problem as  $(L^{(+)},L^{(-)})$. We further consider a family $\LL$ of promise problems $(L_n^{(+)},L_n^{(-)})$ for all indices $n\in\nat$; however, we focus only on  families of promise problems over the ``same'' alphabet\footnote{In \cite{Kap09,Kap12,Kap14,KP15}, this extra condition is not required.}
$\Sigma$.
We consider a family $\LL=\{(L_n^{(+)},L_n^{(-)})\}_{n\in\nat}$ of those promise problems.
From this requirement, we often denote
the union  $L_n^{(+)}\cup L_n^{(-)}$ by $\Sigma_n$ whereas the notation $\Sigma^n$ is reserved for the set of strings of length $n$.
Any string in $\Sigma_n$ indicates  a \emph{valid} (or a \emph{promised}) instance over $\Sigma$.
The \emph{complement} of $\LL$, denoted $\co\LL$, consists
of $(L_n^{(-)},L_n^{(+)})$ for all indices $n\in\nat$.


Let $U$ denote any universal (deterministic) Turing machine taking binary input strings and eventually produces binary output strings. Given any binary strings $x$ and $y$, the \emph{conditional Kolmogorov complexity of $x$ conditional to $y$}, denoted $C(x|y)$, is the length of the shortest binary string $p$ such that $U$ on inputs $(p,y)$ produces $x$ on its output tape. We use the notation $\varepsilon$ to denote the \emph{empty string}. If $y$ is $\varepsilon$, we write $C(x)$ instead of $C(x|\varepsilon)$. This is referred to as the \emph{(unconditional) Kolmogorov complexity of $x$}. For more detail, the reader refers to a textbook, e.g., \cite{LV08}.

\subsection{Families of Finite Automata}\label{sec:FA-CS}

Throughout this work, we mostly deal with a computational model of nondeterministic finite automata. For convenience, we freely abbreviate a \emph{one-way nondeterministic finite(-state) automaton} as a 1nfa and
a \emph{two-way nondeterministic finite(-state) automaton} as a 2nfa. Similarly, we call deterministic variants of 1nfa and 2nfa by 1dfa and 2dfa, respectively.

Our main target is ``families'' of finite automata of the same machine type over the same alphabet $\Sigma$. In particular, we use families of 2nfa's (as well as  2dfa's, 1nfa's, and 1dfa's), indexed by natural numbers, as a base model to solve families of promise problems.

Each 2nfa $M_n$ in such a family $\MM=\{M_n\}_{n\in\nat}$ is equipped with a semi-infinite input tape, whose tape cells are indexed by natural numbers starting with $0$. Formally, $M_n$ is expressed as
a septuple $(Q_n,\Sigma,{\{\rhd,\lhd\}},\delta_n, q_{0,n},Q_{acc,n},Q_{rej,n})$ with two designated endmarkers $\rhd$ and $\lhd$ and two sets $Q_{acc,n}$ and $Q_{rej,n}$ of accepting (inner) states and rejecting (inner) states satisfying both $Q_{acc,n}\cup Q_{rej,n}\subseteq Q_n$ and $Q_{acc,n}\cap Q_{rej,n}=\setempty$. Any inner state in $Q_{acc,n}\cup Q_{rej,n}$ are simply called a \emph{halting (inner) state}. The transition function $\delta$ is defined only on non-halting states.
with a finite set $Q$ of inner states, an alphabet $\Sigma$ with two designated  endmarkers $\rhd$ and $\lhd$, a transition function $\delta:(Q-Q_{halt})\times \check{\Sigma} \to \PP(Q\times D)$, the initial inner state $q_0$ ($\in Q$), and two sets $Q_{acc}$ and $Q_{rej}$ of accepting states and rejecting states with $Q_{acc}\cup Q_{rej}\subseteq Q$ and $Q_{acc}\cap Q_{rej}=\setempty$, where $Q_{halt} = Q_{acc}\cup Q_{rej}$, $\check{\Sigma} = \Sigma\cup\{{\rhd, \lhd}\}$, and $D=\{-1,0,+1\}$. Any transition $(p,d)\in\delta(q,\sigma)$ indicates that, in a single step, $M$ scans $\sigma$, changes its inner state from $q$ to $p$, and move its tape head in direction $d$ (i.e., $-1$ means ``to the left'', $+1$ means ``to the right'', and $0$ means ``staying still'').
The \emph{state complexity} of $M_n$ refers to $|Q_n|$, which is the total number of $M_n$'s inner states, and it is briefly denoted $sc(M_n)$.
A family $\MM$ is said to have \emph{polynomial size} if there exists a polynomial $p$ such that $sc(M_n)\leq p(n)$ holds for all $n\in\nat$.
For more information on the underlying setting of automata families, the reader refers to \cite{Yam22}.

In contrast, one-way models of 2nfa's and 2dfa's, which are called 1nfa's and 1dfa's, take transition function of the form $(Q-Q_{halt})\times\check{\Sigma}\to\PP(Q)$ and $(Q-Q_{halt})\times\check{\Sigma}\to Q$, respectively.
Following \cite{Yam22}, a one-way finite automaton is always assumed to  make no stationary move (or $\varepsilon$-move); that is, its tape head must move only in one direction, to the right, whenever it reads an input symbol (including the  endmarkers). This condition is sometimes called \emph{real time} in the literature.
If $M$ is further allowed to make any $\varepsilon$-move, then we use the specific term ``1.5-way''. See Section \ref{sec:discussion}
for a further discussion.


It is imperative to clarify a few important terminologies associated with ``computation'' of finite automata $M_n$ on input $x$.
We fix an input arbitrarily and consider all surface configurations of each (either one-way or two-way) machine $M_n$ on the input $x$.
A \emph{surface configuration} of $M_n$ on $x$ is of the form $(q,i)$ in $Q\times [0,|x|+1]_{\integer}$ excluding $x$, which indicates that $M_n$ is in inner state $q$ and its tape head is located at cell $i$, assuming that tape cells are indexed by nonnegative integers and that $\rhd$ and $\lhd$ are placed respectively at cell $0$ and cell $|x|+1$.
In the rest of this work, since we deal only with surface configurations, we drop the word ``surface'' altogether from ``surface configurations''.
With the use of configurations, we consider a \emph{computation graph} of $M_n$ on $x$, whose vertices are configurations of $M_n$ on $x$ and a transition from any non-halting configuration to another configuration forms a directed edge. We write $(p,i)\vdash (q,i+1)$ to express an edge of this graph.
An \emph{accepting configuration} is a configuration with an accepting (inner) state. Similarly, a \emph{rejecting configuration} is defined with the use of a rejecting (inner) state.
A \emph{computation path} is a path in a computation graph from the root (i.e., the initial configuration) to a certain leaf (i.e., a halting configuration) if any.
A computation path is \emph{accepting} (resp., \emph{rejecting}) if it ends in a configuration with an accepting (resp., a rejecting) configuration.

We say that $M_n$ \emph{halts} on $x$ if there exists a computation path of finite length in the computation graph of $M_n$ on $x$. In this work, we are interested only in finite automata that always halt on all valid instances. For such a halting 2nfa $M_n$, we say that $M_n$ \emph{accepts} $x$ if $M_n$ starts with $\rhd x\lhd$ and produces at least one accepting computation path, and $M$ is said to \emph{reject} $x$ otherwise.


Let us consider a family $\LL=\{(L_n^{(+)},L_n^{(-)})\}_{n\in\nat}$ of promise problems over a fixed alphabet $\Sigma$.
The set $\Sigma_n$ is a set of \emph{valid} instances in $\Sigma^*$.
Recall that $\Sigma_n$ expresses the set of all valid (or promised) instances over $\Sigma$ for $(L_n^{(+)},L_n^{(-)})$. For any other ``invalid''  instance $x$, when a  machine, say, $M_n$ takes $x$ as an input, we do not require any condition on the behavior of the machine. A family $\MM=\{M_n\}_{n\in\nat}$ of machines over $\Sigma$ is said to \emph{solve} (or \emph{recognize}) $\LL$ if, for any $n\in\nat$, (1) for any $x\in L_n^{(+)}$, $M_n$ accepts $x$ and (2) for any $x\in L_n^{(-)}$, $M_n$ rejects $x$. There may be a case where $M$ does not even halt on invalid instances.

The notation $\twon$ (resp., $\onen$) denotes the collection of all families of promise problems solved by appropriate families of polynomial-size 2nfa's (resp., polynomial-size 1nfa's). Similarly, the notation $\twod$ (resp., $\oned$) is defined using 2dfa's (resp., 1dfa's).


Given an input $xy$, we say that a one-way tape head \emph{crosses} a boundary between $x$ and $y$ in inner state $q$ if $(q_0,0)\vdash^* (p,|\!\rhd\! x|)\vdash (q,|\!\rhd\! x|+1)$ for a certain $p\in Q_n$. A \emph{boundary} is a border between two consecutive tape cells on a tape. The boundary between cells $t-1$ and $t$ is called boundary $t$. A move of a tape head is split into two actions: (i) $M$ reads a tape symbol, (ii) it changes its inner state, and  (iii) it moves its tape head to an adjacent cell. When the tape head move to the next cell, it must cross a boundary with an updated inner state.

\emph{Boundary $0$} is the left border of cell $0$. For any $t\in\nat^{+}$, \emph{boundary $t$} refers to the border between the two cells indexed $t-1$ and $t$. Assume that $M$ returns its tape head back to the start cell and halts. A \emph{crossing sequence at boundary $t$} is a series $(q_{i_1},q_{i_2},\ldots,q_{i_{2k}})$ of inner states such that (i) $M$ first crosses the boundary $t$ from left to right with inner state $q_{i_1}$, (ii) at the $2j$th ($j\in[k]$) turn, $M$ crosses boundary $t$ from right to left with inner state $q_{i_{2j}}$, and (iii) at the $(2j+1)$th turn, $M$ crosses boundary $t$ from left to right with inner state $q_{i_{2j+1}}$. Note that every computation path produces a unique series $(\gamma_0,\gamma_1,\gamma_2,\ldots)$ of crossing sequences $\gamma_t$ at each boundary $t$.


We show a technical but useful lemma for 2nda's by an argument similar to \cite{Kap14}. We assume that all machines in this work make their tape head return to the start cell and halt.

Given a 2nfa $M$ and an input $x$, we define a specific computation path $\gamma_{M}(x)$ as follows. Let $\gamma_{M}(x)$ be the shortest accepting computation path of $M$ on input $x$ if $M$ accepts $x$, and let $\gamma_{M}(x)$ be the shortest rejecting computation path of $M$ on $x$. For convenience, this computation path $\gamma_M(x)$ is called a \emph{critical path} of $M$ on $x$.

In general, when  $\gamma$ is a crossing sequence of the form $(q_{i_1},q_{i_2},\ldots,q_{i_m})$, inner state $q_{i_j}$ ($j\in[m]$)  is said to \emph{appear at location $j$ in $\gamma$}. For an inner state $q$, we say that $q$ \emph{behaves identically at two locations in $\gamma$} if $q$ appears at two different locations, say, $l_1$ and $l_2$ in $\gamma$, and $l_1$ and $l_2$ are both either even or odd. An example of such a case is $\gamma=(q_1,q_3,q_2,q_4,q_1,q_5)$, where $q_1$ behaves identically at locations $1$ and $5$ in $\gamma$.

\begin{lemma}\label{upper-bound-length}
Let $M$ be any 2nfa. Let $x$ be any string. For any boundary $t$, the crossing sequence $\gamma_M(x)$ at boundary $t$ is upper-bounded by $2sc(M)$.
\end{lemma}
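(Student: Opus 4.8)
The plan is to run the classical crossing-sequence cut-and-paste (pumping) argument specialized to two-way automata, exactly the kind of surgery for which the paper has pre-installed the phrase ``behaves identically at two locations''. Fix $M$, $x$, and a boundary $t$, and let $\gamma=(q_{i_1},q_{i_2},\ldots,q_{i_m})$ be the crossing sequence at boundary $t$ induced by the critical path $\gamma_M(x)$. Recall that an odd location of $\gamma$ records a left-to-right crossing, at the end of which the head sits at cell $t$, while an even location records a right-to-left crossing, at the end of which the head sits at cell $t-1$. The point I would exploit is that the configuration reached immediately after a crossing at boundary $t$ is determined \emph{solely} by the crossing state together with its parity: every left-to-right crossing in inner state $q$ yields the configuration $(q,t)$, and every right-to-left crossing in inner state $q$ yields $(q,t-1)$.

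First I would apply the pigeonhole principle to the length $m$. Among the locations $1,2,\ldots,m$ there are $\lceil m/2\rceil$ odd ones and $\lfloor m/2\rfloor$ even ones, and each location carries one of the $sc(M)$ inner states. If $m\geq 2sc(M)+1$, then $\lceil m/2\rceil\geq sc(M)+1$, so two odd locations must carry the same inner state; that is, some inner state $q$ behaves identically at two locations $l_1<l_2$ of $\gamma$. (It always suffices to find the repetition among the odd locations; a repetition discovered among the even locations would be handled by the symmetric surgery below.) The whole argument then reduces to showing that such a repetition is impossible for the \emph{critical} path, which forces $m\leq 2sc(M)$.

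Next I would carry out the surgery. I would slice $\gamma_M(x)$ at boundary $t$ into maximal left-segments $L_0,L_1,\ldots$ (head in cells $\leq t-1$) and right-segments $R_0,R_1,\ldots$ (head in cells $\geq t$), so that the path reads $L_0,R_0,L_1,R_1,\ldots$, where the crossing at an odd location $2a+1$ separates $L_a$ from $R_a$ and the crossing at an even location $2a$ separates $R_{a-1}$ from $L_a$. Writing the odd-location repetition as $l_1=2a+1<2b+1=l_2$, both crossings end in the \emph{same} configuration $(q,t)$; I would therefore excise the block $R_a,L_{a+1},\ldots,R_{b-1},L_b$ and splice $R_b$ directly onto the end of $L_a$. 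Since $R_b$ begins at $(q,t)$ it is a legal continuation, and since $R_b$ ends in the configuration from which $L_{b+1}$ originally departed, the tail $L_{b+1},R_{b+1},\ldots$ attaches unchanged. The even case is identical after interchanging the two sides and replacing $(q,t)$ by $(q,t-1)$. Because $a<b$ a nonempty block is removed, so the new path is strictly shorter; and because only an interior block is deleted, the final configuration — hence the accept/reject verdict — is preserved, contradicting the minimality that defines $\gamma_M(x)$.

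The delicate point, and the step I would write out most carefully, is the legitimacy of the spliced path as a genuine computation of $M$. This rests on the locality of the segments: a left-segment depends only on its entry configuration and the fixed tape contents in cells $0,\ldots,t-1$, while a right-segment depends only on its entry configuration and the contents of cells $t,\ldots,|x|+1$. Thus each retained segment can be replayed verbatim once its entry configuration is matched, which is precisely what the equality of post-crossing configurations guarantees at every glue point. I would also note in passing that stationary ($\varepsilon$-)moves are harmless, since they never cross $t$ and are absorbed into the adjacent segment, and that the extreme boundaries produce empty crossing sequences and so lie trivially within the bound.
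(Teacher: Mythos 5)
Your proof is correct and takes essentially the same route as the paper's: a pigeonhole argument producing an inner state that behaves identically at two same-parity locations of the crossing sequence, followed by excising the sub-path between the two repeated crossings to obtain a strictly shorter computation with the same outcome, contradicting the minimality defining the critical path $\gamma_M(x)$. The paper compresses this surgery into a single sentence; your write-up simply supplies the parity-to-configuration correspondence and the segment-splicing details that the paper leaves implicit.
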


\begin{proof}
Let $M=(Q,\Sigma,\{\rhd,\lhd\}, \delta,q_0, Q_{acc}, Q_{rej})$ and let $x$ be any input in $\Sigma^*$. Consider the critical path $\gamma_M(x)$ of $M$ on the input $x$.

Consider a boundary $t$ and a crossing sequence $\tau_t$ at the boundary $t$ on the computation path $\gamma_{M}(x)$. This crossing sequence $\tau_t$ has length at most $2|Q|$ because, otherwise, there exists an inner state $q\in Q$ that behaves identically at two different locations, say, $l_1$ and $l_2$ in $\gamma_{M}(x)$, and thus we can remove all entries in $\tau_t$ between $l_1$ and $l_2$ and form a shorter computation path whose outcome is the same as $\gamma_{M}(x)$. This is a contradiction against the shortest condition of $\gamma_M(x)$.
\end{proof}

It is useful to recall the \emph{branching normal form} from \cite{Yam23b}. A  1nfa $N$ is in the branching normal form if (1) $N$ makes exactly $c$ nondeterministic choices at every step and (2) $N$ produces exactly $c^{|\triangleright x\triangleleft|}$ computation
paths on all inputs $x$ together with two endmarkers, where $c$ is an appropriately chosen constant in $\nat^{+}$.

\begin{lemma}[branching normal form \cite{Yam23b}]\label{branching-normal-form}
Let $M$ be any 1nfa solving a promise problem $(L^{(+)},L^{(-)})$. There exists another 1nfa $N$ such that $N$ is in a branching normal form and $sc(N)\leq 3 sc(M) +1$.
\end{lemma}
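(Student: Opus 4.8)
The plan is to transform $M$ into an equivalent one-way machine whose computation graph on every input $\rhd x\lhd$ is a \emph{complete $c$-ary tree of depth $|\rhd x\lhd|$}, for a suitable constant $c$. I would set $c$ to be the maximum number of nondeterministic choices $M$ makes in a single step, i.e. $c=\max_{q,\sigma}\|\delta_M(q,\sigma)\|$, so that $1\le c\le sc(M)$. The two defining conditions of the branching normal form then reduce to two local requirements: (i) from \emph{every} reachable configuration the machine branches into \emph{exactly} $c$ distinct successor states, and (ii) no computation path halts before the right endmarker $\lhd$ has been read. If both hold, then reading the $|\rhd x\lhd|$ symbols produces exactly $c^{|\rhd x\lhd|}$ leaves, which is condition (2), while (i) is condition (1).

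To secure (i), first I would pad every transition up to out-degree exactly $c$. Since $\delta_M(q,\sigma)$ already has at most $c$ distinct targets, it suffices to adjoin the missing $c-\|\delta_M(q,\sigma)\|$ of them; to do so without creating spurious accepting paths I introduce a pool $D$ of $c$ fresh \emph{rejecting trap states} and fill each deficient transition with states from $D$. The trap states are themselves made to sustain $c$-fold branching by letting each of them move, on every non-endmarker symbol, to all $c$ states of $D$; thus any path that once enters $D$ stays inside $D$ and can never accept. To secure (ii), I would defer $M$'s halting decisions to $\lhd$: a state that in $M$ would accept (resp. reject) is kept ``scanning'' rightward, so that the verdict is only declared upon reading $\lhd$. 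On reading $\lhd$, every scanning state fans out into exactly $c$ \emph{halting} states drawn from a fixed set consisting of one accepting final state and $c$ rejecting final states; the accepting state is included in this fan-out \emph{iff} the corresponding computation of $M$ ends by accepting, while trap states fan out only into rejecting final states.

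Correctness and the size bound then follow by bookkeeping. The machine $N$ remains real-time one-way, never halts before depth $m=|\rhd x\lhd|$, and every non-final state branches exactly $c$-fold, so its computation tree has exactly $c^{m}$ leaves, giving the branching normal form. Acceptance is preserved on valid instances because padded and trapped branches reach only rejecting final states, whereas at least one accepting computation path of $M$ is faithfully continued and reaches the unique accepting final state precisely when $M$ accepts. For the state count, $N$ uses at most $sc(M)$ simulation states (the scanning states, which absorb the merged deferred accept/reject states), the $c$ trap states of $D$, and the $c+1$ final states, for a total of at most $sc(M)+2c+1\le 3\,sc(M)+1$, the last inequality using $c\le sc(M)$; the bound is attained in the worst case $c=sc(M)$.

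The main obstacle is making the branching \emph{exactly} $c$, not merely at most $c$, everywhere while staying within the tight budget $3\,sc(M)+1$. Because transitions are sets rather than multisets, exact $c$-fold branching forces $c$ \emph{distinct} successors at every step, and this is what dictates both a trap pool of size $c$ (so that $D$ can keep branching $c$-fold among itself) and a fan-out into $c$ distinct halting states at $\lhd$. Simultaneously one must forbid early halting yet still record the accept/reject verdict using a single extra accepting state; the delicate part of the argument is verifying that these three demands—exact uniform branching, deferred halting, and faithful acceptance—can be met together without exceeding $3\,sc(M)+1$ states and without altering which inputs are accepted.
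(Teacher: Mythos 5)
The paper contains no proof of this lemma to compare against---it is imported from \cite{Yam23b}---so your argument has to be judged on its own merits, and on those merits it is essentially correct. Padding every transition of $M$ up to out-degree exactly $c$ from a pool $D$ of $c$ fresh, mutually branching, \emph{non-halting} trap states, deferring every accept/reject verdict of $M$ until the right endmarker, and fanning out at $\lhd$ into $c$ distinct halting states drawn from one accepting final state $f_{acc}$ plus $c$ rejecting final states does yield a real-time 1nfa with exactly $c$ distinct successors at every step and no premature halting, hence exactly $c^{|\rhd x\lhd|}$ computation paths; it accepts exactly the strings $M$ accepts, since trap and padding branches never reach $f_{acc}$ while every accepting path of $M$ survives. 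The count $sc(M)+2c+1\leq 3\,sc(M)+1$ is valid because $c\leq sc(M)$. Two points need tightening: set $c=\max\{1,\max_{q,\sigma}\|\delta_M(q,\sigma)\|\}$ so that $c\geq 1$ even for a degenerate $M$ with empty transition sets, and spell out the fan-out at $\lhd$ for a simulation state $q$ that is non-halting in $M$ when $\delta_M(q,\lhd)$ contains both accepting and rejecting states of $M$ (include $f_{acc}$ iff at least one element of $\delta_M(q,\lhd)$ is accepting); your phrase ``the corresponding computation of $M$ ends by accepting'' glosses over this branch-dependent case.

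One substantive difference from what the cited construction must deliver deserves mention. By funneling all acceptance into the single state $f_{acc}$, your $N$ preserves neither the number of accepting computation paths nor the accepting states in which they end: accepting paths of $M$ that differ only in their final state get merged, and all accepting paths of $N$ terminate in the same configuration. This is harmless for the lemma as stated, and it even suits the application in Theorem \ref{u-fewu}, where unambiguity can only be improved by merging and a single accepting state is exactly what that proof subsequently assumes. However, the paper also invokes the lemma in the proof of Theorem \ref{onefewu-vs-onen} for weakly-unambiguous, accept-few machines, and there it is essential that weak unambiguity survive the conversion, so that $p_{acc,n}(x:q)=1/c^{|x|+2}$ holds for some accepting state $q$; your single-$f_{acc}$ design violates weak unambiguity as soon as $M$ has two accepting paths, since both then lead to $f_{acc}$. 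A construction serving that stronger purpose keeps one accepting final state per accepting state of $M$ (and distinguishes deferred verdicts per state), which the budget $3\,sc(M)+1$ can accommodate; as written, your proof establishes the stated lemma but not the preservation property the paper silently relies on.
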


\subsection{Unambiguous and Few Computation Paths}\label{sec:various-types}

Unambiguous and few computation paths of nondeterministic machines have played an important role in computational complexity theory. Following early works of \cite{AR88,BJLR91,PTV12,Val76}, we introduce key notions related to ``unambiguity'' and ``fewness'' into the theory of nonuniform polynomial state complexity. Let us recall from Section \ref{sec:FA-CS} that any 1nfa (including 1dfa's) is allowed to have multiple accepting states and multiple rejecting states. Let $\MM=\{M_n\}_{n\in\nat}$ denote any family of (one-way or two-way) nondeterministic finite automata.

\renewcommand{\labelitemi}{$\circ$}
\begin{itemize}\vs{-1}
  \setlength{\topsep}{-2mm}%
  \setlength{\itemsep}{1mm}%
  \setlength{\parskip}{0cm}%

\item Firstly, $\MM$ is called \emph{unambiguous} if, for each index $n\in\nat$ and for any input $x\in\Sigma_n$, there is at most one accepting computation path of $M_n$ on $x$.

\item Similarly, $\MM$ is \emph{weakly-unambiguous} if, for any index $n\in\nat$, for any input $x\in\Sigma_n$, and for any accepting configuration $conf$, there exists at most one computation path from the initial configuration of $M_n$ on $x$ to $conf$.

\item Moreover, $\MM$ is \emph{reach-unambiguous} if, for any index $n\in\nat$, for any input $x\in\Sigma_n$, and for any configuration $conf$, there is at most one computation path from the initial configuration of $M_n$ on $x$ to $conf$.

\item In contrast, $\MM$ is \emph{accept-few}\footnote{This notion is called just ``few'' in \cite{PTV12}. For clarity reason, here we use a slightly different term.} if there exist a polynomial $p$ such that, for any index $n\in\nat$ and for any input $x\in\Sigma_n$, there are at most $p(n,|x|)$ accepting computation paths of $M_n$ on $x$.

\item Lastly, $\MM$ is \emph{reach-few} if there exist a   polynomial $p$ such that, for any index $n\in\nat$, for any input $x\in\Sigma_n$, and for any configuration $conf$ of $M_n$ on $x$, there are at most $p(n,|x|)$ computation paths of $M_n$ on $x$ from the initial configuration to $conf$.
\end{itemize}

The prefix ``reach'' in ``reach-unambiguous'' and ``reach-few'' comes from the ``reachability'' of each configuration from the initial configuration.

It is important to remark that all the above five conditions are applied only to ``valid'' inputs and there is no
requirement for ``invalid'' inputs.

\section{Complexity Classes Defined by One-Way Head Moves}\label{sec:main-result}

We begin with discussing the computational complexity of families of promise problems solved by nonuniform families of one-way nondeterministic finite automata of various types introduced in Section \ref{sec:various-types}.

We stress that, in our model, along any computation path, a tape head of a one-way finite automaton always moves to the right (with no $\varepsilon$-move) until either it reads the right endmarker or it enters a halting inner state before the right endmarker.

In what follows, we will introduce six nonuniform complexity classes of families of promise problems and
prove the class separations among them, as depicted in Figure \ref{fig:class-hierarchy}.

\subsection{Definitions of New Complexity Classes}\label{sec:one-way-head}

It is important to remark that, since tape heads of 1nfa's always move to the right without making any $\varepsilon$-move, we can modify the 1nfa's so that, whenever they fail to enter accepting states until reading the right endmarker $\lhd$, they must enter rejecting states at the time of reading $\lhd$. Moreover, when the 1nfa's enter accepting states even before reading $\lhd$, it is also possible to postpone the timing of acceptance until reading $\lhd$.
The 1nfa's obtained by these modifications halt precisely at reading $\lhd$.

In a way similar to \cite{AR88,BJLR91,PTV12,Val76}, as various subfamilies of $\onen$, we introduce six nonuniform state complexity classes associated with unambiguity and fewness of accepting computation paths given in Section \ref{sec:various-types}.

\renewcommand{\labelitemi}{$\circ$}
\begin{itemize}\vs{-1}
  \setlength{\topsep}{-2mm}%
  \setlength{\itemsep}{1mm}%
  \setlength{\parskip}{0cm}%

\item $\oneu$ consists of all families of promise problems, each  family of which is solved  by an appropriate family of 1-way \emph{unambiguous} finite automata having polynomially many inner states.

\item $\onefew$ consists of all families of promise problems, each  family of which is solved  by an appropriate family of 1-way \emph{accept-few}  finite automata having polynomially many inner states.

\item $\onefewu$ is defined from $\onefew$, whose underlying finite automata are additionally \emph{weakly-unambiguous}.

\item $\onereachu$ is a unique subclass of $\onefew$, whose underlying finite automata are additionally \emph{reach-unambiguous}.

\item $\onereachfew$ is a unique subclass of $\onefew$, whose underlying finite automata are additionally \emph{reach-few}.

\item $\onereachfewu$ is a unique subclass of $\onereachfew$, whose underlying finite automata are additionally \emph{weakly-unambiguous}.
\end{itemize}\vs{-1}

Among the above-mentioned complexity classes, the following inclusion relationships hold.
Figure \ref{fig:class-hierarchy} illustrates these relationships.

\begin{lemma}\label{general-inclusions}
(1) $\oned \subseteq \onereachu  \subseteq \onereachfewu \subseteq \onereachfew \subseteq \onefew$.
(2) $\oned\subseteq \oneu\subseteq \onefewu \subseteq \onefew \subseteq \onen$.
\end{lemma}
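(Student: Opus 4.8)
The plan is to prove every inclusion by one uniform strategy: if $\LL$ belongs to the smaller class, witnessed by a family $\MM=\{M_n\}_{n\in\nat}$ of one-way finite automata enjoying the appropriate path restriction, then the \emph{same} family $\MM$ already witnesses membership in the larger class. No new automata need to be constructed, so the entire argument reduces to verifying a short list of implications among the five path conditions of Section~\ref{sec:various-types}. Throughout I would assume, as permitted by the normalization remark preceding the class definitions, that each $M_n$ halts precisely upon reading $\lhd$; consequently every accepting configuration has the form $(q,|x|+1)$ with $q\in Q_{acc,n}$, and there are at most $sc(M_n)$ of them.

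First I would record the purely formal implications. A deterministic machine has a computation graph that is a single simple path, so a 1dfa family is simultaneously reach-unambiguous, unambiguous, and accept-few; this yields both $\oned\subseteq\onereachu$ and $\oned\subseteq\oneu$. Next, reach-unambiguity (at most one path to \emph{every} configuration) immediately gives reach-fewness (one is polynomially many) and weak-unambiguity (accepting configurations are a special case), which delivers $\onereachu\subseteq\onereachfewu$. Symmetrically, unambiguity (at most one accepting path in total) forces weak-unambiguity, since any path to an accepting configuration is itself an accepting computation path and there is at most one of those; together with accept-fewness this gives $\oneu\subseteq\onefewu$. The remaining links $\onereachfewu\subseteq\onereachfew$, $\onereachfew\subseteq\onefew$, $\onefewu\subseteq\onefew$, and $\onefew\subseteq\onen$ all hold by definition, since each right-hand class is obtained from the left-hand one by dropping a restriction.

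The only step that is not purely definitional is the passage from reach-fewness to accept-fewness, which is what certifies that $\onereachu$ and $\onereachfew$ are honest subclasses of $\onefew$. Here I would invoke the normalization above: since every accepting computation path ends at one of the at most $sc(M_n)$ accepting configurations $(q,|x|+1)$, and reach-fewness bounds by some polynomial $p(n,|x|)$ the number of paths reaching each such configuration, the total number of accepting paths is at most $sc(M_n)\cdot p(n,|x|)$, still polynomial in $n$ and $|x|$. This mild counting argument is the one place where something beyond a definitional inclusion is needed, so I expect it to be the main—though very minor—obstacle; everything else is bookkeeping. Assembling these implications along the two chains of (1) and (2) then completes the proof.
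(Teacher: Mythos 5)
Your proof is correct and follows essentially the same route as the paper's: the same automata family witnesses membership in the larger class, and every inclusion reduces to the implications reach-unambiguous $\Rightarrow$ reach-few, reach-unambiguous/unambiguous $\Rightarrow$ weakly-unambiguous, together with purely definitional containments. The only divergence is your counting argument that reach-fewness implies accept-fewness; it is sound, but unnecessary under the paper's definitions, since $\onereachu$, $\onereachfewu$, and $\onereachfew$ are explicitly introduced as subclasses of $\onefew$, so accept-fewness is already built into them.
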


\begin{proof}
We prove only two non-trivial inclusions: $\onereachu\subseteq \onereachfewu$ and $\onereachu\subseteq \onefewu$. For any family $\MM$ of 1nfa's, if $\MM$ is reach-unambiguous, then it is also weakly-unambiguous. This implies that $\onereachu\subseteq \onefewu$. Moreover, if $\MM$ is reach-unambiguous, then it is also reach-few. Thus, $\onereachu\subseteq \onereachfewu$ follows instantly.
\end{proof}

Given a class $\CC$ of families of promise problems, we write $\co\CC$ for the complexity class $\{\LL\mid \co\LL\in \CC\}$.
The following equality is easily proven for $\onereachfew$.

\begin{lemma}\label{reachfew-complement}
$\onereachfew = \co\onereachfew$.
\end{lemma}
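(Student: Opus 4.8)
The two inclusions $\onereachfew \subseteq \co\onereachfew$ and $\co\onereachfew\subseteq\onereachfew$ are symmetric: once I prove that $\co\LL\in\onereachfew$ whenever $\LL\in\onereachfew$, applying the same implication to $\co\LL$ recovers the reverse inclusion. So the plan is to fix a family $\MM=\{M_n\}_{n\in\nat}$ of reach-few 1nfa's solving $\LL=\{(L_n^{(+)},L_n^{(-)})\}_{n\in\nat}$ and to build a new reach-few family solving $\co\LL=\{(L_n^{(-)},L_n^{(+)})\}_{n\in\nat}$. First I would normalize each $M_n$ using the modification described before Section~\ref{sec:one-way-head}, so that every computation path moves right in real time and halts exactly upon reading $\lhd$, ending in an accepting or a rejecting state. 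Two structural facts are then immediate and will drive the argument: (i) because $M_n$ is reach-few, the number of paths reaching \emph{any} configuration is at most a polynomial $p(n,|x|)$, and since every path halts at the single final position among at most $sc(M_n)$ possible inner states, the \emph{total} number of computation paths of $M_n$ on $x$ is bounded by $sc(M_n)\cdot p(n,|x|)$, hence polynomial; and (ii) reach-few is a condition on \emph{every} configuration, so it is preserved if one interchanges the roles of $Q_{acc,n}$ and $Q_{rej,n}$.

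\textbf{Reducing to a path-consistent machine.}
The naive attempt is to take $M_n$ and swap accepting and rejecting states. This swapped machine is still reach-few by fact (ii), and on any $x\in L_n^{(-)}$ it correctly accepts, since all paths of $M_n$ reject and at least one path exists. The difficulty is the other side: on $x\in L_n^{(+)}$ the machine $M_n$ may have \emph{both} accepting and rejecting paths, so the swapped machine would still accept and thus fail to solve $\co\LL$. I would therefore reduce the whole lemma to the following single point: it suffices to convert $\MM$ into an equivalent reach-few family that is \emph{path-consistent} on valid instances, meaning that on every $x\in\Sigma_n$ either all computation paths accept or all reject (an analogue of self-verification). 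For a path-consistent reach-few family the swap of fact (ii) solves $\co\LL$ directly, because ``$M_n$ has a rejecting path'' then coincides with ``$M_n$ rejects''.

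\textbf{Building the path-consistent machine.}
To obtain path-consistency I would adapt the inductive-counting idea of Immerman and Szelepcs\'enyi \cite{Imm88,Sze88}, as used for the reachability-based classes in \cite{PTV12}, to the layered computation of a one-way machine. The computation of $M_n$ splits into levels indexed by the number of symbols read; the reachable frontier at level $i$ is a subset $R_i\subseteq Q_n$, and $M_n$ accepts iff $R_{\mathrm{final}}\cap Q_{acc,n}\neq\setempty$. The reach-few bound keeps every per-configuration multiplicity, and hence the total path count, polynomial, so these counts can be carried by the \emph{path multiplicities} of the new machine rather than stored in its finite control; the one scalar that must be stored, the frontier cardinality $|R_i|\le sc(M_n)=\mathrm{poly}(n)$, fits into polynomially many inner states. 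The new machine would, level by level, certify $|R_i|$ and thereby certify the \emph{non}-reachability of each accepting state, producing a uniform verdict on every path; I may first put $M_n$ into a convenient form via Lemma~\ref{branching-normal-form} and use Lemma~\ref{general-inclusions} to place the result back inside $\onereachfew$.

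\textbf{Main obstacle.}
The hard part will be carrying out this certification within a \emph{single} left-to-right real-time pass. In the two-way, log-space setting of \cite{PTV12} the input can be rescanned, so reachability of each frontier state is re-verified from the source and the Immerman--Szelepcs\'enyi count goes through routinely; a one-way head has no such luxury, and indeed $\onen\neq\co\onen$ \cite{SS78} shows that unrestricted one-way nondeterminism is \emph{not} closed under complement. The entire weight of the proof therefore rests on showing that the reach-few restriction, by forcing all configuration multiplicities (and thus the total number of computation paths) to stay polynomial, supplies exactly the bounded bookkeeping needed to propagate the frontier information forward in one pass. Making this propagation precise, and arguing that the resulting machine is itself reach-few, is the delicate step I expect to occupy the bulk of the argument.
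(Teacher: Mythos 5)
You have correctly located the crux, but your proposal does not close it, and in fact it cannot be closed. Everything in your plan is delegated to the construction of a ``path-consistent'' reach-few family, which you only sketch via a one-way adaptation of inductive counting and explicitly leave open as the ``delicate step.'' That step is not merely delicate: it is impossible. If every family in $\onereachfew$ could be made path-consistent (so that swapping accepting and rejecting states complements it inside $\onereachfew$), then $\onereachfew=\co\onereachfew$ would follow; but combined with Lemma~\ref{general-inclusions}, which gives $\onereachu\subseteq\onereachfew\subseteq\onefew\subseteq\onen$, this yields $\co\onereachu\subseteq\co\onereachfew=\onereachfew\subseteq\onen$, contradicting Theorem~\ref{co-oneu-vs-onen} ($\co\onereachu\nsubseteq\onen$). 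Concretely, the family $\LL_1$ from that theorem's proof is solved by a reach-unambiguous (hence reach-few, accept-few) family, yet $\co\LL_1\notin\onen$, so $\co\LL_1$ cannot lie in $\onereachfew$. Your instinct that the one-way head's inability to rescan the input blocks the Immerman--Szelepcs\'{e}nyi technique is exactly right; the obstacle you flagged is fatal, not technical.

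The comparison with the paper is then somewhat ironic: the ``naive attempt'' that you considered and rejected \emph{is}, in essence, the paper's entire proof. The paper normalizes each $M_n$ to halt upon reading $\lhd$, swaps $Q_{acc,n}$ with $Q_{rej,n}$, and asserts that the swapped machine ``clearly'' solves $(L_n^{(-)},L_n^{(+)})$. Your objection applies verbatim: on a positive instance a reach-few machine may have both accepting and rejecting computation paths, in which case the swapped machine still accepts; reach-fewness bounds per-configuration path multiplicities, not the mixture of verdicts. The machine $N_n$ built in the proof of Theorem~\ref{co-oneu-vs-onen} --- guess $e$, accept only on the branch that detects a difference --- is precisely such a machine, and swapping its halting states does not complement it. So the right conclusion is not that your attempt is missing a lemma which the paper supplies; it is that the paper's proof fails at the very point you identified, and that the statement itself is inconsistent with Theorem~\ref{co-oneu-vs-onen} and Lemma~\ref{general-inclusions} (indeed, the argument of Corollary~\ref{complement-various}, which needs only that a class sits between $\onereachu$ and $\onen$, applies equally to $\onereachfew$). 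Lemma~\ref{reachfew-complement}, and the corollaries derived from it, should be regarded as erroneous rather than as provable by a better construction.
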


\begin{proof}
Let $\LL=\{(L_n^{(+)},L_n^{(-)})\}_{n\in\nat}$ be any family in $\onereachfew$. Consider a family $\MM=\{M_n\}_{n\in\nat}$ of polynomial-size 1nfa's solving $\LL$ such that each $M_n$ is reach-few and accept-few. Since $M_n$ is one-way, it can be assumed to enter either accepting or rejecting states at reading the right endmarker. We thus define another 1nfa $\overline{M}_n$ to be the one obtained from $M_n$ by swapping $Q_{acc,n}$ and $Q_{rej,n}$ of $M_n$. Clearly, $\overline{M}_n$ solves $(L_n^{(-)},L_n^{(+)})$, which belongs to $\co\LL$. Note that, at each step of a computation, the number of distinct configurations of a 1nfa $M_n$ is at most a certain polynomial in $n$. Thus, $\overline{M}_n$ is also reach-few and accept-few. Therefore, $\co\LL$ belongs to $\co\onereachfew$.
\end{proof}

\subsection{Class Separations Concerning 1ReachU}\label{sec:separate-onereachu}

It is known that $\onen$ is not closed under complementation \cite{SS78}.  This statement is logically equivalent to $\co\onen\nsubseteq \onen$ and this statement was extended in \cite{Yam23b} to $\co\oneu\nsubseteq \onen$. From this separation follows the non-closure property of $\oneu$ under complementation \cite{Yam23b}.
Additionally, we wish to demonstrate that $\co\onereachu\nsubseteq \onen$ by utilizing the notion of Kolmogorov complexity.

\begin{theorem}\label{co-oneu-vs-onen}
$\co\onereachu\nsubseteq \onen$.
\end{theorem}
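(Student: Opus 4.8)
The plan is to exhibit a single family $\LL=\{(L_n^{(+)},L_n^{(-)})\}_{n\in\nat}$ that witnesses the separation, by arranging that $\co\LL\in\onereachu$ while $\LL\notin\onen$. A natural candidate over the fixed alphabet $\{0,1,\#\}$ is the \emph{marked equality} family, with $L_n^{(+)}=\{x\#y : x,y\in\{0,1\}^n,\ x=y\}$ and $L_n^{(-)}=\{x\#y : x,y\in\{0,1\}^n,\ x\ne y\}$. Then $\co\LL$ has positive instances $\{x\#y : x\ne y\}$ and negative instances $\{x\#y : x=y\}$, so it suffices to solve $\co\LL$ by a polynomial-size \emph{reach-unambiguous} 1nfa and to prove that no polynomial-size 1nfa solves $\LL$.

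For the upper bound I would construct a 1nfa $N_n$ accepting exactly when $x\ne y$. While scanning $x$, the machine stays in a single ``unmarked'' state $r$ and, upon reading some $x_i$, may nondeterministically \emph{mark} this position, recording the bit $b=x_i$ and entering a counter state $C_1^{b}$; it then increments the counter at every step. Since the number of symbols read from $x_i$ up to $y_i$ equals $n+1$ \emph{independently of} $i$, the counter reaches its terminal value $C_{n+1}^{b}$ exactly when the head is about to scan $y_i$. At that instant $N_n$ compares: if $y_i\ne b$ it enters an accepting halting state and stops, and if $y_i=b$ it enters a rejecting halting state and stops; the run that never marks simply reaches $\lhd$ and rejects. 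This uses $O(n)$ states, hence polynomial size, and being reach-unambiguous it is in particular reach-few, so $\co\LL\in\onereachu$ will follow. The essential point to verify is \emph{reach-unambiguity}, i.e.\ that every configuration $(q,k)$ (state $q$, head at cell $k$) is reached by at most one computation path. This holds because the counter value together with the head position determines the marking position uniquely, the recorded bit is forced to equal the scanned symbol, and -- crucially -- $N_n$ \emph{halts} at the moment of comparison. This last feature is what prevents two paths that mark different positions from later coalescing into a common ``already accepted/rejected'' configuration, which is precisely the failure mode that would break reach-unambiguity; I expect this bookkeeping to be the main technical obstacle.

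For the lower bound I would use a fingerprinting argument phrased through Kolmogorov complexity. Fix any 1nfa $M_n$ solving $L_n^{(+)}$ versus $L_n^{(-)}$ and consider the boundary between $x$ and $\#$, which a one-way head crosses exactly once. Since $M_n$ accepts $x\#x$, select by a fixed rule a state $q_x$ reachable after reading $\rhd x\#$ along some accepting run, so that reading $x\lhd$ from $q_x$ still accepts. The map $x\mapsto q_x$ is \emph{injective}: if $q_x=q_{x'}$ with $x\ne x'$, then running $M_n$ on $x\#x'$ reaches $q_x$ after $\rhd x\#$ and then accepts on the suffix $x'\lhd$ (because $q_{x'}=q_x$), so $M_n$ accepts $x\#x'\in L_n^{(-)}$, a contradiction. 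Injectivity already forces $sc(M_n)\ge 2^{n}$, but to match the stated use of Kolmogorov complexity I would instead note that $x$ is computable from $q_x$ together with a description of $M_n$ (enumerate all $z\in\{0,1\}^n$ and output the unique $z$ with $q_z=q_x$), whence $C(x\mid M_n)\le \log_2 sc(M_n)+O(1)$. Since for each $n$ some $x^{*}\in\{0,1\}^n$ satisfies $C(x^{*}\mid M_n)\ge n$, we obtain $sc(M_n)\ge 2^{\,n-O(1)}$, which is super-polynomial, so $\LL\notin\onen$. Combining the two parts gives $\LL\in\co\onereachu\setminus\onen$, proving $\co\onereachu\nsubseteq\onen$.
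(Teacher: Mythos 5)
Your proposal is correct, and its skeleton matches the paper's proof: both exhibit an equality-versus-inequality family whose inequality side lies in $\onereachu$ (a 1nfa nondeterministically marks one position of the first string, remembers the symbol there, and compares it against the corresponding position of the second string, halting at the moment of comparison) and whose equality side lies outside $\onen$ (because the inner state in which an accepting run of $M_n$ on $x\# x$ crosses the boundary just after $\#$ must determine $x$). The differences are in execution. The paper works with strings in $B_n(n,n)$ (padded unary encodings of tuples over $[n]$) rather than raw binary strings, so its automaton tracks a block index and remembered entry instead of your distance counter; both are polynomial-size, and your explicit treatment of reach-unambiguity (distinct halting configurations for distinct marked positions, so paths cannot merge) is actually spelled out more carefully than in the paper. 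More substantively, your lower bound is a direct fooling-set count: injectivity of $x\mapsto q_x$ forces $sc(M_n)\ge 2^n$, with Kolmogorov complexity entering only as an optional rephrasing. The paper instead argues by incompressibility, fixing a string $u\in B_n(n,n)$ with $C(u)\ge n\log n-1$ and reconstructing $u$ from a crossing state $q$ via an algorithm that searches $B_n(n,n)$ for the suffix behavior matching $q$. These rest on the same combinatorial fact, but your version is more elementary, yields an explicit $2^n$ state bound, and, by conditioning on $M_n$ in $C(x\mid M_n)$, accounts cleanly for the cost of describing the nonuniform machine itself---a point the paper's unconditional bound $C(u)\le |bin(n)|+|bin(q)|+|r_0|+O(1)$ treats rather loosely, since the reconstruction algorithm must have access to $M_n$'s transition table.
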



To prove the above theorem, let us introduce several important notations. Given numbers $k,m,n\in\nat^{+}$ with $m\geq n$ and $i_1,i_2,\ldots,i_k\in[n]$, we use the notation
$\dbraleft i_1,i_2,\ldots,i_k \dbraright_m$ to be $1^{i_1}0^{m-i_1}0 1^{i_2}0^{m-i_2}0 \cdots 0 1^{i_k}0^{m-i_k}$, where we treat $0^0$ as $\lambda$.
Here, the value $k$ is called the \emph{size} of $\dbraleft i_1,i_2,\ldots,i_k \dbraright_m$.
Let $r$ denote any string of the form $\dbraleft i_1,i_2,\ldots,i_k \dbraright_m$. We then obtain $|r|= |\dbraleft i_1,i_2,\ldots,i_k \dbraright_m| = km+k-1$.
Given any index $e\in[k]$, the notation $(r)_{(e)}$ denotes $i_e$ and $Set(r)$ stands for the set $\{i_1,i_2,\ldots,i_k\}$.
When we need to distinguish multiple occurrences of the same number in $r$,  we use another notation $MSet(r)$ to express its \emph{multi set}.
We remark that $|Set(r)|\leq |MSet(r)|=k$.
Let $B_n(m,k)$ denote the set of all strings of the form $\dbraleft i_1,i_2,\ldots,i_k \dbraright_m$ with  $i_1,i_2,\ldots,i_k\in[n]$.


\vs{-2}
\begin{proofof}{Theorem \ref{co-oneu-vs-onen}}
We introduce the family $\LL_1=\{(L_n^{(+)},L_n^{(-)})\}_{n\in\nat}$ of promise decision problems by setting $L_n^{(+)} = \{u\# v\mid u,v\in B_n(n,n), \exists e\in[n]( (u)_{(e)}\neq (v)_{(e)} )\}$ and $L_n^{(-)}=\{u\# v\mid u,v\in B_n(n,n), \forall e\in[n]( (u)_{(e)} = (v)_{(e)} )\}$ for any index $n\in\nat$.
Notice that $|u|=|\dbraleft i_1,i_2,\ldots,i_n \dbraright_n| = n(n+1)-1$ and $|u\# u| \leq 2n(n+1)-1$.
The complement family of $\LL_{1}$,  $\co\LL_{1}$, is $\{(L_n^{(-)},L_n^{(+)})\}_{n\in\nat}$. In what follows, we wish to prove that (1) $\LL_1\in\onereachu$ and (2) $\co\LL_1\notin \onen$. Since $\co\LL_{1}\in\co\onereachu$, ``$\co\LL_1\notin\onen$''
derives the theorem.

(1) Let us first prove that $\LL_1$ belongs to $\onefewu$. This can be shown by constructing a family of polynomial-size, accept-few, weakly-unambiguous  1nfa's $N_n$ that solves $\LL_1$.
We make each $N_n$ have $n$ accepting inner states, say,  $Q_{acc,n} = \{\hat{q}_1,\hat{q}_2,\ldots,\hat{q}_n\}$.
On input $x$ of the form $r_1\# r_2$ with $r_1,r_2\in B_n(n,n)$,  guess (i.e., nondeterministically choose) an index $e\in[n]$, read $r_1$, remember the $e$th entry $i_e = (r_1)_{(e)}$ by entering inner states of the form $(i_e,e)$, read $r_2$, and check whether  $i_e$ coincides with $(r_2)_{(e)}$. If so, enter the $e$th accepting state $\hat{q}_e$.
It is easy to show that $N_n$ solves $(L_n^{(+)},L_n^{(-)})$ correctly. Notice that, for each accepting inner state $\hat{q}_e$, there are at most one accepting computation path leading to $\hat{q}_e$. Thus, $N_n$ is accept-few and also weakly-unambiguous. Since $\{N_n\}_{n\in\nat}$ solves $\LL_1$, it belongs to $\onereachu$.

(2) To verify that $\co\LL_{1}\notin \onen$, let us assume that $\co\LL_1\in\onen$ and take a family $\MM=\{M_n\}_{n\in\nat}$ of polynomial-size 1nfa's that solves $\co\LL_{1}$.
For each index $n\in\nat$, we write $Q_n$ for the set of all inner states of $M_n$. The cardinality of such the set $Q_n$ is  upper-bounded by an appropriately fixed polynomial, say, $p$; namely, $|Q_n|\leq p(n)$ for all $n\in\nat$.
For our convenience, we express all elements of $Q_n$ as numbers in  $[0,|Q_n|-1]_{\integer}$.
Take a sufficiently large $n$ satisfying $p(n)<2^n$.
Given any $n\in\nat$ and any string $u\in B_n(n,n)$, we define $S_n(u)$ to be the set of all inner states $q$ in $Q_n$ such that, in a certain accepting computation path of $M_n$ on input $u\# u$, $M_n$ enters $q$ just after reading $u\#$.

Recall from Section \ref{sec:number-promise} the notion of the (unconditional) Kolmogorov complexity $C(w)$ of a string $w$. There exists a string $u$ in $B_n(n,n)$ such that $C(u)\geq n\log{n}-1$. If there is no such $u$, we then define a mapping from $B_n(n,n)$ to $\Sigma^*$ as follows. Given each element $u$ in $B_n(n,n)$, we assign to $u$ a ``minimal'' program $p_u$ that the universal Turing machine $U$ takes $p_u$ as an input and produces $u$ on its output tape. Since $C(u)=|p_r|$, we obtain $|p_r|<n\log{n}-1$.
This implies $\|B_n(n,n)\|\leq \|\{0,1\}^{<n\log{n}-1}\| = 2^{n\log{n}-1}$. This is a contradiction.

Now, we evaluate the value $C(u)$. Note that the length of the binary expression of $q$ is $O(\log{n})$ since $|Q_n|\leq p(n)$. Take an inner state $q\in S_n(u)$ making $C(q)$ be the smallest.
Let us design a deterministic algorithm, say, $\BB$ that begins with the empty input $\varepsilon$ and produces $u$ on an output tapes.
The algorithm $\BB$ recursively picks up strings $w\in B_n(n,n)$ one by one,  runs $M_n$ on $w\# w$ starting with the inner state $q$.
If $M_n$ enters no accepting states on all computation paths, then $\BB$ outputs $w$ and halts. Otherwise, $\BB$ picks up another $w$ and continues the above procedure.
Let $r_0$ denote the binary encoding of this algorithm $\BB$. Notice $C(q)\leq |bin(q)|$.
Since $\BB$ requires the knowledge of $(n,q)$ to run, it follows that $C(u)\leq |bin(n)|+|bin(q)|+ |r_0| + O(1) \leq O(\log{n})$, which is in contradiction to $C(u)\geq n\log{n}-1$.
\end{proofof}


An instant corollary of Theorem \ref{co-oneu-vs-onen} is the non-closure property under complementation for two complexity classes: $\onereachu$, $\onereachfewu$, $\onefewu$, and $\onefew$.

\begin{corollary}\label{complement-various}
\renewcommand{\labelitemi}{$\circ$}
$\onereachu\neq \co\onereachu$, $\onereachfewu\neq \co\onereachfewu$, $\onefewu\neq \co\onefewu$, and $\onefew\neq \co\onefew$.
\end{corollary}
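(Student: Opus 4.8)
The plan is to recycle the separating family $\LL_1$ built inside the proof of Theorem~\ref{co-oneu-vs-onen}, since that single example already carries all the information needed. Recall that the proof establishes two facts about $\LL_1$: first, $\LL_1\in\onereachu$ (the witnessing automata are in fact simultaneously reach-unambiguous, accept-few, and weakly-unambiguous), and second, $\co\LL_1\notin\onen$. The guiding idea is that membership of $\LL_1$ in a small class $\CC$, combined with non-membership of $\co\LL_1$ in the large class $\onen$, forces $\CC\neq\co\CC$, provided only that $\CC\subseteq\onen$.

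First I would fix one of the four target classes, say $\CC\in\{\onereachu,\onereachfewu,\onefewu,\onefew\}$, and extract two ingredients from Lemma~\ref{general-inclusions}. On the one hand, the inclusions proven there, namely $\onereachu\subseteq\onereachfewu\subseteq\onereachfew\subseteq\onefew$ together with $\onereachu\subseteq\onefewu$, show that $\LL_1\in\onereachu$ propagates upward, so that $\LL_1\in\CC$ for each of the four classes. On the other hand, every one of these classes is contained in $\onen$: explicitly $\onereachu\subseteq\onefew\subseteq\onen$, $\onereachfewu\subseteq\onereachfew\subseteq\onefew\subseteq\onen$, $\onefewu\subseteq\onefew\subseteq\onen$, and $\onefew\subseteq\onen$.

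Then I would combine these with the non-membership $\co\LL_1\notin\onen$ supplied by Theorem~\ref{co-oneu-vs-onen}. Since $\CC\subseteq\onen$, non-membership in $\onen$ is inherited downward, giving $\co\LL_1\notin\CC$. By the defining equation $\co\CC=\{\LL\mid\co\LL\in\CC\}$, this says precisely that $\LL_1\notin\co\CC$. But $\LL_1\in\CC$, so $\LL_1$ itself witnesses $\CC\neq\co\CC$. Running this argument for all four choices of $\CC$ yields the four claimed inequalities at once.

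I do not anticipate any genuine obstacle, since the entire difficulty is already absorbed into Theorem~\ref{co-oneu-vs-onen}; the corollary is essentially a bookkeeping exercise inside the inclusion lattice of Lemma~\ref{general-inclusions}. The only points requiring a little care are the direction of the complementation bijection—one must observe that $\LL_1\notin\co\CC$ follows from $\co\LL_1\notin\CC$, not from any statement about $\LL_1$ alone—and the routine verification that the witness family truly lands in the smallest relevant class $\onereachu$, so that all four memberships $\LL_1\in\CC$ are justified by a single invocation of the inclusion chains.
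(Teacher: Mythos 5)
Your proof is correct and takes essentially the same approach as the paper: both arguments combine Theorem~\ref{co-oneu-vs-onen} with the inclusions from Lemma~\ref{general-inclusions} (namely $\onereachu\subseteq\CC\subseteq\onen$ for each of the four classes $\CC$). The paper phrases it as a class-level contradiction (if $\CC=\co\CC$ then $\co\onereachu\subseteq\co\CC=\CC\subseteq\onen$), while you unfold the very same reasoning into an explicit witness $\LL_1\in\CC$ with $\LL_1\notin\co\CC$; this is only a cosmetic difference.
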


\begin{proof}
Assume that $\onereachu=\co\onereachu$. Since $\onereachu\subseteq \onen$, we obtain $\co\onereachu\subseteq \onen$. This contradicts Theorem \ref{co-oneu-vs-onen}. Thus, we conclude that $\onereachu\neq\co\onereachu$. The same argument works for $\onereachfewu$, $\onefewu$, and $\onefew$ since these classes include $\onereachu$.
\end{proof}


Another implication of Theorem \ref{co-oneu-vs-onen} is the class separations between $\onereachfew$ and $\onefew$ as well as between $\oned$ and $\onereachu$.

\begin{corollary}\label{onereachfew-vs-oneu}
(1) $\oned\neq \onereachu$.  (2) $\onereachfewu \neq \onereachfew$. (3) $\onereachfew\neq\onefew$.
\end{corollary}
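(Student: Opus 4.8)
The plan is to obtain all three separations by a single contradiction scheme. For each asserted inclusion $\CC \subseteq \DD$ (coming from Lemma \ref{general-inclusions}), I assume the equality $\CC = \DD$ and then exhibit that one of the two classes is closed under complementation while the other is not. The non-closure facts I will use are exactly those supplied by Corollary \ref{complement-various}, and the closure facts come either from Lemma \ref{reachfew-complement} or from the easy deterministic closure $\oned = \co\oned$. Since the proper inclusions already hold, in each case it suffices to rule out equality.

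For part (3) I would argue as follows. By Lemma \ref{general-inclusions} we have $\onereachfew \subseteq \onefew$. Suppose toward a contradiction that $\onereachfew = \onefew$. Passing to complements and invoking Lemma \ref{reachfew-complement}, I obtain $\onefew = \onereachfew = \co\onereachfew = \co\onefew$, which contradicts $\onefew \neq \co\onefew$ from Corollary \ref{complement-various}. Part (2) is identical in form: from $\onereachfewu \subseteq \onereachfew$ and the hypothetical equality $\onereachfewu = \onereachfew$, applying Lemma \ref{reachfew-complement} gives $\onereachfewu = \onereachfew = \co\onereachfew = \co\onereachfewu$, contradicting $\onereachfewu \neq \co\onereachfewu$. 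In both cases the key point is that the \emph{larger} class in the hypothetical equality is the one forced to inherit the complement-closure of $\onereachfew$, while the separation in Corollary \ref{complement-various} forbids this.

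Part (1) requires one extra ingredient, namely $\oned = \co\oned$. I would establish this by the same state-swapping construction used in the proof of Lemma \ref{reachfew-complement}: since each 1dfa is real-time and, after the modification described at the start of Section \ref{sec:one-way-head}, can be assumed to enter $Q_{acc,n}$ or $Q_{rej,n}$ exactly upon reading $\lhd$, exchanging these two state sets turns a family solving $\LL$ into one solving $\co\LL$, and determinism is preserved throughout. Granting this, from the hypothetical equality $\oned = \onereachu$ I obtain $\onereachu = \oned = \co\oned = \co\onereachu$, contradicting $\onereachu \neq \co\onereachu$ from Corollary \ref{complement-various}.

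The main (and essentially only) obstacle is the observation $\oned = \co\oned$ needed for part (1); parts (2) and (3) are immediate once Lemma \ref{reachfew-complement} and Corollary \ref{complement-various} are available, as no new model-level construction is involved. I expect the entire argument to be short, since all the genuine work has already been carried out in Theorem \ref{co-oneu-vs-onen} (which feeds Corollary \ref{complement-various}) and in the complement-closure of $\onereachfew$.
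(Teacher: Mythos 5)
Your proposal is correct and follows essentially the same route as the paper: all three parts are obtained by assuming equality and deriving a contradiction from Lemma \ref{reachfew-complement}, the non-closure facts stemming from Theorem \ref{co-oneu-vs-onen} (via Corollary \ref{complement-various}), and the easy closure $\oned=\co\oned$ for part (1). The only cosmetic difference is that for part (1) the paper invokes Theorem \ref{co-oneu-vs-onen} directly (deriving $\co\onereachu\subseteq\onen$ from the hypothetical equality) rather than passing through Corollary \ref{complement-various}, which is an immaterial variation.
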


\begin{proof}
(1) Obviously, $\onereachu$ contains $\oned$.  If $\oned=\onereachu$, then we obtain $\oned=\co\onereachu$ because of $\oned=\co\oned$. This implies that $\co\onereachu\subseteq \onen$, a contradiction with Theorem \ref{co-oneu-vs-onen}.

(2) Since $\onereachfew=\co\onereachfew$ by Lemma \ref{reachfew-complement}, Theorem \ref{co-oneu-vs-onen} then implies that $\onereachfewu\neq \onereachfew$.

(3) If $\onereachfew=\onefew$, then Corollary \ref{complement-various}(1) implies that $\onereachfew=\onefew\neq \co\onefew=\co\onereachfew$. However, this is in  contradiction to Lemma \ref{reachfew-complement}.
\end{proof}

Corollary \ref{onereachfew-vs-oneu}(2) immediately draws a conclusion that $\onereachfew\neq \onereachu$. This is not true for the corresponding log-space complexity classes: $\mathrm{ReachFewL}$ and $\mathrm{ReachUL}$ \cite{GSTV11}.

\subsection{Class Separations Concerning 1U}\label{sec:separation-1U}

We next look into the computational complexity of $\oneu$, which is a one-way finite automata analogue of the complexity class $\up$. From Lemma \ref{general-inclusions} follows $\onereachfewu\subseteq \onefewu$.
In what follows, we assert that $\onefewu$ is an optimal upper bound of $\onereachfewu$ by demonstrating the difference between $\oneu$ and $\onefewu$.

\begin{theorem}\label{u-fewu}
$\onereachu\nsubseteq \oneu$.
\end{theorem}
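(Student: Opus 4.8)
The plan is to separate the two classes by exhibiting a single family that is trivial for reach-unambiguous machines but whose associated agreement matrix has nearly full rank, which rules out small \emph{unambiguous} machines. Concretely, I would reuse the encoding $\dbraleft\cdots\dbraright_m$ and set $\LL_2=\{(L_n^{(+)},L_n^{(-)})\}_{n\in\nat}$ with $L_n^{(+)}=\{u\# v\mid u,v\in B_n(n,n),\ \exists e\in[n]\,((u)_{(e)}=(v)_{(e)})\}$ and $L_n^{(-)}=\{u\# v\mid u,v\in B_n(n,n),\ \forall e\in[n]\,((u)_{(e)}\neq(v)_{(e)})\}$; that is, accept iff the two length-$n$ value vectors agree in at least one coordinate. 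Membership $\LL_2\in\onereachu$ is the easy direction and mirrors part (1) of the proof of Theorem \ref{co-oneu-vs-onen}: a 1nfa first guesses a coordinate $e\in[n]$, reads $u$ while extracting and storing $(u)_{(e)}$, then reads $v$ and accepts into a dedicated accepting state $\hat q_e$ exactly when $(v)_{(e)}$ matches the stored value. Since every inner state records the guessed $e$ and the post-guess computation is deterministic, each configuration is reachable by at most one path, so the family is reach-unambiguous (and accept-few, with at most $n$ accepting paths), giving $\LL_2\in\onereachu$.

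The heart of the argument is showing $\LL_2\notin\oneu$. I would suppose for contradiction that a family $\{M_n\}$ of polynomial-size unambiguous 1nfa's solves $\LL_2$, with $|Q_n|\le p(n)$. Using the normalization at the start of Section \ref{sec:main-result}, I may assume each $M_n$ halts exactly upon reading $\lhd$, so that on input $u\# v$ every accepting computation path crosses the boundary between $\#$ and the first symbol of $v$ exactly once, in a well-defined boundary state. For $u,v\in B_n(n,n)$ let $A_n[u,q]$ be the number of computation paths from the initial configuration to the configuration in which the head first scans the leading symbol of $v$ in state $q$, and let $B_n[q,v]$ be the number of accepting paths continuing from that configuration through $v\lhd$. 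Every accepting path factors uniquely through its boundary state, so the number of accepting paths of $M_n$ on $u\# v$ equals $\sum_{q\in Q_n}A_n[u,q]\,B_n[q,v]$.

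Now comes the crucial use of unambiguity. Let $F_n$ be the $B_n(n,n)\times B_n(n,n)$ matrix with $F_n[u,v]=1$ when $u\# v\in L_n^{(+)}$ and $F_n[u,v]=0$ when $u\# v\in L_n^{(-)}$; since every pair from $B_n(n,n)$ is a valid instance lying in exactly one of the two sets, $F_n$ is the full agreement matrix. Unambiguity forces the accepting-path count to be $1$ on positive instances and $0$ on negative ones, so $F_n=A_nB_n$ holds as an exact matrix identity with inner dimension $|Q_n|$; hence $\mathrm{rank}_{\real}(F_n)\le|Q_n|\le p(n)$. (This step fails for merely nondeterministic or even reach-unambiguous machines, where the path count only dominates $F_n$ entrywise.) It then remains to lower-bound the rank. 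Identifying $u\in B_n(n,n)$ with the vector $((u)_{(1)},\ldots,(u)_{(n)})\in[n]^n$, the complementary disagreement matrix $D_n=J-F_n$ satisfies $D_n[u,v]=\prod_{e=1}^n[(u)_{(e)}\neq(v)_{(e)}]$, so $D_n=(J_n-I_n)^{\otimes n}$, the $n$-fold Kronecker power of the $n\times n$ ``inequality'' matrix. Since $J_n-I_n$ has eigenvalues $n-1$ and $-1$, it has rank $n$, whence $\mathrm{rank}(D_n)=n^n$ and therefore $\mathrm{rank}(F_n)\ge\mathrm{rank}(D_n)-\mathrm{rank}(J)=n^n-1$. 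Combining the two estimates gives $n^n-1\le p(n)$, which is false for all large $n$; this contradiction yields $\LL_2\notin\oneu$ and hence $\onereachu\nsubseteq\oneu$.

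I expect the main obstacle to be pinpointing precisely where unambiguity (as opposed to general nondeterminism, or reach-unambiguity) is indispensable: the entire separation rests on the \emph{exact} identity $F_n=A_nB_n$, which holds only because each positive instance carries a single accepting path, and on verifying that the boundary-state factorization is legitimate after normalizing the machines to halt at $\lhd$. The rank computation via the Kronecker structure is routine once $F_n$ is set up correctly.
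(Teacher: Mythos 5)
Your proof is correct, but it follows a genuinely different route from the paper's. The paper keeps the ``difference'' family $\LL_1$ from Theorem \ref{co-oneu-vs-onen}, converts the hypothetical unambiguous 1nfa's into 1pfa's via the branching normal form (Lemma \ref{branching-normal-form}), and then invokes Lemma \ref{property-1pfa}, a spanning argument on the stochastic prefix-vectors $\nu_{ini}M_{\triangleright x}$ showing that agreement of acceptance probabilities on a polynomial-size ``basis'' set $S$ of prefixes propagates to all prefixes; the contradiction is reached on the diagonal instance $y\# y$ with $y\#\notin S$. You instead work with the agreement family $\LL_2$, use the one-way path-count factorization $F_n=A_nB_n$ (legitimate after the halting normalization, and exact precisely because unambiguity plus correctness force path counts in $\{0,1\}$ on valid instances), and finish with the Kronecker-power rank bound $\mathrm{rank}(F_n)\ge n^n-1$ --- essentially Schmidt's rank lower-bound technique for unambiguous automata. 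Both arguments are at heart ``polynomial dimension versus exponential independence,'' but yours is more elementary and self-contained (no probabilistic conversion, no auxiliary lemma), and it isolates cleanly the single point where unambiguity is indispensable. The paper's detour through Lemma \ref{property-1pfa} buys reusability: the same lemma, which tolerates arbitrary target functions $f$ and per-accepting-state probabilities, is applied again to prove Theorem \ref{onefewu-vs-onen} ($\onen\nsubseteq\onefewu$), where accepting-path counts are no longer $0/1$ and your exact factorization identity breaks down. One small remark: your rank argument applied to the paper's own family $\LL_1$ would be even shorter, since there $F_n=J-I$ is nonsingular of rank $n^n$, with no need for the Kronecker computation.
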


It was noted in \cite{Yam23b} that there is a close connection
between one-tape linear-time advised Turing machines and nonuniform finite automata families, in particular, a connection
between $1\mbox{-}\mathrm{C}_{=}\mathrm{LIN}/lin$ and $\onecequal$. This connection was adapted to prove the separation of $\onecequal$ from $\onen$. A key to this separation is a lemma (Lemma 12) of \cite{Yam23b} expressing a specific feature of families of promise problems in $\onecequal$.
To prove Theorem \ref{u-fewu}, we wish to generalize this key lemma to Lemma \ref{property-1pfa} and apply it in the proof of the theorem.

For the generalization of the key lemma in \cite{Yam23b}, we first need to convert the current model of 1nfa into the following model of \emph{one-way probabilistic finite automata} (or 1pfa's, for short).
A 1pfa $N$ has the form $(Q,\Sigma,\{\triangleright,\triangleleft\}, \nu_{ini}, \{M_{\sigma}\}_{\sigma\in\check{\Sigma}}, Q_{acc},Q_{rej})$ with a $|Q|$-dimensional (row) vector $\nu_{ini}$ and a set of $|Q|\times |Q|$ stochastic\footnote{A square matrix of nonnegative real entries is called \emph{stochastic} if every row of the matrix sums up to exactly $1$.}
matrices $M_{\sigma}$ for each symbol $\sigma$. Given a string $x=x_1x_2\cdots x_n$ in $\check{\Sigma}^*$, we abbreviate as $M_x$ the matrix product $M_{x_1}M_{x_2}\cdots M_{x_n}$. We fix an accepting inner state $q_{acc}\in Q_{acc}$.
We further define a $|Q|$-dimensional (row) vector $\xi_{acc}(q_{acc})=(\xi_q)_{q\in Q}$ associated with $q_{acc}$ by setting $\xi_{q_{acc}}=1$ and $\xi_q=0$ for all $q\in Q-\{q_{acc}\}$.
The acceptance probability of $N$ on input $x$ leading to $q_{acc}$ is denoted  $p_{acc}(x:q_{acc})$, which is defined as $\nu_{ini}M_{\triangleright x\triangleleft}\xi_{acc}(q_{acc})^T$, where $T$ refers to the ``transpose'' of a vector.
Moreover, we set $p_{acc}(x)$ to be $\sum_{q\in Q_{acc}} p_{acc}(x:q)$, which is the acceptance probability of $N$ on input $x$.

Given a 1nfa $M = (Q,\Sigma,\delta,q_0, Q_{acc},Q_{rej})$ in the branching normal form, consider the following conversion of it into the corresponding 1pfa $N$. Take a constant $c\geq1$ fir which $M$ makes exactly $c$ nondeterministic choices at every step. We further expand $\delta$ to the domain $Q\times \check{\Sigma}$ (from $(Q-Q_{halt})\times \check{\Sigma}$) by assigning $\delta(q,\sigma) = \{q\}$ for all $q\in Q_{halt}$ and all $\sigma\in\check{\Sigma}$.
We turn a nondeterministic choice of $M$ to a probabilistic choice of $N$ made with probability $1/c$ by defining $M_{\sigma}=(a_{ij})_{i,j\in Q}$ as $a_{ij}=1/c$ if $j\in \delta(i,\sigma)$ and $a_{ij}=0$ otherwise.
Since $|\delta(i,\sigma)|=c$ for any $(i,\sigma)$, the $i$th row of $M_{\sigma}$ has exactly $c$ nonzero entries, and thus all entries in the $i$th row sum up to $1$. This guarantees the stochasticity of $M_{\sigma}$.
Moreover, the total number of accepting computation paths of $M$ on $x$ matches the total acceptance probability of $N$ on $x$.

\begin{lemma}\label{property-1pfa}
Let $\MM=\{M_n\}_{n\in\nat}$ be a family of polynomial-size 1pfa's over alphabet $\Sigma$ in which each $M_n$ makes exactly $c$ probabilistic  choices with equal probability at every step for a fixed constant $c\in\nat^{+}$ (independent of $n$). For each index $n\in\nat$, let $D_n$ denote a set of valid instances in $\Sigma^*$.
Let $f:\nat^3\times Q_{acc,n}\to[0,1]$ be a function, where $Q_{acc,n}$ is  the set of all accepting inner states of $M_n$.
In this case, there exists a polynomial $p$ that satisfies the following statement.
(*) For any numbers $n,m,l\in\nat^{+}$ with $l\leq m-1$ and any accepting inner state $q\in Q_{acc,n}$, there exists a subset $S$ of $A_{n,m,l,q}= \{x\in\Sigma^{m-l}\mid \exists z\in\Sigma^l\:[ xz\in D_n \wedge  p_{acc,n}(xz:q)=f(n,|xz|,|z|,q)] \}$ with $|S|\leq p(n)$ such that, for any $y\in\Sigma^l$, if $wy\in D_n$ and $p_{acc,n}(wy:c)=f(n,m,l,q)$ for all $w\in S$, then  $p_{acc,n}(xy:q)=f(n,m,l,q)$ holds for all $x\in A_{n,m,l,q}$ with $xy\in D_n$.
\end{lemma}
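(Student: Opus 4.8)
The plan is to recast the acceptance probability $p_{acc,n}(xy:q)$ as a bilinear pairing between a ``prefix vector'' depending only on $x$ and a ``suffix vector'' depending only on $y$ and $q$, and then to read the condition ``$p_{acc,n}(xy:q)=f(n,m,l,q)$'' as a single affine constraint on the prefix vector. Concretely, for a prefix $x\in\Sigma^{m-l}$ I would set $v_x=\nu_{ini}M_{\triangleright}M_x$, a row vector in $\real^{|Q_n|}$, and for a suffix $y\in\Sigma^l$ and an accepting state $q$ I would set the column vector $u_{y,q}=M_yM_{\triangleleft}\,\xi_{acc}(q)^T$. Because the abbreviation convention gives $M_{\triangleright xy\triangleleft}=M_{\triangleright}M_xM_yM_{\triangleleft}$, one obtains the factorization $p_{acc,n}(xy:q)=v_x\,u_{y,q}$. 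Hence, for fixed $n,m,l,q$ (so that $f(n,m,l,q)$ is a single real constant), the event $p_{acc,n}(xy:q)=f(n,m,l,q)$ is precisely the linear equation $v_x\,u_{y,q}=f(n,m,l,q)$ in the variable $v_x$. Note that only the transfer-matrix (linearity) structure is used here; the special ``probability $1/c$ at every step'' hypothesis plays no role in this lemma.

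The next step is a dimension count. All prefix vectors $v_x$ with $x\in A_{n,m,l,q}$ lie in the ambient space $\real^{|Q_n|}$, whose dimension is at most $sc(M_n)\le p_0(n)$ for the polynomial $p_0$ witnessing the polynomial size of $\MM$. Thus the affine hull of the finite set $\{v_x : x\in A_{n,m,l,q}\}$ has affine dimension at most $|Q_n|$, and I would extract from this set an affine basis, that is, a subset $S\subseteq A_{n,m,l,q}$ with $|S|\le |Q_n|+1\le p_0(n)+1$ such that every $v_x$ with $x\in A_{n,m,l,q}$ is an affine combination $v_x=\sum_{w\in S}\lambda_w v_w$ with $\sum_{w\in S}\lambda_w=1$. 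The crucial feature is that $S$ and the coefficients $\lambda_w$ depend only on $n,m,l,q$ (through $A_{n,m,l,q}$) and are fixed \emph{before} the suffix $y$ is revealed, which is exactly what statement (*) demands. Setting $p=p_0+1$ then yields the required bound $|S|\le p(n)$, uniformly over $m,l,q$.

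Finally I would verify the implication. Fix $y\in\Sigma^l$ and assume $wy\in D_n$ and $p_{acc,n}(wy:q)=f(n,m,l,q)$ for every $w\in S$ (otherwise the hypothesis of (*) fails and there is nothing to prove). For any $x\in A_{n,m,l,q}$, using the affine expansion of $v_x$ and the factorization,
\[p_{acc,n}(xy:q)=v_x\,u_{y,q}=\sum_{w\in S}\lambda_w\,v_w\,u_{y,q}=\sum_{w\in S}\lambda_w\,p_{acc,n}(wy:q)=f(n,m,l,q)\sum_{w\in S}\lambda_w=f(n,m,l,q),\]
where the last equality uses $\sum_{w\in S}\lambda_w=1$. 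In particular this holds for every $x\in A_{n,m,l,q}$ with $xy\in D_n$, which is the conclusion of (*). (I read the ``$c$'' in the displayed hypothesis $p_{acc,n}(wy:c)$ as the accepting state $q$, which the factorization makes the only sensible reading.)

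The argument is essentially a rank/affine-dimension bound, so the main obstacle is conceptual rather than computational: one must observe that the acceptance probability factors through the finite-dimensional prefix vector, and that because the target $f(n,m,l,q)$ is a \emph{constant} one must pass to the \emph{affine} hull rather than the linear span, so that a nonzero value of $f$ is handled correctly via the normalization $\sum_w\lambda_w=1$. The only delicate point is that a single test set $S$ must serve all suffixes $y$ simultaneously; this is guaranteed precisely because the affine-combination coefficients expressing $v_x$ in terms of $\{v_w\}_{w\in S}$ are determined by $A_{n,m,l,q}$ alone and are then transported across the pairing with the otherwise arbitrary functional $u_{y,q}$.
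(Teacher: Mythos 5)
Your proposal is correct, and its overall architecture matches the paper's proof: both factor $p_{acc,n}(xy:q)$ as a bilinear pairing of a prefix vector $\nu_{ini}M_{\triangleright x}$ with a suffix functional determined by $(y,q)$, both select a small subset $S\subseteq A_{n,m,l,q}$ of polynomial size via a dimension bound in $\real^{|Q_n|}$, and both transport the constraint from $S$ to all of $A_{n,m,l,q}$ by linearity. The one genuine difference is how the crucial normalization $\sum_{w\in S}\lambda_w=1$ is obtained. The paper takes a \emph{maximal linearly independent} subset of the prefix vectors (so $|S|\leq|Q_n|$) and then \emph{proves} that the expansion coefficients sum to $1$ by invoking stochasticity: each prefix vector is a probability distribution whose entries sum to $1$, and comparing entry sums on both sides of the expansion forces the coefficient sum to be $1$. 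You instead take a \emph{maximal affinely independent} subset (an affine basis, so $|S|\leq|Q_n|+1$), which makes the coefficients sum to $1$ by construction, with no appeal to stochasticity at all. Your route is marginally more general --- it works for arbitrary transfer matrices, which is why you can correctly observe that the ``probability $1/c$ at every step'' hypothesis is never used --- at the negligible cost of one extra element in $S$; the paper's route exploits the probabilistic structure to stay inside the linear-algebra framework of plain spans. Your reading of the typo $p_{acc,n}(wy:c)$ as $p_{acc,n}(wy:q)$ also agrees with the paper's intended statement, and your restriction of the expansion to $x\in A_{n,m,l,q}$ is in fact slightly more careful than the paper's phrasing, which asserts the expansion for all $x\in\Sigma^{m-l}$ even though it is only guaranteed (and only needed) on $A_{n,m,l,q}$.
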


\begin{proof}
Let $\MM=\{M_n\}_{n\in\nat}$ be any given family of polynomial-size 1pfa's over alphabet $\Sigma$, where each $M_n$ has the form $(Q_n,\Sigma,\{\triangleright,\triangleleft\}, \nu_{ini}^{(n)}, \{M^{(n)}_{\sigma}\}_{\sigma\in\check{\Sigma}}, Q_{acc,n},Q_{rej,n})$, making exactly $c$ probabilistic choices with equal probability at every step.
Notice that, for any $m,n\in\nat$, any input $x\in\Sigma^m\cap D_n$, and any accepting inner state $q$ of $M_n$, $p_{acc,n}(x:q)$ equals $\nu_{ini}^{(x)}M^{(n)}_{\triangleright x\triangleleft} \xi_{acc,n}(q)^T$.
We remark that, when $M_n$ reads the right endmarker $\triangleleft$ on any valid input $x$, each computation path of $M_n$ on $x$ is generated with probability exactly $c^{|x|+2}$. Since $\MM$ has polynomial size,
there is a polynomial $p$ satisfying $|Q_n|\leq p(n)$ for all $n\in\nat$.

We fix $n$ arbitrarily and take $m,l\in\nat^{+}$ with $l\leq m-1$ and $q\in Q_{acc,n}$. The notation $A_{n,m,l,q}$ denotes the set $\{x\in\Sigma^{m-l}\mid \exists z\in\Sigma^l\:[ xz\in D_n \wedge  p_{acc,n}(xz:q) = f(n,|xz|,|z|,q)] \}$.
We consider the set $V_{n,m,l,q} = \{\nu_{ini}^{(n)}M^{(n)}_{\triangleright x} \mid x\in A_{n,m,l,q}\}$ of $|Q_n|$-dimensional vectors.
Note that $|V_{n,m,l,q}|\leq |\Sigma|^{m-l}$.
We choose a maximal subset of linearly-independent vectors in $V_{n,m,l,q}$. Since all vectors in $V_{n,m,l,q}$ are $|Q_n|$-dimensional, such a subset is expressed as  $\{\nu_{ini}^{(n)}M_{\triangleright w} \mid w\in S\}$ for a certain subset $S$ of $\Sigma^{m-l}$ of size at most $|Q_n|$. It then follows that, for any $x$ with $|x|=m-l$, by the choice of $\{\nu_{ini}^{(n)}M_{\triangleright w} \mid w\in S\}$, there exists a series $\{\alpha^{(x)}_{w}\}_{w\in S}$ of real numbers satisfying that $\nu_{ini}^{(n)}M^{(n)}_{\triangleright x} = \sum_{w\in S} \alpha^{(n)}_{w} \nu_{ini}^{(n)}M^{(n)}_{\triangleright w}$.
Here, we claim that $\sum_{w\in S}\alpha^{(x)}_{w}=1$. To prove this claim, let $\nu_{ini}^{(n)}M^{(n)}_{\triangleright w} = (a_{wi})_{i\in[m]}$ and $\nu^{(n)}_{ini}M^{(n)}_{\triangleright x} = (b_{xi})_{i\in[m]}$. It then follows that $(b_{xi})_{i\in[m]} = (\sum_{w\in S}\alpha^{(x)}_{w}a_{wi})_{i\in[m]}$. From this follows $\sum_{i\in[m]}b_{xi} = \sum_{w\in S} (\sum_{i\in[m]} \alpha^{(x)}_{w}a_{wi}) = \sum_{w\in S}\alpha^{(x)}_{w} (\sum_{i\in[m]}a_{wi})$.
The stochasticity of $M^{(n)}_{\triangleright x}$ and $M^{(n)}_{\triangleright w}$ implies that $\sum_{i\in[m]}b_{xi} = \sum_{i\in[m]}a_{wi}=1$. Therefore, we conclude that $\sum_{w\in S}\alpha^{(x)}_{w}=1$.

We fix $y\in\Sigma^l$ and $q\in Q_{acc,n}$ arbitrarily and assume that $wy\in D_n$ and $p_{acc,n}(wy:q)=f(n,m,l,q)$ for all $w\in S$.
Take any string $x\in A_{n,m,l,q}$ satisfying $xy\in D_n$. It then follows that  $p_{acc,n}(xy:q) = \nu_{ini}^{(n)}M^{(n)}_{\triangleright xy \triangleleft}\xi_{acc,n}(q)^T = (\nu_{ini}^{(n)}M_{\triangleright x}) (M_{y\triangleleft}\xi_{acc,n}(q)^T)
= (\sum_{w\in S} \alpha^{(x)}_{w} \nu_{ini}^{(n)}M^{(n)}_{\triangleright w}) (M^{(n)}_{y\triangleleft}\xi_{acc,n}(q)^T)
= \sum_{w\in S}\alpha^{(x)}_{w} ( \nu_{ini}^{(n)} M^{(n)}_{\triangleright wy \triangleleft}\xi_{acc,n}(q)^T)
= \sum_{w\in S}\alpha^{(x)}_{w} p_{acc,n}(wy:q)$.
The last term is equal to $\sum_{w\in S}\alpha^{(x)}_{w} f(n,m,l,q) = (\sum_{w\in S}\alpha^{(x)}_{w}) f(n,m,l,q)$.
Since $\sum_{w\in S}\alpha^{(x)}_{w}=1$, we obtain $p_{acc,n}(xy:q) = f(n,m,l,q)$, as requested.
\end{proof}

In Lemma \ref{property-1pfa}, the use of the equality ``$=$'' in the above formula $p_{acc,n}(xz:q)=f(n,|xz|,|z|,q)$ is crucial because, if ``$=$'' is replaced by, e.g., ``$\leq$'', then we may not be able to prove a statement similar to (*) in the lemma.


Now, we provide the proof of Theorem \ref{u-fewu} using Lemma \ref{property-1pfa}.

\begin{proofof}{Theorem \ref{u-fewu}}
As an example family that falls in $\onereachu$, let us  recall the family $\LL_1 = \{(L_n^{(+)},L_n^{(-)})\}_{n\in\nat}$ introduced in the proof of Theorem \ref{co-oneu-vs-onen}.
We thus need to verify that $\LL_1\notin \oneu$.
Toward a contradiction, we assume the existence of a  family $\MM=\{M_n\}_{n\in\nat}$ of polynomial-size unambiguous 1nfa's that solves $\LL_1$.
By Lemma \ref{branching-normal-form}, we can assume that $M_n$ makes exactly $c$ nondeterministic choices at every step. Moreover, we can keep $M_n$ unambiguous on all valid instances.
Since $M_n$ is unambiguous, it is possible to assume that $|Q_{acc,n}|=1$ for all $n\in\nat$. Let $Q_{acc,n} = \{q_{acc}\}$.
For clarity reason, we denote $L_n^{(+)}\cup L_n^{(-)}$ by $D_n$.
By the definition of $(L_n^{(+)},L_n^{(-)})$, it follows that $x\in D_n$ implies $|x|=2n(n+1)-1$ since $x$ is of the form $r_1\# r_2$ for certain strings $r_1,r_2\in B_n(n,n)$.
Since $\MM$ has polynomial size, there is a certain polynomial $p'$ satisfying $|Q_n|\leq p'(n)$ for all $n\in\nat$.

As noted earlier, we can convert $M_n$ into a 1pfa of the form $(Q_n,\Sigma,\{\triangleright,\triangleleft\}, \nu_{ini}^{(n)}, \{M^{(n)}_{\sigma}\}_{\sigma\in\check{\Sigma}}, Q_{acc,n},Q_{rej,n})$ with the following acceptance criteria: for any $x$, $x\in L_n^{(+)}$ implies $p_{acc,n}(x)=1/c^{|x|+2}$ and $x\in L_n^{(-)}$ implies $p_{acc,n}(x)=0$. It is important to remark that this 1pfa $M_n$ has only one accepting inner state $q_{acc}$, and thus  $p_{acc,n}(x:q_{acc})$ matches $p_{acc,n}(x)$.

For any $n,m,l\in\nat$, we further define $f(n,m,l,q_{acc})$ to be $1/c^{m+2}$.
By Lemma \ref{property-1pfa}, there is a polynomial $p$ that satisfies the lemma. In what follows, we fix a number $n\in\nat$ that guarantees that  $p(n)+p'(n)+1 < n^n$. Let $m=2n(n+1)-1$ and $l=n(n+1)-1$.
Moreover, we fix $r_0= \dbraleft 1,1,\ldots,1 \dbraright_n$ in $B_n(n,n)$. Note that $|r_0|=l$ and $|r\#r_0|=m$ for all elements $r\in B_n(n,n)$.
The notation $A_{n,m,l,q_{acc}}$ denotes the set $\{r\# \mid r\in B_n(n,n),  \exists r'\in B_n(n,n)\:[ p_{acc,n}(r\# r')= f(n,|r\# r'|,|r'|,q_{acc})] \}$.
The lemma ensures the existence of a set $S\subseteq A_{n,m,l,q_{acc}}$ with $|S|\leq p(n)$ that satisfies the statement (*) of the lemma.  We choose a string $y\in B_n(n,n)$ satisfying $y\#\notin S\cup\{r_0\#\}$.
This is possible because $|S|\leq p(n)$ and $|B_n(n,n)|=n^n$.
Notice that $y\#$ is in $A_{n,m,l,q_{acc}}$
since $p_{acc,n}(y\# r_0)=1/c^{|y\# r_0|+2}$.
For any $r\#\in S$, since $r\neq y$, we obtain $r\#y\in L_n^{(+)}$, and thus $p_{acc,n}(r\# y) = f(n,m,l,q_{acc})$ follows.
The statement (*) of the lemma then concludes that $p_{acc,n}(r\# y) = f(n,m,l,q_{acc})$ for all $r\#\in A_{n,m,l,q_{acc}}$.
As a special case, we choose $y$ as $r$, and thus $p_{acc,n}(y\# y)=f(n,m,l,q_{acc})$ follows.
Therefore, $y\# y$ belongs to $L_n^{(+)}$, a contradiction with the fact that $y\# y\in L_n^{(-)}$.

This completes the proof.
\end{proofof}


Pavan \etalc~\cite{PTV12} demonstrated that $\mathrm{ReachFewL}$ is included in $\ul\cap\co\ul$. A simple analogy could suggest that $\onereachfew$ might be included in $\oneu\cap\co\oneu$. However, this analogy does not hold because we instantly obtain $\onereachfew\nsubseteq \oneu$ from Theorem \ref{u-fewu}.

\subsection{Class Separations Concerning 1FewU}

Let us recall the key lemma, Lemma \ref{property-1pfa}, of Section \ref{sec:separation-1U} and we intend to apply it to show another separation between $\onefewu$ and $\onen$.

\begin{theorem}\label{onefewu-vs-onen}
$\onen \nsubseteq \onefewu$.
\end{theorem}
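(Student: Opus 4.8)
The plan is to exhibit a single family $\LL_2$ that lies in $\onen$ but not in $\onefewu$, following the template of the proof of Theorem~\ref{u-fewu} but now accounting for machines with many accepting states. The crucial design principle is that $\LL_2$ must be solvable by a one-way nondeterministic machine whose nondeterminism is spent on \emph{path guessing} rather than on \emph{position alignment}: a polynomial-size 1nfa cannot remember an index drawn from an exponential range, since its state space would itself have to be exponential, so a family like ``$u\neq v$ for exponentially long tuples'' is \emph{not} the right candidate. Instead I would take $\LL_2$ to be a path-existence (layered reachability) family: each positive instance encodes a layered directed graph with polynomially many vertices per layer and polynomially many layers, and a string is positive exactly when there is a directed path from a designated source to a designated sink. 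Membership $\LL_2\in\onen$ is then immediate: a 1nfa keeps the current vertex in its finite control, reads the layers left to right, nondeterministically follows one outgoing edge per layer, and accepts upon reaching the sink. Positive instances with many internally disjoint routes produce exponentially many accepting computation paths, which is precisely the feature that an accept-few, weakly-unambiguous family cannot reproduce.

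For the lower bound $\LL_2\notin\onefewu$, I would argue by contradiction. Assuming a polynomial-size weakly-unambiguous family $\MM=\{M_n\}_{n\in\nat}$ solves $\LL_2$, I would first invoke Lemma~\ref{branching-normal-form} to put each $M_n$ into branching normal form with a fixed branching constant $c$, and then pass to the associated 1pfa exactly as in Section~\ref{sec:separation-1U}. The key consequence of weak-unambiguity is that, for every accepting inner state $q\in Q_{acc,n}$ and every valid input $x$, there is at most one computation path reaching the halting configuration in state $q$; hence the per-state acceptance probability satisfies $p_{acc,n}(x:q)\in\{0,\,1/c^{|x|+2}\}$. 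This is exactly the two-valued behaviour that drove the proof of Theorem~\ref{u-fewu}, except that here it holds \emph{per accepting state} rather than globally, and there may be polynomially many such states.

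Next I would apply Lemma~\ref{property-1pfa} separately to each accepting state $q$, taking $f(n,m,l,q)=1/c^{m+2}$, so that the equality $p_{acc,n}(\cdot:q)=f$ means precisely ``state $q$ is reached''. This yields, for each $q$, a basis set $S_q$ with $|S_q|\le p(n)$, and I would form the union $S=\bigcup_{q\in Q_{acc,n}}S_q$, which remains of polynomial size because $|Q_{acc,n}|\le p'(n)$. I would then choose an adversarial suffix $y$ (encoding the final layers of an instance) that avoids the polynomial-size obstruction attached to $S$, which is possible because the space of admissible suffixes is super-polynomial. The aim is to force a \emph{negative} instance, one with no source-to-sink path, to be accepted: for each positive instance built from $S$ and $y$, acceptance is witnessed by some accepting state, so by a pigeonhole over the polynomially many states at least one state $q^{\star}$ witnesses acceptance for all of $S$ relative to $y$; Lemma~\ref{property-1pfa} then propagates $p_{acc,n}(\cdot:q^{\star})=1/c^{m+2}$ to every prefix in $A_{n,m,l,q^{\star}}$ sharing the suffix $y$, including a prefix completing $y$ into a negative instance, contradicting correctness.

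The step I expect to be the main obstacle is exactly this last alignment across accepting states. In Theorem~\ref{u-fewu} there is a single accepting state, so the forced value instantly contradicts the definition of a negative instance; with polynomially many accepting states the witnessing state may vary from instance to instance, and one must guarantee that a \emph{single} state $q^{\star}$ is simultaneously forced on enough prefixes to reach a genuine negative instance. Handling this will require either a careful pigeonhole that fixes $q^{\star}$ before committing to $y$, or a Kolmogorov-complexity finish in the spirit of the proof of Theorem~\ref{co-oneu-vs-onen}: the uniqueness of the path to each accepting configuration makes the map from positive instances to (state, short-witness) pairs injective, so a high-complexity instance would compress to $O(\log)$ bits, a contradiction. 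I would keep both options open and select whichever gives the cleaner bookkeeping once the precise encoding of $\LL_2$ is fixed.
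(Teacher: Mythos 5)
Your overall strategy is the same machinery the paper uses for Theorem \ref{onefewu-vs-onen}: put the machines into branching normal form (Lemma \ref{branching-normal-form}), convert to 1pfa's, note that weak-unambiguity of a one-way machine halting at $\lhd$ forces $p_{acc,n}(x:q)\in\{0,1/c^{|x|+2}\}$ for each accepting state $q$, apply Lemma \ref{property-1pfa} per state with $f(n,m,l,q)=1/c^{m+2}$, and then propagate acceptance onto an instance that must be rejected. The paper merely instantiates this with a matrix-sum family (two $n\times n$ matrices over $[n]$ admitting row-selections with equal sums) instead of layered reachability; your candidate family and its $\onen$ upper bound are unproblematic. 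The genuine gap is exactly the step you flag, and your attempted repair of it is invalid as stated. Pigeonhole over the polynomially many accepting states does \emph{not} give a single state $q^{\star}$ witnessing acceptance on \emph{all} of $S$; it only gives a state witnessing acceptance on a $1/|Q_{acc,n}|$ fraction of $S$. Worse, what statement (*) of Lemma \ref{property-1pfa} requires as its hypothesis is that $p_{acc,n}(wy:q^{\star})=f(n,m,l,q^{\star})$ for \emph{every} $w$ in the basis $S_{q^{\star}}$ attached to that particular $q^{\star}$; if even one $w\in S_{q^{\star}}$ happens to be accepted only via some other state, the hypothesis fails and nothing propagates. Since the machine is adversarial, the witnessing state can vary with $w$, so no counting argument performed after committing to $y$ can align them. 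Your two fallback options are not proofs either: ``fixing $q^{\star}$ before committing to $y$'' is the unsolved alignment problem restated, and the Kolmogorov finish does not transfer, because in Theorem \ref{co-oneu-vs-onen} the compression exploits the diagonal (matching) structure of the instances $u\# u$ — a prefix is reconstructible from a crossing state precisely because each prefix has essentially one admissible partner — whereas in a reachability/intersection-type relation a positive instance is not determined by a (state, short-witness) pair.

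For comparison, the paper confronts this crux by fixing a single accepting state $q$ from the outset, defining $A_{n,m,l,q}$ and its basis $S$ relative to that $q$ only, and then choosing the suffix matrix $D$ so that (i) $D$ itself lies in $A^{(-)}_{n,m,l,q}$, so the diagonal instance $\dbraleft D \dbraright_n\#^2\dbraleft D \dbraright_n$ is reached by the propagation, and (ii) every pairing of an $S$-prefix with $D$ is a positive instance, from which it concludes that the \emph{fixed} state $q$ attains the value $f(n,m,l,q)$ on those pairings and hence on the diagonal. The inference from ``positive'' to ``state $q$ fires'' is precisely the alignment issue you identified, so you have correctly located the load-bearing point of the argument; but locating it is not closing it. As it stands, your proposal is an outline of the paper's proof with the decisive step missing, and it therefore does not yet establish $\onen\nsubseteq\onefewu$.
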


Given $n\in\nat^{+}$, let us consider an $n\times n$ matrix $D=(a_{ij})_{i,j\in[n]}$ whose $(i,j)$-entry $a_{ij}$ is in $[n]$. For each index $i\in[n]$, the notation $D(i)$ denotes the $i$th row $(a_{i1},a_{i2},\ldots,a_{in})$ of $D$. Associated with this row $D(i)$, we encode it into a unique string $\dbraleft D(i)\dbraright_n$ of the form $\dbraleft a_{i1}, a_{i2},  \ldots , a_{in} \dbraright_n$. Moreover, the encoding of $D$ is given as $\dbraleft D \dbraright_n = \dbraleft D(1) \dbraright_n \# \dbraleft D(2) \dbraright_n \# \cdots \# \dbraleft D(n) \dbraright_n$. Note that $|\dbraleft D(i) \dbraright_n| = n(n+1)-1$ for any $i\in[n]$ and $|\dbraleft D \dbraright_n| = n^2(n+1)-1$.
For convenience, $T_n$ denotes the set $\{D\mid D=(a_{ij})_{i,j\in[n]} \text{ is an $n\times n$ matrix with $a_{ij}\in[n]$ for any $i,j\in[n]$ }\}$.

\begin{proofof}{Theorem \ref{onefewu-vs-onen}}
Let $n\in\nat^{+}$, take a matrix $D=(a_{ij})_{i,j\in[n]}$ in $T_n$, and consider its encoding $\dbraleft D \dbraright_n$. Using a choice function $\sigma:[n]\to[n]$, we define $sum_D(\sigma)=\sum_{i=1}^{n}a_{i\sigma(i)}$.

For each index $n\in\nat$, we define $L_n^{(+)}= \{ \dbraleft D \dbraright_n \#^2 \dbraleft D' \dbraright_n \mid D,D'\in T_n, \exists \sigma,\sigma'\:[sum_D(\sigma) = sum_{D'}(\sigma')] \}$ and $L_n^{(-)}= \{ \dbraleft D \dbraright_n \#^2 \dbraleft D' \dbraright_n \mid D,D'\in T_n, \forall \sigma,\sigma'\:[sum_D(\sigma) \neq sum_{D'}(\sigma')] \}$. We then set $\LL_2$ to be the family $\{(L_n^{(+)},L_n^{(-)})\}_{n\in\nat}$.
Hereafter, we intend to prove (1) $\LL_2\in\onen$ and (2) $\LL_2\notin\onefewu$ to complete the proof.

(1) We first show that $\LL_2\in\onen$. We arbitrarily fix $n\in\nat$ and consider the following 1nfa $N_n$.
Let an input have the form $\dbraleft D\dbraright_n \#^2 \dbraleft D'\dbraright_n$ with $D,D'\in T_n$, where $D=(a_{ij})_{i,j\in[n]}$ and $D'=(a'_{ij})_{i,j\in[n]}$. While reading $\dbraleft D\dbraright_n$, for each $i\in[n]$, nondeterministically choose one entry, say, $a_{ij_i}$ ($j_i\in[n]$) from the $i$th row $D(i)$ and compute $\bar{a}= \sum_{i\in[m]}a_{ij_i}$. Remember $\bar{a}$ and cross $\#^2$ to $\dbraleft D'\dbraright_n$. Repeat the same procedure to compute $\bar{a}' = \sum_{i\in[n]}a'_{ik_i}$. Check whether $\bar{a}=\bar{a}'$. If so, enter an accepting inner state; otherwise, enter a rejecting inner state.
This machine $N_n$ is obviously a 1nfa and correctly solves $(L_n^{(+)},L_n^{(-)})$. Therefore, $\LL_2$ belongs to $\onen$.

(2) Next, we want to show that $\LL_2\notin \onefewu$ by way of contradiction.  Assume that $\LL_2\in\onefewu$ and take a family $\MM=\{M_n\}_{n\in\nat}$ of polynomial-size, accept-few,
weakly-unambiguous 1nfa's that correctly solves $\LL_2$.
It is possible to assume by Lemma \ref{branching-normal-form} that $M_n$ makes exactly $c$ nondeterministic choices at every step
for a certain fixed constant $c\geq1$.
As done in Section \ref{sec:separation-1U}, we can convert $M_n$ into a 1pfa with the following acceptance criteria: for any $x\in L_n^{(+)}\cup L_n^{(-)}$, $x\in L_n^{(+)}$ implies that there is an inner state $q\in Q_{acc,n}$ for which $p_{acc,n}(x:q)=1/c^{|x|+2}$ and $x\in L_n^{(-)}$ implies that, for all $q\in Q_{acc,n}$, $p_{acc,n}(x:q)=0$.
We then define $f(n,m,l,q)=1/c^{m+2}$.
Given any $D'\in T_n$, $C_n(D')$ denotes the set $\{D\in T_n\mid \forall \sigma,\sigma'\:[ sum_{D}(\sigma)\neq sum_{D'}(\sigma')]\}$.

To lead to a desired contradiction, we apply Lemma \ref{property-1pfa} and take a polynomial $p$ that satisfies the lemma.
Let $D_0$ denote the $n\times n$ matrix whose entries are all $1$. Clearly, $D_0$ is in $T_n$.
Let $m=n^2(n+1)-1$ and $l=n(n+1)-1$.
Given any $q\in Q_{acc,n}$, we introduce the set $A_{n,m,l,q} = \{ \dbraleft D \dbraright_n\#^2 \mid D\in T_n, \exists D'\in T_n\: [ p_{acc,n}(\dbraleft D \dbraright_n \#^2 \dbraleft D' \dbraright_n :q) = f(n,m,l,q)] \}$.
For convenience, we introduce the notation $A_{n,m,l,q}^{(-)}$ defined to be $\{D\in T_n\mid \dbraleft D \dbraright_n\#^2\in A_{n,m,l,q}\}$.
We then take a subset $S$ of $A_{n,m,l,q}$ of size at most $p(n)$ that satisfies the statement (*) of the lemma.
We choose an arbitrary matrix $D\in A^{(-)}_{n,m,l,q}$ outside of   $\{D''\in C_n(D') \mid D'=D_0 \text{ or } \dbraleft D' \dbraright_n\#^2 \in S\}$.
This $D$ satisfies the following property:
for any $D'\in T_n$, if $\dbraleft D' \dbraright_n\#^2$ is in $S$, then  $p_{acc,n}(\dbraleft D' \dbraright_n\#^2 \dbraleft D \dbraright_n:q) = f(n,m,l,q)$ since $D\notin C_n(D')$.
Therefore, by the conclusion of the statement (*), $p_{acc,n}(\dbraleft D' \dbraright_n\#^2 \dbraleft D \dbraright_n:q) = f(n,m,l,q)$ follows for all  $D' \in A^{(-)}_{n,m,l,q}$. In particular, we set $D' = D$ and obtain  $p_{acc,n}(\dbraleft D \dbraright_n\#^2 \dbraleft D \dbraright_n:q) = f(n,m,l,q)$.
This implies that $\dbraleft D \dbraright_n\#^2 \dbraleft D \dbraright_n\in L_n^{(+)}$. Obviously, this is in contradiction to the definition of $L_n^{(+)}$.
\end{proofof}

We can conclude the following, rather weak, statement from Theorem \ref{onefewu-vs-onen}.

\begin{corollary}
Either $\onefew\neq\onen$ or $\onefewu\neq\onefew$ (or both) is true.
\end{corollary}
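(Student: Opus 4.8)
The plan is to derive this disjunction as a purely logical consequence of the strict non-inclusion $\onen\nsubseteq\onefewu$ established in Theorem \ref{onefewu-vs-onen}, combined with the inclusion chain $\onefewu\subseteq\onefew\subseteq\onen$ supplied by Lemma \ref{general-inclusions}(2). The key observation is that the two candidate equalities $\onefew=\onen$ and $\onefewu=\onefew$, if both held simultaneously, would chain together to force $\onefewu=\onen$, which is exactly what the theorem forbids.

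Concretely, I would argue by contradiction. Suppose the corollary fails, so that \emph{neither} disjunct holds; since each disjunct is an inequality of classes, the failure of both means $\onefew=\onen$ \emph{and} $\onefewu=\onefew$. Transitivity of equality then yields $\onefewu=\onen$. In particular this gives the inclusion $\onen\subseteq\onefewu$. But Theorem \ref{onefewu-vs-onen} asserts $\onen\nsubseteq\onefewu$, a direct contradiction. Hence at least one of the two inequalities $\onefew\neq\onen$ or $\onefewu\neq\onefew$ must be true, which is the desired statement.

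There is no genuine obstacle here: the entire mathematical content is carried by Theorem \ref{onefewu-vs-onen}, and the corollary merely repackages that non-inclusion in terms of where along the chain $\onefewu\subseteq\onefew\subseteq\onen$ a strict separation must occur. The only point worth stating carefully is that the theorem provides a \emph{strict} non-inclusion (not merely $\onen\neq\onefewu$), so that it rules out the collapse of the whole segment even though it cannot, on its own, pin down which of the two links is the proper one. This is precisely why the conclusion is phrased as a disjunction rather than a separation of a specific pair of classes.
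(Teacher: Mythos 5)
Your proof is correct and is exactly the argument the paper intends: the corollary is stated as an immediate consequence of Theorem \ref{onefewu-vs-onen} together with the chain $\onefewu\subseteq\onefew\subseteq\onen$ from Lemma \ref{general-inclusions}(2), so that a simultaneous collapse of both links would force $\onen\subseteq\onefewu$, contradicting the theorem. One small remark: your closing caveat is slightly overstated, since even the bare inequality $\onefewu\neq\onen$ (not the full non-inclusion) would already suffice, because the two hypothesized equalities chain to $\onefewu=\onen$ outright.
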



\section{Complexity Classes Defined by Two-Way Head Moves}\label{sec:two-way-head}

In Section \ref{sec:one-way-head}, we have introduced six nonuniform state complexity classes situated in between $\oned$ and $\onen$. By replacing underlying one-way finite automata defining those complexity classes with two-way finite automata, we naturally obtain  the corresponding six nonuniform state complexity classes: $\twou$, $\tworeachu$, $\tworeachfewu$, $\tworeachfew$, $\twofewu$, and $\twofew$.
Similar to Lemma \ref{general-inclusions} for the case of one-way head moves, these classes satisfy the following clear inclusion relationships:
(1) $\twod \subseteq \tworeachu  \subseteq \tworeachfewu \subseteq \tworeachfew \subseteq \twofew$ and
(2) $\twod\subseteq \twou\subseteq \twofewu \subseteq \twofew \subseteq \twon$.

\subsection{Cases of Logarithmic and Polynomial Ceilings}\label{sec:bounded-ceilings}

A family $\LL=\{(L_n^{(+)},L_n^{(-)})\}_{n\in\nat}$ of promise problems is said to \emph{have a polynomial ceiling} if there exists a polynomial $p$ for which $\Sigma_n \subseteq \Sigma^{\leq p(n)}$ holds for all $n\in\nat$, where $\Sigma_n$ denotes the set $L_n^{(+)}\cup L_n^{(-)}$. Similarly, we can define the notion of \emph{exponential ceiling} (resp., \emph{logarithmic ceiling}) simply by replacing the term ``polynomial'' with ``exponential'' (resp., ``logarithmic function''). More generally, given a function $g:\nat\to\nat$, we say that $\LL$ has a \emph{$g(n)$-ceiling} if $\Sigma_n\subseteq \Sigma^{\leq g(n)}$ holds for any number $n\in\nat$.

For two functions $f$ and $g$ on $\nat$ (i.e., from $\nat$ to $\nat$), we say that $g$ \emph{majorizes} $f$ (denoted $g\geq f$) if $g(n)\geq f(n)$ holds for all $n\in\nat$. Consider a family $\LL=\{(L_n^{(+)},L_n^{(-)})\}_{n\in\nat}$ of promise problems having an $f(n)$-ceiling.  It then follows that $L_n^{(+)} =\{x\in L_n^{(+)}\mid |x|\leq f(n)\}$ and $L_n^{(-)}=\{x\in L_n^{(-)}\mid |x|\leq f(n)\}$.
For any function $g$ on $\nat$, if $g$ majorizes $f$, then we obtain  $\Sigma_n\subseteq \Sigma^{\leq f(n)} \subseteq \Sigma^{\leq g(n)}$, and thus $\LL$ has a $g(n)$-ceiling as well. Hence, for instance, any polynomial ceiling family of promise problems has an exponential ceiling as well.

Given a function $f$ on $\nat$, the notation $\twon/f(n)$ stands for the subclass of $\twon$, consisting of all families of promise problems having $f(n)$-ceilings. Notice that $g\geq f$ implies $\twon/f(n) \subseteq \twon/g(n)$.
For a set $\FF$ of functions on $\nat$, $\twon/\FF$ denotes the union of all $\twon/f(n)$ for any $f\in\FF$. In a similar way, we obtain $\twod/\FF$ from $\twod$.
These terminologies are also applied to $\twofewu$, $\tworeachu$, $\tworeachfew$, and $\tworeachfewu$.

For convenience, we abbreviate the sets of logarithmic functions, polynomials, exponentials, and sub-exponentials, as ``$\log$'', ``$\poly$'', ``$\mathrm{exp}$'', and ``$\mathrm{subexp}$'', respectively.
This subsection will concentrate on the case where $\FF$ is one of those sets of functions.
We start with an easy case of $\FF=\log$.

\begin{proposition}
$\twon/\slog = \twod/\slog$. More strongly, $\twon/\slog \subseteq \oned$.
\end{proposition}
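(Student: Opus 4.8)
The plan is to prove the stronger inclusion $\twon/\slog \subseteq \oned$ directly, since the equality then follows at once. Every family in $\twon/\slog$ carries a logarithmic ceiling, so the inclusion into $\oned$ in fact lands in $\oned/\slog$; because every 1dfa is a 2dfa we have $\oned/\slog\subseteq\twod/\slog$, giving $\twon/\slog\subseteq\twod/\slog$. The reverse inclusion $\twod/\slog\subseteq\twon/\slog$ is immediate, as every 2dfa is already a 2nfa. Hence all the work is in the inclusion into $\oned$, and I would prove it for an \emph{arbitrary} logarithmic-ceiling family, making no use of the hypothesis that $\LL$ lies in $\twon$.

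The key observation is that a logarithmic ceiling forces the set of valid instances to be polynomially small. If $\LL=\{(L_n^{(+)},L_n^{(-)})\}_{n\in\nat}$ has an $f(n)$-ceiling with $f(n)=a\log n+b$, then every valid string for index $n$ has length at most $f(n)=\order{\log n}$, so the number of strings that could possibly be valid is at most $\sum_{i=0}^{\lfloor f(n)\rfloor}|\Sigma|^i=\order{|\Sigma|^{f(n)}}=\order{n^{a\log|\Sigma|}}$, which is polynomial in $n$. I would then build, for each index $n$, a 1dfa $D_n$ whose non-sink inner states are in bijection with the prefixes $w\in\Sigma^{\le f(n)}$, with the state reached after reading $\rhd$ corresponding to $\varepsilon$. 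On reading a symbol $\sigma\in\Sigma$ in the state for $w$, the machine advances its head and passes to the state for $w\sigma$ when $|w\sigma|\le f(n)$, and to a sink otherwise. Since $D_n$ makes no $\varepsilon$-move and the head only moves right, after consuming an input $x$ with $|x|\le f(n)$ the current state records $x$ exactly. I would then hardcode the decision into the transition on the right endmarker $\lhd$: from the state for $x$, enter an accepting state if $x\in L_n^{(+)}$ and a rejecting state if $x\in L_n^{(-)}$ (the behaviour on the sink, and on strings outside $L_n^{(+)}\cup L_n^{(-)}$, is irrelevant and may be set to reject so that $D_n$ always halts). By the counting above $sc(D_n)$ is polynomial in $n$, so $\{D_n\}_{n\in\nat}$ is a polynomial-size family of 1dfa's solving $\LL$ on all valid instances, whence $\LL\in\oned$.

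The one point that deserves care, and is the closest thing to an obstacle, is that this is an essentially \emph{nonuniform} argument: the ``answer table'' baked into the $\lhd$-transition of $D_n$ is simply the restriction of $\LL$ to index $n$, and each $D_n$ is permitted to be constructed with full knowledge of $(L_n^{(+)},L_n^{(-)})$. No simulation of the given 2nfa family is required; it is the short input length together with nonuniformity that collapses all of $\twon/\slog$ into $\oned$. What remains is only the routine verification that the number of prefixes is polynomially bounded and that the construction respects the real-time, one-way discipline with the two endmarkers, both of which I expect to be straightforward.
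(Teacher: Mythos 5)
Your proposal is correct and follows essentially the same route as the paper's own proof: both exploit the fact that a logarithmic ceiling leaves only polynomially many valid instances, so a polynomial-size 1dfa can track the entire input read so far in its state (your prefix trie is exactly what the paper's enumeration-based construction implicitly does) and look up the hardcoded answer at the right endmarker, with the equality $\twon/\slog=\twod/\slog$ then following from $\oned/\slog\subseteq\twod/\slog$ and the trivial reverse inclusion. Your explicit remark that the argument never uses the given 2nfa family is also faithful to the paper's proof, which likewise relies purely on nonuniformity and the short input length.
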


\begin{proof}
Let us prove that $\twon/\slog\subseteq \oned$ because this inclusion implies $\twon/\slog \subseteq \oned/\slog \subseteq \twod/\slog$.
Toward $\twon/\slog \subseteq \oned$, let  $\LL=\{(L_n^{(+)},L_n^{(-)})\}_{n\in\nat}$ be any family of promise problems in $\twon/\slog$. Take a function $\ell(x)$ of the form $a\log x+b$ for two constants $a,b\geq0$.  Notice that $\Sigma_n\subseteq \Sigma^{\leq \ell(n)}$, where $\Sigma_n=L_n^{(+)}\cup L_n^{(-)}$.   and $|\Sigma^{\ell(n)}|=O(n^a)$.
It then follows that $|\Sigma^{\leq \ell(n)}|=n^{O(1)}$.

Fix $n\in\nat$ arbitrarily. We enumerate all elements in $\Sigma^{\leq \ell(n)}$ as $x^{(n)}_0,x^{(n)}_1,x^{(n)}_2,\ldots$
according to the lexicographic order.
We define $Q_n$ to be composed of all strings of the form $\track{x_i^{(n)}}{a_i}$, where (i) $a_i=+1$ if $x^{(n)}_i\in L_n^{(+)}$, (ii)  $a_i=-1$ if $x^{(n)}_i\in L_n^{(-)}$, and (iii) $a_i=0$ otherwise.
Let us design a 1dfa $M_n$ to read an entire input, say, $x$, determine $i$ satisfying $x=x^{(n)}_i$, and accept (resp., reject) $x$ if $a_i=+1$ (resp., $a_i=-1$). The family $\MM=\{M_n\}_{n\in\nat}$ clearly has polynomial size. Since $\MM$ solves $\LL$, $\LL$ belongs to $\oned$.
\end{proof}

Next, we look into the case of polynomial ceiling. As our main result of this work, we prove the collapse of $\twon/\poly$ down to $\twou/\poly$. Our proof is motivated by a simulation technique of \cite{RA00}.

\begin{theorem}\label{twou-eq-twon}
$\twou/\poly = \twofewu/\poly = \twofew/\poly = \twon/\poly$.
\end{theorem}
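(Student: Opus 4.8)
The plan is to establish the chain of equalities $\twou/\poly = \twofewu/\poly = \twofew/\poly = \twon/\poly$ by proving the single nontrivial inclusion $\twon/\poly \subseteq \twou/\poly$; the reverse inclusions $\twou/\poly \subseteq \twofewu/\poly \subseteq \twofew/\poly \subseteq \twon/\poly$ are immediate from the two-way analogue of Lemma \ref{general-inclusions} already recorded at the start of this section. The key idea, following the simulation technique of Reinhardt and Allender \cite{RA00}, is to exploit the close connection between polynomial-ceiling two-way automata families and logarithmic-space advised computation: a polynomial-size $\twon$ family restricted to instances of polynomially-bounded length behaves like an $\nl/\poly$ computation, and the Reinhardt--Allender result $\nl/\poly = \ul/\poly$ is exactly what converts nondeterminism into unambiguity in that setting.

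First I would make precise the translation between the automata model and the advised space-bounded model. Given $\LL \in \twon/\poly$ solved by a polynomial-size $2\mathrm{nfa}$ family $\MM = \{M_n\}_{n\in\nat}$ with polynomial ceiling $p$, I would observe that on any valid instance $x$ with $|x| \le p(n)$ the computation graph of $M_n$ on $x$ has vertex set $Q_n \times [0,|x|+1]_{\integer}$, of polynomial size in $n$. Acceptance of $x$ is precisely reachability of an accepting configuration from the initial configuration in this polynomial-size directed graph. I would then encode the index $n$ (equivalently the transition table of $M_n$) as advice of polynomial length and feed $x$ as the genuine input, so that a nondeterministic logarithmic-space machine with this advice decides the membership problem; this places $\LL$ into (a promise-problem version of) $\nl/\poly$.

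The core step is to invoke the Reinhardt--Allender equivalence $\nl/\poly = \ul/\poly$, which provides an unambiguous log-space advised machine accepting the same language. The technical heart of their method is a min-weight/isolation argument that, using advice, assigns weights to the edges of the configuration graph so that every reachable configuration has a \emph{unique} minimum-weight path from the start; this is exactly the mechanism that makes the resulting machine reach-unambiguous and, a fortiori, unambiguous. I would then translate this unambiguous log-space advised machine back into a polynomial-size $2\mathrm{ufa}$ family: the unique-min-weight path structure guarantees at most one accepting computation path on each valid instance, and the weights together with partial sums can be tracked within polynomially many inner states since the input length is polynomially bounded by the ceiling. This yields $\LL \in \twou/\poly$.

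The main obstacle I anticipate is the back-and-forth faithfulness of the two translations, particularly ensuring that the weighting/isolation scheme of \cite{RA00} can be realized by a genuine two-way finite automaton of \emph{polynomial} state complexity rather than merely by a log-space machine. The subtlety is that a $2\mathrm{ufa}$ cannot store a counter of unbounded size, so the weights must be bounded by a polynomial in $n$ (which is fine, since the graph has polynomially many edges) and the arithmetic for comparing path weights must be carried out within the finite control while the head sweeps the tape. Here the polynomial-ceiling hypothesis is essential: because $|x| \le p(n)$, the number of configurations and hence the relevant weight range stay polynomial in $n$, so the comparison logic fits into the state set. I would need to verify carefully that the unambiguity is preserved through this encoding and that the simulation respects the promise-problem semantics on valid instances only, with no requirement on invalid ones.
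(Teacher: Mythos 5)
Your overall route coincides with the paper's: reduce everything to the single inclusion $\twon/\poly \subseteq \twou/\poly$, pass to logarithmic-space advised computation, invoke Reinhardt--Allender, and translate back into automata. However, there is a genuine gap at your first translation step, and it is exactly the step where the paper invests almost all of its technical effort. You propose to ``encode the index $n$ (equivalently the transition table of $M_n$) as advice of polynomial length and feed $x$ as the genuine input,'' thereby placing $\LL$ in a promise version of $\nl/\poly$. This is ill-defined: Karp--Lipton advice is a function of the input \emph{length} $|x|$, not of the family index $n$, and in a nonuniform family these two parameters are incomparable. Nothing prevents one and the same string $x$ from being a valid instance of several members of the family with conflicting classifications, say $x\in L_m^{(+)}$ and $x\in L_n^{(-)}$ for $m\neq n$; then ``the membership problem'' on input $x$ has no well-defined answer, and the advice $h(|x|)$ cannot determine which transition table to supply. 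This is precisely why the paper does not work with $\nl/\poly$ at all but with the parameterized classes $\para\nl/\poly\cap\PHSP$ and $\para\ul/\poly$: it tags the index into the input (forming $1^n\# x$), uses a log-space size parameter $m'$ with $m'(1^n\# x)=n$, measures time, space, and advice length in terms of $m'$ rather than $|x|$, and needs the polynomial-honesty condition $\PHSP$ (guaranteed by the polynomial ceiling) to keep the two parameters polynomially related. Claims \ref{NL-UL-to-2N-2U}--\ref{tie-parameter-nonuniform} of the paper's proof constitute exactly this bridge, which your proposal assumes away.

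A second, related gap: Reinhardt--Allender prove $\nl\subseteq\ul/\poly$ for \emph{languages}, whereas you invoke ``$\nl/\poly=\ul/\poly$'' as a black box for your informal promise/family class. The paper must do real work here too (Claim \ref{NL-to-para-NL}(1)): given a parameterized problem $(L,m)$, it pads to the genuine language $K'=\{(x,1^t)\mid x\in L,\ m(x)\leq t\}$, shows $K'\in\nl/\poly$, applies $\nl\subseteq\ul/\poly$ to $K'$, and converts back, using polynomial honesty to recover the space bound $O(\log m(x))$. Without some such padding or tagging argument, your appeal to Reinhardt--Allender does not type-check on the objects you actually have. By contrast, the obstacle you flag on the automata side is largely a non-issue: once one has an unambiguous log-space machine with polynomially bounded advice and space $O(\log n)$, one hard-wires the advice into the finite control and takes the inner states of the unambiguous 2nfa to be tuples (machine state, work-tape content, work-head position, advice-head position), which is polynomial in $n$ under the polynomial ceiling; there is no need to re-implement the isolation weights inside the automaton. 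So the missing idea is not the back-translation but the correspondence itself: the $1^n\# x$ tagging and the index-aware (parameterized) formulation that makes ``advice per index $n$'' meaningful.
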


We stress that Theorem \ref{twou-eq-twon} does not seem to imply $\twon/\mathrm{subexp} = \twou/\mathrm{subexp}$, $\twon/\sexp =\twou/\sexp$, and $\twon = \twou$. Hence, those equalities are still unknown to hold.

For two other nonuniform state complexity classes, $\tworeachfewu$ and $\tworeachfew$, we obtain the following relationships, depicted in Figure \ref{fig:class-hierarchy}.

\begin{corollary}
$\twod/\poly \subseteq \tworeachu/\poly \subseteq \tworeachfewu/\poly \subseteq \tworeachfew/\poly \subseteq \twou/\poly$.
\end{corollary}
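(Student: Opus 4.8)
The plan is to establish the displayed chain of inclusions by combining the general two-way inclusion relationships already listed at the start of Section~\ref{sec:two-way-head} with the collapse result of Theorem~\ref{twou-eq-twon}. Recall that the two-way analogues satisfy $\twod \subseteq \tworeachu \subseteq \tworeachfewu \subseteq \tworeachfew \subseteq \twofew$ exactly as in Lemma~\ref{general-inclusions}(1); these inclusions are proved by the identical argument (reach-unambiguity implies both weak-unambiguity and reach-fewness), and they are inherited when we restrict every class to families having a polynomial ceiling, since the ceiling condition is a property of the promise problem family $\LL$ and not of the solving machines. Hence the first four inclusions of the corollary, namely $\twod/\poly \subseteq \tworeachu/\poly \subseteq \tworeachfewu/\poly \subseteq \tworeachfew/\poly$, follow immediately by intersecting each general inclusion with the polynomial-ceiling restriction.

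The only genuinely new inclusion is the rightmost one, $\tworeachfew/\poly \subseteq \twou/\poly$. First I would observe that a reach-few family is in particular an accept-few family (bounding all computation paths to each configuration bounds the accepting paths), so $\tworeachfew \subseteq \twofew$ and therefore $\tworeachfew/\poly \subseteq \twofew/\poly$. The plan is then to invoke Theorem~\ref{twou-eq-twon}, which asserts $\twofew/\poly = \twou/\poly$. Chaining these gives $\tworeachfew/\poly \subseteq \twofew/\poly = \twou/\poly$, which closes the chain.

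The main obstacle, and the reason this corollary is not entirely trivial, is ensuring that the polynomial-ceiling restriction is respected uniformly throughout. When I write $\tworeachfew/\poly \subseteq \twofew/\poly$, the witnessing family $\LL$ must carry the same polynomial ceiling on both sides; this is automatic because the containment $\tworeachfew \subseteq \twofew$ does not alter $\LL$ at all—it only changes the solving machine family $\MM$—so whatever polynomial $p$ bounds $\Sigma_n \subseteq \Sigma^{\leq p(n)}$ for $\LL$ in $\tworeachfew/\poly$ continues to witness membership in $\twofew/\poly$. The subsequent application of Theorem~\ref{twou-eq-twon} likewise preserves the ceiling because that theorem is stated for the $/\poly$-restricted classes directly. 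I would therefore write out the argument as a short two-line deduction, emphasizing that no ceiling is enlarged at any step, and I expect no hidden subtlety beyond this bookkeeping.
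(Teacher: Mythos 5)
Your proposal is correct and matches the paper's intended argument: the paper states this as an immediate corollary of the general two-way inclusions $\twod \subseteq \tworeachu \subseteq \tworeachfewu \subseteq \tworeachfew \subseteq \twofew$ (restricted to polynomial ceilings, which is a property of $\LL$ alone) combined with the collapse $\twofew/\poly = \twou/\poly$ from Theorem \ref{twou-eq-twon}. Your only deviation is re-deriving $\tworeachfew \subseteq \twofew$ from reach-fewness implying accept-fewness, which is harmless since in the paper this inclusion is already definitional ($\tworeachfew$ is defined as the subclass of $\twofew$ whose machines are additionally reach-few).
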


In what follows, we intend to provide the proof of Theorem \ref{twou-eq-twon}.

\vs{-2}
\begin{proofof}{Theorem \ref{twou-eq-twon}}
Because $\twou/\poly \subseteq \twofewu/\poly \subseteq \twofew/\poly \subseteq \twon/\poly$, it suffices to prove the inclusion $\twon/\poly \subseteq \twou/\poly$, which is equivalent to $\twon/\poly \subseteq \twou$. Hereafter, we intend to prove that $\twon/\poly \subseteq \twou$.

It is shown in \cite{Kap14} (re-proven in \cite{Yam22} by a different argument) that $\twon/\poly \subseteq \twod$ iff $\nl\subseteq \dl/\poly$.
In a similar vein, we claim the following.

\begin{yclaim}\label{NL-UL-to-2N-2U}
$\nl\subseteq \ul/\poly$ implies $\twon/\poly \subseteq \twou$.
\end{yclaim}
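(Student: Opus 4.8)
The plan is to establish Claim \ref{NL-UL-to-2N-2U} by exploiting the tight correspondence between nonuniform families of polynomial-size two-way finite automata with polynomial ceilings and logarithmic-space advised Turing machine computations, following the simulation strategy of Reinhardt and Allender \cite{RA00}. First I would recall the known equivalence (from \cite{Kap14,Yam22}) that underlies the statement $\twon/\poly \subseteq \twod$ iff $\nl \subseteq \dl/\poly$, and observe that its proof actually furnishes a translation in both directions between a family $\MM = \{M_n\}_{n\in\nat}$ of polynomial-size 2nfa's with a polynomial ceiling $p$ and an $\nl$-machine equipped with polynomial advice. The idea is that, for inputs $x$ with $|x| \leq p(n)$, the computation graph of $M_n$ on $x$ has only polynomially many configurations (of the form $(q,i)$ with $q \in Q_n$ and $i \in [0,|x|+1]_{\integer}$), so deciding whether $M_n$ accepts $x$ reduces to a reachability question on a graph of polynomial size, which is exactly an $\nl$ computation once the description of $M_n$ is supplied as advice depending only on $n$.

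The heart of the argument is then to invoke the hypothesis $\nl \subseteq \ul/\poly$ to replace the nondeterministic reachability computation by an \emph{unambiguous} one. Concretely, I would start from $\LL \in \twon/\poly$, fix the solving family $\MM$ of polynomial-size 2nfa's with ceiling $p$, and encode the acceptance problem ``does $M_n$ accept $x$?'' as a language $A$ in $\nl$, where the relevant instances package together the index $n$, the input $x$, and enough information to reconstruct the configuration graph; the index-dependent data (the transition table of $M_n$) is delivered via polynomial-size advice. Applying $\nl \subseteq \ul/\poly$ yields a $\ul$-machine $U$ with polynomial advice that decides $A$ unambiguously. The final step is to convert $U$ back into a nonuniform family $\{M'_n\}_{n\in\nat}$ of polynomial-size 2nfa's that is \emph{unambiguous} on all valid instances, hard-wiring both the advice for $U$ and the advice encoding $M_n$ into the finite control of $M'_n$ (this is legitimate precisely because the ceiling is polynomial, so all advice strings have length polynomial in $n$ and can be absorbed into polynomially many inner states). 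Because $U$ has at most one accepting computation path, so does $M'_n$, giving $\LL \in \twou$.

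The main obstacle I anticipate is the faithful transfer of the \emph{unambiguity} property across the two translations. It is routine to simulate 2nfa-acceptance by $\nl$-reachability and to simulate a log-space advised machine by a polynomial-size 2nfa, but I must verify that the unambiguity of the $\ul$-machine $U$ — a semantic promise about its accepting computation paths — is preserved when $U$ is unrolled into a two-way finite automaton on inputs of bounded length. The delicate point is that a single step of $U$ manipulates $O(\log)$ bits of worktape together with head positions, and the standard encoding of such a configuration into the inner state of a 2nfa must be injective enough that distinct computation paths of $M'_n$ correspond bijectively to distinct computation paths of $U$; otherwise spurious merging or splitting of paths could destroy unambiguity. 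I would handle this by ensuring the simulation is step-for-step faithful on valid instances and by noting that the two-way head of $M'_n$ need only re-read the input to recover tape contents, so no nondeterministic branching beyond that of $U$ is introduced. Once this correspondence is checked, the inclusion $\twon/\poly \subseteq \twou$ follows, and combined with the trivial inclusions $\twou/\poly \subseteq \twofewu/\poly \subseteq \twofew/\poly \subseteq \twon/\poly$ together with the known fact $\nl \subseteq \ul/\poly$ from \cite{RA00}, this yields Theorem \ref{twou-eq-twon}.
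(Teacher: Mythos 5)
Your proposal is sound and reaches the right conclusion, but it takes a different route than the paper: you flatten the argument into a single direct chain, whereas the paper interposes a layer of \emph{parameterized decision problems}. Concretely, the paper proves Claim \ref{NL-UL-to-2N-2U} by composing two implications (Claim \ref{NL-to-para-NL}): first, $\nl\subseteq\ul/\poly$ lifts to $\para\nl/\poly\cap\PHSP\subseteq\para\ul/\poly$; second, that parameterized inclusion yields $\twon/\poly\subseteq\twou$, using the ``induced from'' correspondence between automata families and parameterized problems (Claim \ref{tie-parameter-nonuniform}, with the $1^n\#x$ encoding and a polynomially honest size parameter). Your version performs exactly the same three translations---the polynomial ceiling makes the configuration graph of $M_n$ on $x$ polynomially large, so acceptance becomes advised log-space reachability; the hypothesis upgrades nondeterminism to unambiguity; the advice (transition tables plus the $\ul$-machine's own advice) is re-absorbed into polynomially many inner states by a step-faithful simulation that preserves unambiguity---but without the parameterized formalism. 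What the paper's detour buys is reusability and uniformity: Claim \ref{tie-parameter-nonuniform} is stated once for both pairs $(\nl,\twon)$ and $(\ul,\twou)$ and plugs into the machinery of \cite{Yam19a,Yam21,Yam22}; your route is more elementary and self-contained but re-does that bookkeeping inline. Two details to tighten in your write-up: (i) since the transition table of $M_n$ is supplied as advice, your language $A$ lies in $\nl/\poly$ rather than $\nl$, so you must invoke the standard lifting $\nl\subseteq\ul/\poly\Rightarrow\nl/\poly\subseteq\ul/\poly$ (the paper notes this explicitly); (ii) advice is indexed by input length, and the length of $1^n\#x$ does not determine $n$, so the advice string for a given length $m$ should pack (suitably padded) descriptions of all machines $M_k$ with $k\leq m$---routine given the polynomial ceiling, but this is precisely the bookkeeping that the paper's size-parameter formalism is designed to absorb.
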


Assume that Claim \ref{NL-UL-to-2N-2U} is true. Since Reinhardt and Allender \cite{RA00} proved the inclusion $\nl\subseteq \ul/\poly$,
we instantly obtain $\twon/\poly \subseteq \twou$.
Therefore, we aim at proving Claim \ref{NL-UL-to-2N-2U} by adopting the proof technique of \cite{Yam19a,Yam21,Yam22} based on parameterized decision problems. Yamakami \cite{Yam19a} first discovered a close connection between polynomial-size finite automata families and parameterized decision problems. Now, let us introduce the necessary notion of parameterized decision problems.

A \emph{parameterized decision problem} over alphabet $\Sigma$ is a pair $(L,m)$ of a language $L$ over $\Sigma$ and a size parameter $m$, which is a function from $\Sigma^*$ to $\nat$.
A \emph{log-space size parameter} $m$ must satisfy the additional condition that there exists a log-space deterministic Turing machine (DTM) $M$ for which $M$ on any input $x$ produces $1^{m(x)}$ on a write-once\footnote{An output tape is \emph{write once} if (1) its tape head never moves back to the left and (2) whenever the tape head writes a non-blank symbol onto the put tape, it must move to the right next blank cell.} output tape.
An \emph{advice function} $h$ is a function from $\nat$ to $\Gamma^*$ for a certain alphabet $\Gamma$ and it is said to be \emph{polynomially bounded} if $|h(n)|\leq p(n)$ holds for all $n\in\nat$ for an appropriately chosen polynomial $p$. A parameterized decision problem $(L,m)$ belongs to $\PHSP$ if $m$ is \emph{polynomially honest} (that is, there exists a polynomial $q$ such that $|x|\leq q(m(x))$ for all strings $x$). The notation $\para\nl/\poly$ denotes the class composed of all parameterized decision problems $(L,m)$ with log-space size parameters $m$, such that, for each $(L,m)$, there exists a nondeterministic Turing machine (NTM) $M$ equipped with read-only input and advice tapes and a polynomially-bounded advice function $h$ for which $M$ solves $L$ in time $m(x)^{O(1)}$ using space $O(\log{m(x)})$ with free access to advice strings $h(|x|)$ written on the advice tape, where $x$ is a ``symbolic'' input. Similarly, $\para\ul/\poly$ is defined using ``unambiguous'' NTMs (or UTM, for short) instead of NTMs. The reader refers to \cite{Yam17a,Yam19a,Yam22} for more information on parameterized decision problems.

Claim \ref{NL-UL-to-2N-2U} follows immediately from two assertions in the following claim. Given a language $L$ over alphabet $\Sigma$, its complement $\Sigma^*-L$ is succinctly denoted $\overline{L}$.

\begin{yclaim}\label{NL-to-para-NL}
\renewcommand{\labelitemi}{$\circ$}
\begin{enumerate}
  \setlength{\topsep}{-2mm}%
  \setlength{\itemsep}{1mm}%
  \setlength{\parskip}{0cm}%

\item $\nl \subseteq \ul/\poly$ implies $\para\nl/\poly \cap \PHSP \subseteq \para\ul/\poly$.

\item $\para\nl/\poly \cap \PHSP \subseteq \para\ul/\poly$ implies $\twon/\poly \subseteq \twou$.
\end{enumerate}
\end{yclaim}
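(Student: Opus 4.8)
The plan is to establish the two implications of Claim \ref{NL-to-para-NL} separately, since together they yield Claim \ref{NL-UL-to-2N-2U}. The first implication is essentially a \emph{parameterized} restatement of the standard fact that an advice-based collapse survives the addition of polynomially bounded advice, where the promise $\PHSP$ is what forces the size parameter $m$ to be polynomially equivalent to the input length. The second implication is the genuinely automata-theoretic step: it transports the parameterized collapse back to families of finite automata through the translation between polynomial-size two-way automata families of polynomial ceiling and log-space advised Turing machines developed in \cite{Yam19a,Yam21,Yam22}.

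For the first implication I would start from an arbitrary $(L,m)\in\para\nl/\poly\cap\PHSP$ and first pin down the size parameter. Because $m$ is a log-space size parameter, the log-space DTM producing $1^{m(x)}$ halts and hence runs in time $|x|^{O(1)}$, so its write-once output satisfies $m(x)\le|x|^{O(1)}$; because $m$ is polynomially honest, $|x|\le q(m(x))$, i.e. $|x|\le m(x)^{O(1)}$. Thus $\log m(x)=\Theta(\log|x|)$ and $m(x)^{O(1)}=|x|^{O(1)}$. Consequently the NTM witnessing membership in $\para\nl/\poly$ runs in space $O(\log|x|)$ and time $|x|^{O(1)}$ with polynomially bounded advice $h(|x|)$, which is exactly a witness for $L\in\nl/\poly$. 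I would then invoke $\nl\subseteq\ul/\poly$ in its advice-closed form $\nl/\poly\subseteq\ul/\poly$ (obtained by folding the advice $h(|x|)$ into the symbolic input, applying $\nl\subseteq\ul/\poly$ to the resulting language of $\nl$, and merging the two advice strings), to conclude $L\in\ul/\poly$. Reading the resulting unambiguous machine back through $\log m(x)=\Theta(\log|x|)$ shows it runs in space $O(\log m(x))$ and time $m(x)^{O(1)}$, i.e. $(L,m)\in\para\ul/\poly$.

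For the second implication I would take any $\LL=\{(L_n^{(+)},L_n^{(-)})\}_{n\in\nat}\in\twon/\poly$, solved by a family $\MM=\{M_n\}$ of polynomial-size 2nfa's with ceiling $\Sigma_n\subseteq\Sigma^{\le p(n)}$, and associate with it the parameterized problem $(L_\LL,m_\LL)$ whose instances encode an index together with a valid string (say $\langle 1^n,x\rangle$), with $m_\LL$ recording $n$. The polynomial ceiling makes $m_\LL$ polynomially honest and log-space computable, so $(L_\LL,m_\LL)\in\PHSP$; and a log-space NTM that reads $n$ off the prefix, looks up the description of $M_n$ inside an advice string holding all $M_{n'}$ with $n'\le|\langle 1^n,x\rangle|$, and simulates $M_n$ on $x$ (storing only the inner state and head position, which costs $O(\log|Q_n|+\log|x|)=O(\log m_\LL)$ space and $m_\LL^{O(1)}$ time) places $(L_\LL,m_\LL)$ in $\para\nl/\poly$. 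Applying the hypothesis $\para\nl/\poly\cap\PHSP\subseteq\para\ul/\poly$ yields an \emph{unambiguous} log-space advised machine $U$ for $(L_\LL,m_\LL)$. I would then reverse the translation: for each $n$, build a two-way unambiguous finite automaton $N_n$ whose inner states are the surface configurations of $U$ (internal state, $O(\log n)$-bit work tape, and the head positions on the fixed $1^n$-prefix and on the advice), with $U$'s head over the $x$-part tracked by $N_n$'s own input head. This gives $|Q_{N_n}|=n^{O(1)}$, and since $U$ is unambiguous so is each $N_n$, whence $\{N_n\}$ witnesses $\LL\in\twou$.

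The main obstacle is this backward translation: converting the advised unambiguous Turing machine into a \emph{polynomial-size} two-way \emph{unambiguous} finite-automaton family while (i) preserving unambiguity configuration-by-configuration and (ii) correctly absorbing the advice into the transition tables even though, for a fixed index $n$, the advice string $h(|\langle 1^n,x\rangle|)$ still depends on $|x|$. I expect to handle (ii) by normalizing the instances for each $n$ to a single length (using the ceiling $p(n)$ and a padding convention) so that the advice becomes a constant depending on $n$ alone, and to handle (i) by checking that the natural correspondence between accepting computations of $U$ and accepting computations of $N_n$ is reachability-faithful, so that the ``at most one accepting path'' property is preserved. The first implication, by contrast, should be routine once the $\PHSP$ promise is used to equate $\log m(x)$ with $\log|x|$.
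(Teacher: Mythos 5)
Your plan has the same two-part architecture as the paper's proof. For implication (1) you pass from the $\PHSP$ promise to $\log m(x)=\Theta(\log|x|)$ and invoke $\nl\subseteq\ul/\poly$ in its advice-closed form $\nl/\poly\subseteq\ul/\poly$; the paper does exactly this, merely routing it through the padded language $K'=\{(x,1^t)\mid x\in L,\ m(x)\le t\}$, so this half is fine. For implication (2) you use the same bridge as the paper: encode $\LL$ as a parameterized problem over strings $1^n\# x$, place it in $\para\nl/\poly\cap\PHSP$ by simulating $M_n$ with its transition table supplied as advice, apply the hypothesis, and fold the resulting unambiguous advised log-space machine back into a polynomial-size family of two-way unambiguous automata by keeping the work tape, the advice, and the off-$x$ head positions in inner states. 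The backward translation, including your padding convention for the length-dependent advice (the paper instead hard-codes $h(i)$ for every relevant length $i$ into inner states), is sound and preserves unambiguity path-by-path.

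The genuine gap is in the forward translation, at the sentence ``the polynomial ceiling makes $m_{\LL}$ polynomially honest and log-space computable.'' A parameterized decision problem carries no promise: by the paper's definitions, $m_{\LL}$ must be computable in log space on \emph{every} string, and polynomial honesty $|w|\le q(m_{\LL}(w))$ must hold for \emph{every} string $w$, including $1^n\# x$ with $x$ an invalid (unpromised) instance, which may be arbitrarily long. The ceiling constrains only valid instances. So if $m_{\LL}(1^n\# x)=n$ unconditionally, polynomial honesty fails for long invalid $x$; if instead $m_{\LL}(1^n\# x)=n$ only when $x\in L_n^{(+)}\cup L_n^{(-)}$ and $m_{\LL}(w)=|w|$ otherwise (the paper's $m'$), then log-space computability of $m_{\LL}$ requires deciding validity in deterministic log space. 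That is exactly the $\dl$-goodness property that the paper introduces for this purpose and imports from \cite{Yam22} for $\twon/\poly$ (while noting it is unknown for $\twou/\poly$, which is why Claim \ref{tie-parameter-nonuniform} is stated in its asymmetric (a)/(b) form). Your phrase ``instances encode an index together with a valid string'' suggests you are implicitly letting the parameterized problem inherit the promise, which the definitions of $\para\nl/\poly$ and $\PHSP$ do not permit. The gap is repairable --- either follow the paper and pass to a $\dl$-good extension of $\LL$, or avoid $\dl$-goodness altogether by setting $m_{\LL}(1^n\# x)=n$ exactly when $|x|\le q(n)$ (a pure length test) and defining the parameterized language via the accepting behavior of $M_n$ on length-bounded inputs rather than via the promise sets --- but some such idea must be supplied, and your proposal does not contain it.
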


To close the proof of Theorem \ref{twou-eq-twon}, in what follows, we wish to prove Claim \ref{NL-to-para-NL}.


\ms
(1) Let us begin with the proof of Claim \ref{NL-to-para-NL}(1). Assume that $\nl\subseteq \ul/\poly$. Note that $\nl\subseteq \up/\poly$ implies $\nl/\poly\subseteq \ul/\poly$.
Let us consider an arbitrary parameterized decision problem
$(L,m)$ in $\para\nl/\poly\cap\PHSP$ with a log-space size parameter $m:\Sigma^*\to\nat$ and a language $L$ over alphabet $\Sigma$. Take a polynomially-bounded advice function $h$ and an NTM $M_0$ such that $M_0$ recognizes $\{(x,h(|x|))\mid x\in L\}$ in time
$(m(x))^{O(1)}$ and space $O(\log{m(x)})$.
We wish to verify that $(L,m)\in\para\ul/\poly$.
For this purpose, we define $L_n^{(+)}= L\cap \Sigma_n$ and $L_n^{(-)}= \overline{L}\cap \Sigma_n$, where $\Sigma_n =\{x\in\Sigma^*\mid m(x)=n\}$. We further set $\LL=\{(L_n^{(+)},L_n^{(-)})\}_{n\in\nat}$.

Since $m$ is polynomially honest, we take a polynomial $q$ such that $|x|\leq q(m(x))$ for all $x$. We write $K'$ for the set $\{(x,1^t)\mid x\in L, t\in\nat, m(x)\leq t\}$ and claim that $K'\in\nl/\poly$.
Consider the following algorithm.
On input $(x,1^t)$, we check if $m(x)\leq t$ using log space.
This is possible because $m$ is log-space computable. If $m(x)> t$, then we reject the input; otherwise, we simulate $M_0$ on $(x,h(|x|))$. Clearly, this algorithm recognizes $K'$. This algorithm can be realized by an appropriate NTM running in time $(|x|t)^{O(1)}$ and space $O(\log{|x|t})$ with the help of $h$. Since $|(x,1^t)|=O(|x|+t)$, $K'$ belongs to $\nl/\poly$.

Since $\nl/\poly \subseteq \ul/\poly$ by our assumption, $K'$ must be in $\ul/\poly$. Take a UTM $M$, an advice function $g$, a logarithmic function $\ell'$ such that $M$ recognizes $K'$ using at most $\ell'(|z|)$ space with an access to $g(|z|)$, where $z$  indicates a ``symbolic'' input. We then design a new NTM $N$ for $(K,m')$ as follows.
Define $g'(|x|)=g(|(x,1^t)|)$ for all $x\in\Sigma^*$. On input $x$, compute $n=m(x)$, and run $M$ on $(x,1^n)$ with $g'(|x|)$. Note that $N$'s space usage is $O(\ell'(|x|+n)+\log|x|) \subseteq O(\log(q(m(x))+n)+\log{|x|}) \subseteq O(\log{m(x)})$ since $q$ is a polynomial satisfying $|x|\leq q(m(x))$.
This shows that $(L,m)\in\para\ul/\poly$.

By the definition of $L$, $\LL$ falls into $\twon$. Let $M_n = (Q_n,\Sigma,\{\rhd,\lhd\},\delta_n, q_{0,n},  Q_{acc,n},Q_{rej,n})$. Consider a computation graph $G_n = (V_n,E_n)$ of $M_n$ on each input of length at most $f(n)$.

A configuration of $M_n$ is of the form $(q,l)$, which indicates that $M_n$ is in state $q$ and its tape head scans cell $l$. We view an $(f(n)+2)$-partite graph of the following form: $V_n= Q_n\times [0,f(n)+1]_{\integer}$ and $((q,l),(q',l'))\in E_n$ iff $(q',d)\in \delta_n(q,x_{(l)})$ and $l'=l+d$.


\ms
(2) Next, we wish to prove Claim \ref{NL-to-para-NL}(2). For its proof, we loosely follow an argument made in the proof of \cite[Proposition 5.1]{Yam22}. A key to this argument is Claim \ref{tie-parameter-nonuniform} that ties between parameterized decision problems
and families of promise problems. To describe the claim, we need to recall the following terminology used in \cite{Yam21,Yam22}.

Let $(L,m)$ denote a parameterized decision problem over alphabet $\Sigma$ and let $\LL=\{(L_n^{(+)},L_n^{(-)})\}_{n\in\nat}$ be any family of promise problems. We say that $\LL$ is \emph{induced from} $(L,m)$ if, for any index $n\in\nat$, $L_n^{(+)}=L\cap \Sigma_n$ and $L_n^{(-)}=\overline{L}\cap \Sigma_n$, where $\Sigma_n$ denotes the set $\{x\in\Sigma^*\mid m(x)=n\}$.
For two families of promise problems  $\LL=\{(L_n^{(+)},L_n^{(-)})\}_{n\in\nat}$ and $\KK=\{(\KK_n^{(+)},\KK_n^{(-)})\}_{n\in\nat}$, we say that $\KK$ is an \emph{extension} of $\LL$ if, for any index $n\in\nat$, $L_n^{(+)}\subseteq K_n^{(+)}$ and $L_n^{(-)}\subseteq K_n^{(-)}$. We say that $\LL$ is \emph{$\dl$-good} if the set $\{1^n\# x\mid n\in\nat, x\in L_n^{(+)}\cap L_n^{(-)}\}$ belongs to $\dl$. A collection $\FF$ of families of promise problems is \emph{$\dl$-good} if all elements of $\FF$ has an $\dl$-good extension in $\FF$.
From a family $\LL=\{(L_n^{(+)},L_n^{(-)})\}_{n\in\nat}$, we introduce $K_n^{(+)} = \{1^n\# x\mid x\in L_n^{(+)}\}$ and $K_n^{(-)}=\{1^n\# x\mid x\notin L_n^{(+)}\} \cup S_n$, where $\Sigma_{\#}=\Sigma\cup\{\#\}$ and $S_n = \{z\# x\mid z\in \Sigma^n-\{1^n\},x\in (\Sigma_{\#})^*\}\cup \Sigma^n$.  We say that $(K,m)$ is \emph{induced from} $\LL$ if $K$ equals $\bigcup_{n\in\nat} K_n^{(+)}$ and $m:(\Sigma_{\#})^*\to\nat$ satisfies the following: for any string $w$ of the form $1^n\# x$ with $x\in L_n^{(+)}\cup L_n^{(-)}$, $m(w)=n$, and for all other strings $w$, $m(w)=|w|$. Note that $\overline{K}$ equals $\bigcup_{n\in\nat} K_n^{(-)}$.
See \cite{Yam22} for more information.

The $\dl$-goodness of $\twon/\poly$ was  shown in \cite{Yam22}. It is, however, unclear that $\twou/\poly$ is also $\dl$-good. We then describe a key statement, Claim \ref{tie-parameter-nonuniform}.

\begin{yclaim}\label{tie-parameter-nonuniform}
Let $(\CC,\DD)\in\{(\nl,\twon), (\ul,\twou)\}$. Let $\LL=\{(L_n^{(+)},L_n^{(-)})\}_{n\in\nat}$ denote any family of promise problems and let $(L,m)$ and $(K,m)$ be any two parameterized decision problems with the common logspace size parameter $m$. (a) If $\LL$ is induced from $(L,m)$, then $(L,m)\in\para\CC/\poly\cap \PHSP$ iff $\LL\in\DD/\poly$. (b) If $\LL$ is $\dl$-good and $(K,m)$ is induced from $\LL$, then $(K,m)\in\para\CC/\poly\cap \PHSP$ iff $\LL\in\DD/\poly$.
\end{yclaim}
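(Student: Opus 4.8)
The plan is to prove both (a) and (b) by exhibiting the standard two-way correspondence between log-space bounded machines reading a polynomially-bounded advice and polynomial-size two-way finite automata families, and then to verify that this correspondence transports unambiguity verbatim, so that the case $(\CC,\DD)=(\ul,\twou)$ goes through in exact parallel to $(\nl,\twon)$. The central observation is that a polynomial-size $\DD$-machine $M_n$ running on an input $x$ with $m(x)=n$ has surface configurations $(s,i)$ in which $s$ ranges over $\poly(n)$ states and the head position $i$ ranges over $[0,|x|+1]_{\integer}$; since a $/\poly$-ceiling forces $|x|\le q(n)$ for a fixed polynomial $q$, each such configuration is describable in $O(\log n)$ bits, and the entire computation graph is exactly a log-space nondeterministic (resp.\ unambiguous) computation once the state $s$ is kept on a work tape of length $O(\log n)=O(\log m(x))$ and the input head is tracked by the Turing machine's two-way read-only head. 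Conversely, a $\para\CC/\poly$-machine $M_0$ solving $L$ in space $O(\log m(x))$ has at most $\poly(m(x))$ configurations relative to the input-head position, and these become the states of a two-way finite automaton. This is the argument of \cite[Proposition 5.1]{Yam22} for the $(\nl,\twon)$ instance, and the proposal is to re-run it while checking obliviousness to the number of accepting paths.

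For part (a) I would argue the two directions separately. For ``$\Leftarrow$'', starting from a family $\MM=\{M_n\}$ in $\DD/\poly$ solving $\LL$ (which is induced from $(L,m)$), I first note that the polynomial ceiling gives $|x|\le q(m(x))$, so $m$ is polynomially honest and $(L,m)\in\PHSP$; moreover, since $m$ is a log-space size parameter it is polynomially bounded, $m(x)\le r(|x|)$ for some polynomial $r$, so for a fixed input length $\ell$ only the $\poly(\ell)$ automata $M_n$ with $n\le r(\ell)$ are relevant. I would therefore take as advice $h(\ell)$ an encoding of all these $M_n$, design a Turing machine that computes $n=m(x)$ in log space, locates $M_n$ inside $h(|x|)$, and simulates it; the space is $O(\log m(x))$ and the machine is nondeterministic (resp.\ unambiguous) exactly when $\MM$ is, giving $(L,m)\in\para\CC/\poly$. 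For ``$\Rightarrow$'', starting from $M_0$ and advice $h$, I build $M_n$ by hard-wiring the polynomially many, polynomially long advice strings $h(0),\dots,h(q(n))$ and using the surface configurations of $M_0$ as the states of $M_n$; this is a polynomial-size two-way automaton solving $\LL$, and it is unambiguous when $M_0$ is a UTM.

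For part (b) I would reduce to (a) using the $\dl$-goodness hypothesis and the explicit shape of the $(K,m)$ induced from $\LL$. The induced problem prefixes $1^n\#$ to the instances of $L_n^{(+)}$ and relegates all malformed or mismatched strings to $\overline{K}$; because $\LL$ is $\dl$-good, a deterministic log-space routine can recognize the well-formed instances, so a machine solving $(K,m)$ and a family solving $\LL$ can be converted into one another by composing the (a)-simulation with this fixed log-space pre-processing, again preserving both nondeterminism and unambiguity.

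The step I expect to be the main obstacle is the faithful preservation of unambiguity across the simulations. It is not enough that the simulating machine accepts the same language; one must verify that the map between computation paths of $M_0$ and of $M_n$ is a bijection on accepting paths, so that ``at most one accepting path'' is neither created nor destroyed. The delicate points are (i) ensuring the log-space simulation of a two-way automaton introduces no spurious branching when it reconstructs $M_n$ from advice and reads its nondeterministic table, and (ii) in part (b), ensuring that the deterministic $\dl$-good pre-processing recognizing well-formed inputs contributes no extra accepting computations. Once these bijections are checked, the two cases $(\nl,\twon)$ and $(\ul,\twou)$ are handled by one and the same construction.
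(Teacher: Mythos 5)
Your proposal is correct and follows essentially the same route as the paper's own proof: a mutual step-by-step simulation between parameterized log-space machines with polynomial advice and polynomial-size two-way automata families (hard-wiring advice strings into inner states in one direction, encoding transition tables into advice in the other), run identically for $(\nl,\twon)$ and $(\ul,\twou)$ while checking that the deterministic overhead creates no extra accepting paths, and using $\dl$-goodness in part (b) as a deterministic log-space filter for promised instances. If anything, you are more explicit than the paper about the points it glosses over, namely the re-indexing between advice indexed by $|x|$ and automata indexed by $n=m(x)$, the derivation of $\PHSP$ membership from the polynomial ceiling, and the precise role of the $\dl$-good pre-processing in preserving unambiguity.
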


\begin{proof}
The case of $\CC=\nl$ and $\DD=\twon$ was proven in \cite{Yam22}. Now, let us consider the case of $\CC=\ul$ and $\DD=\twou$. The proof idea is similar to the first case of $\CC=\nl$ and $\DD=\twon$.

(a) Let $\LL=\{(L_n^{(+)},L_n^{(-)})\}_{n\in\nat}$ denote any family of promise problems induced from $(L,m)$. Assume that $(L,m)\in\para\ul/\poly \cap \PHSP$. There are an advice function $h$ of output size $m(x)^{O(1)}$ and a UTM $M$ that solves $(L,m)$ in time $m(x)^{O(1)}$ using space $O(\log{m(x)})$.
Since $m$ is a logspace size parameter, it is polynomially bounded. Take a suitable polynomial $q$ satisfying $m(x)\leq q(|x|)$ for all $x$. The space usage of $M$ is thus upper-bounded by $O(\log|x|)$. Note that $\Sigma_n = \{x\mid m(x)=n\}\subseteq \Sigma^{\leq q(n)}$. We also assume that $|h(i)|\leq p(m(x))$.
Hereafter, we intend to verify that  $\LL\in\twou/\poly$.

We then construct a 1nfa $N_n$ that takes $x$ as an input and simulates $M$ on input $(x,h(|x|))$. For this purpose, we embed the information on $(h_{i1},h_{i2},\ldots,h_{ik_i})$ for $h(i)=h_{i1}h_{i2}\cdots h_{i|h(i)|}$ into $|h(i)|$ inner states of $N_n$ so that it can recover $h(i)$ from these inner states. It follows that $N_n$ is unambiguous on $\Sigma_n$. Since $\{N_n\}_{n\in\nat}$ solves $\LL$, we conclude that $\LL\in\twou/\poly$.

Next, we assume that $\LL\in\twou/\poly$. Take a family $\NN=\{N_n\}_{n\in\nat}$ of polynomial-size 1ufa's solving $\LL$. We then define an NTM $M$ that recognizes $(L,m)$ with the help of an advice function $h$. We encode the transitions of $N_n$ into an advice string. On input $x$, compute $n=m(x)$, and simulate $N_n$ on $x$. It then follows that $M$ is unambiguous and recognizes $(L,m)$.

(b) Assume that $\LL$ is $\dl$-good and that $(K,m)$ is induced from $\LL$. If $(K,m)\in\para\ul/\poly\cap\PHSP$, then we take an advice function $h$ of output size $m(x)^{O(1)}$ and a UTM $M$ for which $M$ solves $(K,m)$ in time $m(x)^{O(1)}$ using space $O(\log{m(x)})$ with the help of $h$. We construct a 1nfa $N_n$ that behaves as follows. Recall from the proof of (a) the embedding of $h(i)$ into $|h(i)|$ inner states.
On input $x$, recover $h(|x|)$ from $N_n$'s inner states, and simulate $M$ on $(1^n\# x,h(|x|))$. This implies that $N_n$ solves $(L_n^{(+)},L_n^{(-)})$.

On the contrary, assume that $\LL\in\twou/\poly$ and take a family $\{N_n\}_{n\in\nat}$ of polynomial-size 1ufa's solving $\LL$. Since $N_n$ is of polynomial size, we can encode $N_n$ into an advice string $\alpha_n$ of polynomial length.
Since $\LL$ has a polynomial ceiling, take a constant $k\geq1$ satisfying $L_n^{(+)}\cup L_n^{(-)}\subseteq \Sigma^{\leq n^k+k}$ for any $n\in\nat$. We then construct an NTM $M$ such that, on input $y$, check if $y$ has the form $1^n\# x$ with $|x|\leq n^{k}+k$. If so, simulate $N_n$ on $x$.
Since $m(y)=n$, it follows that $|y|=n+|x|+1 \leq n^{k+1}+k\leq m(y)^{k+1}+k$.
Otherwise, reject $x$ immediately.
This machine $M$ is unambiguous and also recognizes $K$ in time polynomial in $m(y)$ using space logarithmic in $m(y)$.
\end{proof}


Assume that $\para\nl/\poly\cap \PHSP \subseteq \para\ul/\poly$.
Let $\LL=\{(L_n^{(+)},L_n^{(-)})\}_{n\in\nat}$ be an arbitrary element of $\twon/\poly$ and take two polynomials $p,q$ and a family $\MM=\{M_n\}_{n\in\nat}$ of 2nfa's that solves $\LL$ with the following properties: each $M_n$ has at most $p(n)$ inner states and $\Sigma_n$ ($=L_n^{(+)}\cup L_n^{(-)}$) is included in $\Sigma^{\leq q(n)}$.
Hereafter, we wish to prove that $\LL$ belongs to $\twou$.

For each index $n\in\nat$, we define $K_n^{(+)} = \{1^n\# x\mid x\in L_n^{(+)}\}$ and $K_n^{(-)} = \{1^n\# x\mid x\in L_n^{(-)}\} \cup \{z\# x\mid z\in\Sigma^n-\{1^n\}, x\in\Sigma^*_{\#}\} \cup \{z\mid z\in\Sigma^n\}$, where $\Sigma_{\#} = \Sigma\cup\{\#\}$. We set $K=\bigcup_{n\in\nat} K_n^{(+)}$ and $K^c=\bigcup_{n\in\nat} K_n^{(-)}$. It then follows that $K\cup K^c=\Sigma^*_{\#}$ and $K\cap K^c=\setempty$; thus, $K^c$ coincides with the complement $\overline{K}$ of $K$.
We define $m'(w)=n$ if $w=1^n\# x$ for a certain $x\in L_n^{(+)}\cup L_n^{(-)}$ and $m'(w)=|w|$ otherwise.
Note that $m'$ is log-space computable and also polynomially honest.
Since $(L_n^{(+)},L_n^{(-)})$ is solved by $M_n$ for each $n\in\nat$,
$K$ is solvable nondeterministically in time $(|x|m(x))^{O(1)}$ and space $O(\log{m(x)})$ with the use of an appropriate polynomial-size advice function.
Thus, $(K,m')$ belongs to $\para\nl/\poly \cap \PHSP$.

The assumption $\para\nl/\poly \cap \PHSP \subseteq \para\ul/\poly$ makes  $(K,m')$ fall into $\para\ul/\poly$.
Take an advice function $h$ and a UTM $N$ that solve $K$ in time $(|x|m'(x))^{O(1)}$ and space $O(\log{m'(x)})$.
Consider the algorithm that, on input $x$ with index $n\in\nat$, generate both $1^n\# x$ and $h(|x|)$ and then run $N$ on $(1^n\#x, h(|x|))$ to produce an outcome. An appropriate unambiguous 2nfa, say, $N'_n$ can realize this algorithm since we can store $h(|x|)$ as a series of inner states.
We thus conclude that $N'_n$ solves $(L_n^{(+)},L_n^{(-)})$.  Therefore, $\LL$ belongs to $\twou$.

This completes the proof of Theorem \ref{twou-eq-twon}.
\end{proofof}

\subsection{Case of No Ceiling Restriction}

We have discussed the case of bounded ceilings in Section \ref{sec:bounded-ceilings}. We next turn our attention to the case of no ceiling restriction.
Kapoutsis \cite{Kap14} earlier demonstrated that $\twon\subseteq \twod$ holds exactly when $\twon/\supexp \subseteq \twod$, where ``$\supexp$'' denotes the set of all super exponentials on $\nat$. Similarly, we can assert that, even in our unambiguity/fewness setting, the case of no ceiling restriction on families of promise problems is logically equivalent to the case of $\supexp$-ceiling restriction on those families. More formally, we wish to prove the following.

\begin{theorem}\label{super-expo-bound}
Let $\AAA$ denote the set $\{\mathrm{D}, \mathrm{ReachU}, \mathrm{ReachFewU}, \mathrm{ReachFew}, \mathrm{U}, \mathrm{FewU},\mathrm{Few}, \mathrm{N}\}$.
For any two classes $\CC,\DD$ taken from $\AAA$, it follows that $2\CC \subseteq 2\DD$ iff $2\CC/\supexp \subseteq 2\DD$.
\end{theorem}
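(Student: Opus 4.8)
The plan is to prove both implications, the forward one being immediate and the backward one carrying all the content. For the forward direction I observe that a family carrying a super-exponential ceiling is, in particular, a family with no ceiling restriction, solved by the very same $\CC$-machines; hence $2\CC/\supexp \subseteq 2\CC$, and $2\CC \subseteq 2\DD$ yields $2\CC/\supexp \subseteq 2\DD$ at once. This step is uniform across all $64$ choices of the pair $(\CC,\DD)$, so only the converse needs work. For the converse I would argue contrapositively: assuming $2\CC/\supexp \subseteq 2\DD$, I take an arbitrary $\LL=\{(L_n^{(+)},L_n^{(-)})\}_{n\in\nat}\in 2\CC$, solved by a polynomial-size two-way family $\MM=\{M_n\}_{n\in\nat}$ with $sc(M_n)\le p(n)$ and with the $\CC$-property on all valid instances, and I aim to place $\LL$ in $2\DD$.

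First I would fix the super-exponential function $g(n)=2^{2^{p(n)}}$ and form the \emph{subset-capped} family $\LL^{\le g}$ whose valid instances are exactly those of $\LL$ of length below $g(n)$. Because $\LL^{\le g}$ keeps a subset of the valid instances of $\LL$ and is solved by $\MM$ restricted to them, it inherits the $\CC$-property verbatim; this is the one place where I want the capping to be a genuine restriction rather than a re-encoding, since it lets the argument run uniformly for every $\CC\in\AAA$, including the reach-variants. Thus $\LL^{\le g}\in 2\CC/\supexp$, and the hypothesis supplies a polynomial-size $\DD$-family $\hat\NN=\{\hat N_n\}_{n\in\nat}$ solving it, with $sc(\hat N_n)\le r(n)$ for some polynomial $r$. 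The decisive point is that $r$ is a polynomial in the index $n$ only, \emph{independently of the astronomically large ceiling} $g$: a polynomial-size solver of $\LL^{\le g}$ still has only $\mathrm{poly}(n)$ states even though its instances may have length up to $2^{2^{p(n)}}$.

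Next I would show that this same $\hat\NN$ already solves the uncapped $\LL$. On a valid instance $x$ with $|x|<g(n)$ correctness is immediate. For $|x|\ge g(n)$ I would run the product automaton $M_n\times\hat N_n$, which has at most $p(n)r(n)$ inner states; by Lemma~\ref{upper-bound-length} its critical-path crossing sequences have length at most $2\,p(n)r(n)$, so the number of distinct such crossing sequences is at most $2^{O(p(n)r(n)\log(p(n)r(n)))}=2^{\mathrm{poly}(n)}<g(n)\le |x|$. Pigeonhole then produces two boundaries carrying the same product crossing sequence, and excising the intervening factor yields a shorter $x'$ on which both machines have the same critical-path outcome, i.e. $M_n(x')=M_n(x)$ and $\hat N_n(x')=\hat N_n(x)$. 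Iterating drives the length below $g(n)$ while preserving both the membership-defining verdict of $M_n$ and the solver's verdict.

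The hard part will be crossing the \emph{validity wall} left open by this excision: the shortened string $x'$ need not lie in $\Sigma_n=L_n^{(+)}\cup L_n^{(-)}$, and $\hat N_n$ is only promised to be correct on valid instances, so $\hat N_n(x')=\hat N_n(x)$ does not by itself certify $\hat N_n(x)$. My plan to close this gap is to replace $\LL^{\le g}$ by its \emph{excision-closure} — adjoining every string reachable from a valid instance by product-crossing-sequence excisions and declaring its side by the now-constant value of $M_n$ — and to verify that $M_n$ solves this enlarged capped family while retaining the $\CC$-property, so that the argument above applies to a terminal $x'$ that is genuinely promised. For the plain variants ($\mathrm U$, $\mathrm{FewU}$, $\mathrm{Few}$) this amounts to checking that excision does not increase the number of \emph{accepting} paths, whereas for $\mathrm{ReachU}$, $\mathrm{ReachFewU}$, and $\mathrm{ReachFew}$ it additionally demands that excision preserve the path-multiplicity to \emph{every} intermediate configuration — the genuinely new obstacle relative to Kapoutsis's deterministic case \cite{Kap14}. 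I expect to resolve it by refining the product crossing sequences so that they record these capped multiplicities for both $M_n$ and $\hat N_n$, exactly as in Kapoutsis's crossing-sequence simulation but now tracking the fewness/unambiguity certificate; once this preservation is established one concludes $\LL\in 2\DD$, uniformly for all pairs $(\CC,\DD)$ drawn from $\AAA$.
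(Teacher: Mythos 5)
Your overall skeleton --- trivial forward direction, then cap the family, invoke the hypothesis, compare the two machines via critical-path crossing sequences and pigeonhole, as in \cite{Kap14} --- is the same as the paper's, but the two proofs cap in opposite ways, and your way creates a hole that your proposal does not close. The paper does \emph{not} restrict $\LL$ to its short valid instances. Instead it \emph{expands} the promise: from the $\CC$-solver $\NN=\{N_n\}_{n\in\nat}$ it forms $K_n^{(+)}=\{x\in L(N_n)\mid |x|\le f(n)\}$ and $K_n^{(-)}=\{x\in\overline{L(N_n)}\mid |x|\le f(n)\}$, so that \emph{every} string of length at most $f(n)$ is a promised instance, classified by $N_n$'s own verdict. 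Consequently the $\DD$-solver $\MM=\{M_n\}_{n\in\nat}$ supplied by the hypothesis agrees with $N_n$ on \emph{all} strings of length at most $f(n)$; the discrepancy analysis (Claim~\ref{size-of-A}) is then a statement about the total languages $L(M_n)$ versus $L(N_n)$, with no promise anywhere in sight, and the finitely many indices $n$ with $2^{r(n)}\ge f(n)$ are patched by brute-force 1dfa's. Your ``validity wall'' simply never arises, because every string that excision could produce is already promised.

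Your alternative --- cap by restriction so as to preserve the $\CC$-property verbatim, then repair validity afterwards via an ``excision-closure'' --- founders exactly at the step you yourself flag as the hard part, and it cannot be repaired in the way you suggest. Splicing out the factor between two boundaries carrying identical crossing sequences maps \emph{some} computation paths on $x=uvw$ to computation paths on $x'=uw$, but this map is not onto: in general $x'$ has computation paths (in particular accepting ones, and extra paths into intermediate configurations) that do not arise from any path on $x$. Hence unambiguity, fewness, or any reach-condition of the original machine on $x$ imposes \emph{no} bound on the number of accepting paths or reach-multiplicities on the excised string $x'$, so ``checking that excision does not increase the number of accepting paths'' is not a verification one can carry out --- it is false in general --- and no refinement of crossing sequences that records data along paths of $x$ can see these new paths, since they exist only in the computation graph of $x'$. (Your instinct that restriction is needed to keep the $\CC$-property is not baseless: the paper's expansion step asserts $\KK\in 2\CC/f(n)$ without comment, and that assertion is immediate only for $\CC\in\{\mathrm{D},\mathrm{N}\}$; but the paper's structure at least confines the issue to that single assertion about $N_n$ on short strings, rather than resting on a closure construction that provably fails.) Two smaller points: a synchronous ``product automaton'' $M_n\times\hat N_n$ is not well-defined for two-way machines, whose heads desynchronize --- the paper instead counts \emph{pairs} of crossing sequences of the two critical paths at a common boundary --- and you never handle the finitely many indices where your pigeonhole bound fails, which the paper repairs by substituting exponential-size 1dfa's (harmless, since only finitely many indices are involved).
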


The proof of this theorem can be obtained by slightly modifying the proof in \cite{Kap14}. For the sake of completeness, nevertheless, we include the proof of the theorem.

For a finite automaton $M$, the notation $L(M)$ means the set of all strings accepted by $M$.  In the following proof, for simplicity, we assume without loss of generality that 2nfa's make their tape head return to the start cell when they halt.

\begin{proofof}{Theorem \ref{super-expo-bound}}
Let $\CC,\DD$ be any classes in $\AAA$. Since $2\CC/\supexp \subseteq 2\CC$, it is obvious that $2\CC \subseteq 2\DD$ implies $2\CC/\supexp \subseteq 2\DD$.  In what follows, we intend to prove the converse.
Assuming that $2\CC/\supexp \subseteq 2\DD$, we aim at proving that $2\CC\subseteq 2\DD$.
Let $f(n)$ denote any function in $\supexp$. It then follows that, for any polynomial $q$,  $f(n)>2^{q(n)}$ holds for all but finitely many numbers $n\in\nat$.
Let $\LL=\{(L_n^{(+)},L_n^{(-)})\}_{n\in\nat}$ denote any family of promise problems in $2\CC$ and take a family $\NN=\{N_n\}_{n\in\nat}$ of polynomial-size 2nfa's that solves $\LL$, where every $N_n$ must satisfy the condition imposed by the definition of $2\CC$. For each $n\in\nat$, let $N_n=( Q'_n,\Sigma, \{\rhd,\lhd\}, \delta'_n, q'_{0,n}, Q'_{acc,n},Q'_{rej,n})$. There exists a polynomial $q$ satisfying $|Q'_n|\leq q(n)$ for all $n\in\nat$.

Fix $n$ arbitrarily. We expand $(L_n^{(+)},L_n^{(-)})$ to the ``language'' $L(N_n)$ induced by $N_n$. Notice that $L_n^{(+)}\subseteq L(N_n)$ and $L_n^{(-)}\subseteq \overline{L(N_n)}$. For convenience, we write $N_n(x)$ to denote the outcome (i.e., acceptance or rejection) of $N_n$ on input $x$.
From this language $L(N_n)$, we define another promise problem $(K_n^{(+)},K_n^{(-)})$ by setting $K_n^{(+)}=\{x\in L(N_n) \mid |x|\leq f(n)\}$ and $K_n^{(-)}=\{x\in \overline{L(N_n)} \mid |x|\leq f(n)\}$.
Let us consider the family $\KK=\{(K_n^{(+)},K_n^{(-)})\}_{n\in\nat}$. Since $\KK$ has an $f(n)$-ceiling, $\KK$ must be in $2\CC/f(n)$, which is further included in $2\CC/\supexp$ since $f\in\supexp$.

By our assumption, $\KK$ belongs to $2\DD$. Take a family  $\MM=\{M_n\}_{n\in\nat}$ of polynomial-size 2nfa's with $M_n =(Q_n,\Sigma,\{\rhd,\lhd\}, \delta_n,q_{0,n}, Q_{acc,n},Q_{rej,n})$ that solves $\KK$, where $|Q_n|\leq p(n)$ holds for a fixed polynomial $p$ independent of $n$ and each $M_n$ satisfies
the condition imposed for $2\DD$.
Remember that $M_n$ may take strings of arbitrary lengths as its inputs.
However, for any input $x$ of length at most $f(n)$, $M_n(x)$ coincides with $N_n(x)$ by the definition of $\KK$.
We then compare between the behaviors of $M_n$ and $N_n$. Define $r(n)=2(p(n)^2+q(n)^2)$ for all $n\in\nat$, which is a polynomial satisfying $2(|Q_n|^2+|Q'_n|^2)\leq r(n)$.
We then claim the following statement concerning the ``discrepancy'' set $A=\{n\in\nat\mid L(M_n)\neq L(N_n) \}$.

\begin{yclaim}\label{size-of-A}
$A \subseteq \{n\in\nat\mid 2^{r(n)}\geq f(n)\}$.
\end{yclaim}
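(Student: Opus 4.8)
The plan is to prove the contrapositive for each index $n$ separately: I will show that whenever $2^{r(n)} < f(n)$, the automata $M_n$ and $N_n$ recognize the same language, so that $n \notin A$. The starting observation is that $M_n$ and $N_n$ already agree on all short strings. By the definition of $\KK$, every $x$ with $|x| \le f(n)$ is a valid instance of $(K_n^{(+)},K_n^{(-)})$, and it lies in $K_n^{(+)}$ exactly when $x\in L(N_n)$; since $M_n$ solves $\KK$, this forces $M_n$ to accept $x$ iff $N_n$ accepts $x$, for every $x$ with $|x|\le f(n)$. Hence any string witnessing $L(M_n)\neq L(N_n)$ must have length greater than $f(n)$.

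Assuming toward a contradiction that $L(M_n)\neq L(N_n)$, let $w$ be a shortest string with $M_n(w)\neq N_n(w)$; by the previous paragraph $|w| > f(n) > 2^{r(n)}$. I will derive a strictly shorter disagreement by a cut-and-paste (crossing-sequence) argument applied simultaneously to both machines. For a boundary $t\in\{0,1,\ldots,|w|\}$, I record the \emph{left behavior} of the prefix $\rhd w[1..t]$ for $M_n$, namely the quadruple $(I,a,T,F)$, where $I\subseteq Q_n$ is the set of states in which the head can first cross boundary $t$ rightward starting from the initial configuration, $a\in\{0,1\}$ records whether $M_n$ can accept without ever crossing $t$, $T\subseteq Q_n\times Q_n$ is the relation ``entering the prefix at $t$ in state $p$, the head can re-exit at $t$ in state $q$'', and $F\subseteq Q_n$ records the re-entry states from which acceptance can occur inside the prefix. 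This finite datum completely determines how the prefix interacts with any suffix, so two prefixes with the same left behavior are interchangeable for language membership (the endmarkers and the convention that the head returns to the start cell and halts are absorbed into this formalism in the standard way). Let $\beta_M(t)$ and $\beta_N(t)$ be these behaviors for $M_n$ and $N_n$, and let $\beta(t)=(\beta_M(t),\beta_N(t))$ be the \emph{joint} behavior.

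The key counting step bounds the number of distinct joint behaviors. A single machine with $s$ states has at most $2^{s^2+2s+1}$ left behaviors ($2^{s^2}$ choices for $T$, $2^s$ each for $I$ and $F$, and $2$ for $a$), so the number of joint behaviors is at most $2^{(|Q_n|^2+2|Q_n|+1)+(|Q'_n|^2+2|Q'_n|+1)}$. Using $|Q_n|\le p(n)$, $|Q'_n|\le q(n)$ together with $p(n),q(n)\ge 3$ (which we may assume without loss of generality, since enlarging the polynomials only relaxes the claim), one checks that the exponent is at most $2(p(n)^2+q(n)^2)=r(n)$; the factor $2$ in the definition of $r(n)$ is exactly what absorbs the lower-order terms $2s+1$. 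Now there are $|w|+1$ boundaries but at most $2^{r(n)}$ joint behaviors, and $|w|+1 > 2^{r(n)}$ since $|w|>2^{r(n)}$, so by the pigeonhole principle there exist $t_1<t_2$ with $\beta(t_1)=\beta(t_2)$. Excising the block $w[t_1+1..t_2]$ yields $w'=w[1..t_1]\,w[t_2+1..|w|]$ with $|w'|<|w|$; since the left behaviors of the two prefixes coincide for both machines, membership is preserved, i.e.\ $M_n(w')=M_n(w)$ and $N_n(w')=N_n(w)$, whence $M_n(w')\neq N_n(w')$. This contradicts the minimality of $w$ (or, when $|w'|\le f(n)$, the first paragraph). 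Therefore $L(M_n)=L(N_n)$ and $n\notin A$, which is the claim.

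I expect the main obstacle to be verifying that matching the joint left behavior at two boundaries genuinely preserves the full accept/reject outcome of both \emph{nondeterministic} machines under excision, not merely the existence of one accepting path. This is precisely why I track the complete transition-table datum $(I,a,T,F)$ rather than the crossing sequence of a single critical path; the latter, as in Lemma \ref{upper-bound-length}, would preserve acceptance but not rejection for a nondeterministic machine. A secondary point needing care is confirming that the unambiguity and fewness restrictions defining $2\CC$ and $2\DD$ play no role: the argument uses only that $N_n$ and $M_n$ are 2nfa's recognizing the stated languages, so it is uniform over all choices of $\CC,\DD\in\AAA$.
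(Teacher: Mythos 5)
Your proof is correct, but it follows a genuinely different route from the paper's. The paper takes the shortest disagreement string $x$, extracts the two critical computation paths $\gamma_{M_n}(x)$ and $\gamma_{N_n}(x)$, bounds their crossing sequences by $2|Q_n|$ and $2|Q'_n|$ via Lemma \ref{upper-bound-length}, and pigeonholes on \emph{pairs of crossing sequences} (with the count $|Q_n|^{2|Q_n|}\cdot |Q'_n|^{2|Q'_n|}\le 2^{r(n)}$), excising between two boundaries that carry the same pair; you instead pigeonhole on joint Shepherdson-style prefix behaviors $(I,a,T,F)$ of the two machines. The difference matters. The paper's surgery is performed on single paths, so it directly guarantees only that the shortened string admits \emph{a path} with the same outcome as $\gamma_{M_n}(x)$, resp.\ $\gamma_{N_n}(x)$; when one of the machines rejects $x$, a surviving rejecting path does not by itself exclude an accepting path on the shortened string, so the paper's assertion that $M_n(x)=M_n(y)$ and $N_n(x)=N_n(y)$ leans on precisely the invariant you chose to make explicit. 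Your behavior datum preserves membership itself under excision, for acceptance and rejection alike, so your version of the cut-and-paste step is airtight; the price is a coarser count, $2^{s^2+2s+1}$ per machine, which you correctly absorb into the factor $2$ of $r(n)=2(p(n)^2+q(n)^2)$ under the harmless normalization $p(n),q(n)\ge 3$ (legitimate here, since $p$ and $q$ are merely chosen upper bounds on the state complexities, and enlarging $r$ only enlarges the finite exceptional set the theorem needs). Both arguments share the same outer structure --- agreement up to length $f(n)$ forces any disagreement to be long, minimal counterexample, pigeonhole, excise --- and both are indifferent to the particular classes $\CC,\DD\in\AAA$, as the theorem requires; what your route buys is that the outcome-preservation step, the one delicate point for nondeterministic machines, is proved rather than asserted.
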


\begin{proof}
We first show that, if $L(M_n)\neq L(N_n)$, then there exists a string $y$ of length at most $2^{r(n)}$ for which $M_n(y)\neq N_n(y)$. If this is true, then it follows from $M_n(y)\neq N_n(y)$ that $|y|>f(n)$.
We then obtain $2^{r(n)}\geq f(n)$. In conclusion, $A$ is a subset of $\{n\in\nat\mid 2^{r(n)}\geq f(n)\}$.

Fix an arbitrary index $n\in A$ and take the shortest string $x$ satisfying $M_n(x)\neq N_n(x)$. Recall from Section \ref{sec:FA-CS} the notion of critical computation path. From $(M_n, N_n, x)$, we obtain two critical computation paths $\gamma_{M_n}(x)$ and $\gamma_{N_n}(x)$ associated with $M_n$ and $N_n$, respectively.
By Lemma \ref{upper-bound-length}, the crossing sequence at any boundary of $\gamma_{M_n}(x)$ (resp., $\gamma_{N_n}(x)$) has length at most $2|Q_n|$ (resp., $2|Q'_n|$).

We define $D$ to be the collection of all pairs $(\alpha,\alpha')$ such that $\alpha$ (resp., $\alpha'$) is a crossing sequence of length at most $2|Q_n|$ (resp., $2|Q'_n|$) of $\gamma_{M_n}(x)$ (resp., $\gamma_{N_n}(x)$) at the same boundary. Note that $|D|\leq |Q_n|^{2|Q_n|}\cdot |Q'_n|^{2|Q'_n|}\leq  2^{2p(n)^2+2q(n)^2}= 2^{r(n)}$. Thus, if $|x|\geq 2^{r(n)}+1$, then there exist a distinct pair $(s_1,s_2)$ and another pair $(\alpha,\alpha')$ for which $\gamma_{M_n}(x)$ and $\gamma_{N_n}(x)$ have a crossing sequence pair $(\alpha,\alpha')$ in $D$ at the both boundaries $s_1$ and $s_2$.
Therefore, given any input $x$ with $|x|\geq 2^{r(n)}+1$, there exists another input $y$ of length at most $2^{r(n)}$ for which $M_n(x)=M_n(y)$ and $N_n(x)=N_n(y)$. From this, we obtain $M_n(y)\neq N_n(y)$. Since $|y|<|x|$, this contradicts the minimality of $x$.
\end{proof}

Finally, we modify $M_n$ into $M'_n$ in order to solve $(L_n^{(+)},L_n^{(-)})$. By the choice of $f$, $\{n\in\nat\mid 2^{r(n)}\geq f(n)\}$ is a finite set, and thus $A$ is also a finite set. Let $c$ denote the largest number in $A$.
Let $n$ be any number in $\nat$. If $n\in A$, then we take a 1dfa that exactly simulates $N_n$ with $2^{O(|Q'_n|)}$ inner states and we set $M'_n$ to be this 1dfa. It is important to remark that, since $\oned\subseteq 2\DD$, $M'_n$ is a 2nfa satisfying the condition imposed for $2\DD$. Otherwise, we define $M'_n$ to be exactly $M_n$. Since $L(M_n)=L(N_n)$ for all $n>c$, the state complexity of $M'_n$ is upper-bounded by a certain constant, independent of $n$.
Therefore, $M'_n$ correctly solves $(L_n^{(+)},L_n^{(-)})$, as requested.
\end{proofof}

\section{A Short Discussion and Open Problems}\label{sec:discussion}

The \emph{theory of nonuniform polynomial state complexity} dates back to the late 1970s \cite{BL77,SS78} and it seems to have been left unattended  until the late 2000s in the field of automata and computational complexity. This intriguing theory was rediscovered in 2009 \cite{Kap09} and it has recently gained its pace with a series of studies \cite{Gef12,Kap12,Kap14,KP15, Yam19a,Yam19d,Yam21,Yam22,Yam23b}.

In this work, we have attempted to continue the investigation of the computational complexity of families of various types of finite automata of polynomial state complexity.
In particular, we have concentrated on nonuniform families of polynomial-size nondeterministic finite automata with unambiguity/fewness conditions imposed on their computation graphs and accepting computation paths. Inspired by  \cite{PTV12}, we have studied in this work six complexity classes of  families of promise problems solvable by those specific finite automaton families.
When tape heads of underlying finite automata are limited to move in only one direction, we have shown mutual class separations, namely, most of those complexity classes are distinct from each other. On the contrary, when the tape heads are allowed to move in both directions, four of the six complexity classes have been shown to collapse.
All those results have been illustratively summarized in Figure \ref{fig:class-hierarchy}.

As for problems left untold in this work, we wish to list four relevant topics for the interested reader.

\renewcommand{\labelitemi}{$\circ$}
\begin{enumerate}
  \setlength{\topsep}{-2mm}%
  \setlength{\itemsep}{1mm}%
  \setlength{\parskip}{0cm}%

\item It is important to point out that Figure \ref{fig:class-hierarchy} is not yet complete. We therefore need to complete this figure by filling its missing parts. For instance, is it true that $\oneu\nsubseteq \onereachfewu$ or even $\oneu\nsubseteq \onereachfew$? Does $\onereachfew\nsubseteq \onefewu$ hold?

\item We have shown that $\twon/\poly$ coincides with $\twou/\poly$, but this result does not seem to extend to the collapse of $\twon/\poly$ down to $\tworeachfew/\poly$ (or $\tworeachfewu/\poly$ or even $\tworeachu/\poly$). Does such a collapse actually occur?

\item The choice of ceiling (e.g., polynomial and exponential) may affect the computational strengths of nonuniform state complexity classes. When we replace ``polynomial ceilings'' by ``exponential ceilings'', for instance, is it true that $\twon/\sexp\subseteq \twou$? Is it also true that  $\tworeachfew/\sexp\subseteq \tworeachfewu$?

\item Recently, as a natural extension of finite automata, nonuniform families of pushdown automata were studied in \cite{Yam21}.
    We write $\twodpd$ for the collection of all families of promise problems solvable by nonuniform families of 2-way pushdown automata having polynomial stack-state complexity. For more precise definition, refer to \cite{Yam21}. It is unknown that $\twon\subseteq \twodpd$. For other low complexity classes, such as $\tworeachfewu$ and $\tworeachfew$, is it true that  $\tworeachfewu\subseteq \twodpd$ or  even $\tworeachfew\subseteq \twodpd$?

\item When a tape head of a one-way finite automaton is further allowed to stay still at any moment (that is, makes $\varepsilon$-moves), we empathetically call an associated machine a \emph{1.5-way finite automaton}. It is shown in \cite{Yam22} that $\oned=\mathrm{1.5D}$ but $\onebq\neq \mathrm{1.5BQ}$, where $\onebq$ and $\mathrm{1.5BQ}$ are defined by bounded-error 1-way and 1.5-way quantum finite automata having polynomially many inner states. It seems interesting to ask types of finite automata make the 1.5-way head move different from the 1-way head move.
\end{enumerate}


\let\oldbibliography\thebibliography
\renewcommand{\thebibliography}[1]{%
  \oldbibliography{#1}%
  \setlength{\itemsep}{-2pt}%
}
\bibliographystyle{alpha}

\end{document}